\documentclass[10pt,prd,aps,eqsecnum,superscriptaddress,nofootinbib,notitlepage,longbibliography,final]{revtex4-2}
\usepackage[T1]{fontenc}
\usepackage[utf8]{inputenc}
\usepackage{lmodern}
\usepackage{amsmath,amssymb,amsthm,mathtools}
\usepackage{microtype}
\usepackage{comment}
\usepackage[final,pagebackref]{hyperref}
\hypersetup{hidelinks,pdftitle={Quantum Gibbs states are quasi-locally Markovian},pdfauthor={Tai Hsuan Yang}}
\renewcommand*{\backrefalt}[4]{%
 \ifcase #1\relax
 \or\space(Cited on page~#2)%
 \else\space(Cited on pages~#2)%
 \fi}
\backrefparscanfalse
\let\OriginalBibitemShut\BibitemShut
\renewcommand{\BibitemShut}[1]{\begingroup\let\par\relax\backrefprint\endgroup\OriginalBibitemShut{#1}}
\allowdisplaybreaks[1]
\newtheorem{theorem}{Theorem}[section]

\newtheorem{lemma}[theorem]{Lemma}
\newtheorem{corollary}[theorem]{Corollary}
\theoremstyle{definition}

\theoremstyle{remark}

\DeclareMathOperator{\Tr}{Tr}
\DeclareMathOperator{\supp}{supp}
\DeclareMathOperator{\diam}{diam}
\DeclareMathOperator{\Ran}{Ran}
\DeclareMathOperator{\ad}{ad}
\newcommand{\one}{\mathbf 1}
\newcommand{\R}{\mathbb R}
\newcommand{\Z}{\mathbb Z}

\begin{document}
\title{Improved estimate of local Markovianity for quantum Gibbs states}
\author{Tai Hsuan Yang}
\affiliation{Department of Physics and Institute for Condensed Matter Theory, University of Illinois Urbana-Champaign}
\date{September 28, 2026}
\begin{abstract}
We establish improved locally Markovian properties of quantum Gibbs states at every fixed positive temperature.
For finite-range interactions of bounded interaction degree, the conditional mutual information (CMI) decays exponentially with separation, with a prefactor exponential in the boundary size.
This improves the exponential volume dependence of the local Markov bounds established by Chen and Rouz\'e \cite{Chen2025}.
For exponentially and stretched-exponentially decaying interactions, we obtain stretched-exponential CMI decay with a boundary-dependent prefactor.
For power-law interactions, we prove inverse-logarithmic or algebraic decay with polynomial region-size dependence.
Our proof is static, in contrast to the dynamical approach of Chen and Rouz\'e.
We compare the Gibbs state with a product state obtained by removing boundary interactions, reducing the CMI bound to a relative-entropy loss that we control using locality estimates.
\end{abstract}
\maketitle

\noindent\textbf{AI usage statement.} The results presented in this paper were found by GPT-6 Astra Pro.  The author verifies and organizes the proof and takes full responsibility for the correctness of the paper. 
Chat history can be found \href{https://chatgpt.com/share/6abaa570-f428-83e9-95cb-9c0ec2a2661d}{here}.

\tableofcontents

\section{Introduction}\label{sec:gm:introduction}

Local interactions constrain thermal correlations most directly through conditional independence. 
In a classical Gibbs distribution, conditioning on a region $B$ that separates $A$ from $C$ makes the variables in $A$ and $C$ independent. 
The Hammersley--Clifford theorem expresses this connection between positive Markov random fields and Gibbs distributions with clique-local interactions \cite{Hammersley1971MarkovFO}. 
It is distinct from decay of ordinary correlations: a classical Gibbs distribution retains its Markov property even at a thermal phase transition.  
The quantum analogue asks how much conditional dependence can survive across a wide shielding region when the interaction terms need not commute.

Let $\Lambda=A\sqcup B\sqcup C$ be a finite set of sites partitioned into three regions, and write $|X|$ for the number of sites in a region $X$.  For a quantum state $\rho_{ABC}$ on this partition, conditional dependence is measured by the conditional mutual information (CMI),
\begin{equation}\label{eq:gm:intro-cmi}
 I(A:C\mid B)_\rho
 =S(\rho_{AB})+S(\rho_{BC})-S(\rho_B)-S(\rho_{ABC}),
 \qquad S(\omega)=-\Tr(\omega\log\omega),
\end{equation}

where $\rho_X=\Tr_{\Lambda\setminus X}\rho$ denotes the reduced state on $X$ and logarithms are natural.  Vanishing CMI is equivalent to exact recovery of $A$ from $B$: there is a completely positive trace-preserving map $\mathcal R_{B\to AB}$ with $\rho_{ABC}=(\mathcal R_{B\to AB}\otimes\operatorname{id}_C)(\rho_{BC})$ \cite{HaydenJozsaPetzWinter2004}.  The Fawzi--Renner theorem makes this interpretation quantitative, bounding the recovery error in terms of the CMI \cite{FawziRenner_2015}. 
Gibbs states of commuting local Hamiltonians have the exact Markov property whenever $B$ shields $A$ from $C$ in the interaction graph \cite{BrownPoulin2012}.  Noncommuting Hamiltonians generally require an approximate statement: for $\rho_{\Lambda,\beta}=e^{-\beta H_\Lambda}/\Tr e^{-\beta H_\Lambda}$, does the CMI decay as the separation $r=d(A,C)$ increases?  Here $H_\Lambda$ is the Hamiltonian on $\Lambda$, $\beta>0$ is the inverse temperature, and $d(A,C)$ is the minimum distance between sites in $A$ and $C$. 
We write $n=|\Lambda|$ for the total number of sites and $D$ for the spatial dimension.  The positive constants $C_\beta,c_\beta$ may depend on temperature and fixed interaction parameters, but not on the regions or total volume.

\begin{table}[tp!]
\centering
\setlength{\tabcolsep}{4pt}
\renewcommand{\arraystretch}{1.05}
\footnotesize
\begin{tabular}{|p{2.2cm}|p{4.0cm}|p{5.2cm}|p{4.1cm}|}
\hline
\raggedright \textbf{Reference} & \raggedright \textbf{State and regime} & \raggedright \textbf{Upper bound on $I(A:C\mid B)$} & \raggedright \textbf{Comments} \tabularnewline
\hline\hline
\raggedright Brown--Poulin \cite{BrownPoulin2012} & \raggedright Commuting Gibbs states; all temperatures & \raggedright $0$ & \raggedright  \tabularnewline
\hline
\raggedright Kato--Brand\~ao \cite{KatoBrandao2019} & \raggedright 1D finite-range Gibbs states; all temperatures & \raggedright $C_\beta e^{-c_\beta\sqrt r}$ & \raggedright Restricted to $D=1$. \tabularnewline
\hline
\raggedright Kuwahara \cite{Kuwahara2025} & \raggedright Finite-range Gibbs states; all temperatures; $D>1$ & \raggedright $C_\beta e^{C_\beta(|A|+|C|)}e^{-c_\beta r/\log(2+r)}$ & \raggedright Exponential dependence on both outer-region sizes. \tabularnewline
\cline{2-4}
\raggedright & 1D finite-range Gibbs states; all temperatures & \raggedright $C_\beta e^{-c_\beta r}$ & \raggedright Restricted to $D=1$. \tabularnewline
\hline
\raggedright Kato--Kuwahara \cite[Thms.~1+2]{KatoKuwahara2025} & \raggedright Finite-range Gibbs states; rapid mixing of the quantum Gibbs samplers & \raggedright $\operatorname{poly}(n)e^{-c_\beta r^{1/(D+3)}}$ & \raggedright Proof through Belief-propagation channel \tabularnewline
\cline{1-4}
\raggedright Kato--Kuwahara \cite[Thms.~1+3]{KatoKuwahara2025} & \raggedright Finite-range Gibbs states; uniform clustering for all subset Hamiltonians & \raggedright $\operatorname{poly}(n)e^{-c_\beta r^{1/D}}$ & \raggedright Condition under uniform-clustering. \tabularnewline
\hline
\raggedright Scalet et al.\ \cite[Lemma~1]{Scalet2025} & \raggedright Gibbs states with exponentially decaying effective interactions & \raggedright $C_\beta n e^{-c_\beta r}$ & \raggedright Conditional on effective-interaction locality. \tabularnewline
\hline
\raggedright Bakshi et al.\ \cite[Thm.~2.3]{BakshiLiuMoitraTang2025} & \raggedright High temperature finite-range Gibbs states & \raggedright $C_\beta|A||C|e^{-c_\beta r}$ & \raggedright Quantum Dobrushin condition. \tabularnewline
\hline
\raggedright Chen--Rouz\'e \cite{Chen2025} & \raggedright Finite-range Gibbs states; all temperatures & \raggedright $C_\beta|A||C| e^{C_\beta\min\{|A|,|C|\}-c_\beta r}$ & \raggedright Time-averaged Gibbs sampling; First all-temperature result in all dimensions. \tabularnewline
\hline

\raggedright Rosa-Ruiz et al.\ \cite[Thm.~3.3]{rosaruiz2026staticfeaturesmixingshort} & \raggedright Fixed points of primitive, k-local, frustration-free rapid-mixing Lindbladians with $F$ decay & \raggedright $p(|A|,|C|)\,r^{\nu D/2}
 F(r)^{\lambda/[2(\lambda+v)]}$
\par $\displaystyle\sup_{\rho}\bigl\|e^{t\mathcal L_\Lambda^*}(\rho)-\sigma_\Lambda\bigr\|_1 \displaystyle\le C|\Lambda|^\nu e^{-\lambda t}$ & \raggedright General fixed points of the Lindbladian. \tabularnewline
\cline{1-4}
\raggedright Rosa-Ruiz et al.\ \cite[Cor.~3.4]{rosaruiz2026staticfeaturesmixingshort} & \raggedright Same Lindbladian assumptions; $F_\alpha(r)=(1+r)^{-\alpha}$ & \raggedright $p(|A|,|C|)(1+r)^{-\eta}$,
\newline $\eta=\dfrac{\lambda\alpha}{2(\lambda+v)}-\dfrac{\nu D}{2}>0$ & \raggedright Requires $\alpha>\nu D(\lambda+v)/\lambda$. \tabularnewline
\cline{1-4}
\raggedright Rosa-Ruiz et al.\ \cite[Thm.~4.6]{rosaruiz2026staticfeaturesmixingshort} & \raggedright $k$-local power-law Gibbs states; $\alpha>D$; all temperatures & \raggedright $\widetilde O\!\left(|C|e^{C_\beta|A|}r^{-\eta_\beta}\right)$,
\newline $\eta_\beta=\lambda'_\beta(\alpha-D)/2$ & \raggedright Extends local Markovianity to power-law interactions. \tabularnewline
\hline\hline
\raggedright This work, Theorem~\ref{thm:gm:local} & \raggedright Finite-range Gibbs states; all temperatures & \raggedright $C_\beta e^{C_\beta|\partial_{\mathrm e}A|-c_\beta r}$ & \raggedright Exponential boundary dependence. \tabularnewline
\hline
\raggedright This work, Theorem~\ref{thm:gm:quasilocal} & \raggedright Aggregate decay $e^{-\mu r^\theta}$; $0<\theta\le1$; $D\ge2$; unrestricted body size; all temperatures & \raggedright $\begin{gathered}C_\beta\exp\!\bigl[C_\beta(1+|\partial_{\mathrm e}A|)^{\theta/D}\\{}-c_\beta(1+r)^{\theta/(D+1-\theta)}\bigr]\end{gathered}$ & \raggedright Unrestricted body size; exponential interactions included at $\theta=1$. \tabularnewline
\hline

\raggedright This work, Theorem~\ref{thm:gm:powerlaw} & \raggedright Power-law Gibbs states; unrestricted body size; $D<\zeta\le2D$; all temperatures & \raggedright $C_\beta(1+|A|)^{b_\zeta+1}[\log(e+r)]^{-b_\zeta}$ & \raggedright No $|C|$ or total-volume dependence. \tabularnewline
\hline
\raggedright This work, Theorem~\ref{thm:gm:powerlaw} & \raggedright Power-law Gibbs states; unrestricted body size; $\zeta>2D$; all temperatures & \raggedright $C_\beta(1+|A|)^{b_\zeta+1}(1+r)^{-\nu_\zeta}$ & \raggedright $\nu_\zeta$ in \eqref{eq:gm:powerlaw-exponents}; no logarithmic correction. \tabularnewline
\hline
\end{tabular}
\caption{Comparison of CMI bounds for full partitions $\Lambda=A\sqcup B\sqcup C$.  Here $r=d(A,C)$, $n=|\Lambda|$, $D$ is the spatial dimension, and $|\partial_{\mathrm e}A|$ counts boundary edges. 
Bounds are stated at fixed $\beta>0$ and fixed interaction, mixing, and clustering parameters; constants are independent of region sizes, with all size dependence displayed explicitly.  
Polynomial distance factors are absorbed into exponential rates, and $\widetilde O$ suppresses logarithmic distance factors.  
The function $p$ is polynomial, $F$ is a spatial decay profile, and $k$-local means that each Hamiltonian or generator term acts on at most $k$ sites. 
In the rapid-mixing estimate, $\mathcal L_\Lambda^*$ generates evolution of states, $\sigma_\Lambda$ is its stationary state, and $\|\cdot\|_1$ is the trace norm; $\nu$ is the polynomial volume exponent, $\lambda$ the temporal decay rate, and $v$ the Lieb--Robinson velocity.  
The exponent $\alpha$ specifies power-law decay in the cited results, and $\lambda'_\beta>0$ is the temperature-dependent rate entering their Gibbs-state bound. 
In our stretched-exponential row, $\mu>0$ is the decay rate and $\theta$ the stretching exponent.  
Pairwise and local bounds have exponential dependence on both outer-region sizes and on one outer-region size, respectively; global bounds have at most polynomial size dependence.  
the rate exponents are given in Theorem~\ref{thm:gm:powerlaw}.}
\label{tab:gm:markov-properties}
\label{tab:gm:markov-properties-longrange}
\end{table}

The first general results for noncommuting systems were obtained in one dimension.  Araki's analysis established locality and clustering properties of finite-range quantum spin chains at positive temperature \cite{Araki1969}.  Building on one-dimensional clustering and quantum belief propagation, Kato and Brand\~ao proved a CMI bound decaying exponentially in the square root of the buffer length at every fixed positive temperature \cite{KatoBrandao2019}.  Kuwahara subsequently obtained exponential decay in the buffer length in one dimension, and an arbitrary-temperature pairwise Markov bound in higher dimensions \cite{Kuwahara2025}.  The latter uses locality estimates for effective Hamiltonians of reduced states, with an exponential dependence on the sizes of both outer regions.

In a major advance, Chen and Rouz\'e established local Markovianity for finite-range Hamiltonians with bounded interaction degree at every finite inverse temperature $\beta>0$ and in arbitrary spatial dimension \cite{Chen2025}.
Their proof constructs a recovery map by time-averaging the evolution generated by the quasi-local Kubo--Martin--Schwinger (KMS) detailed-balanced Gibbs sampler \cite{ChenKastoryanoBrandaoGilyen2023,ChenKastoryanoGilyen2023,ChenEtAl2025ThermalSimulation,DingLiLin2025}.
At the same time, Kato and Kuwahara used quasi-local quantum belief-propagation channels to prove CMI bounds of the form $\operatorname{poly}(n)e^{-c_\beta r^{1/(D+3)}}$ under rapid mixing of the KMS detailed-balanced Gibbs sampler and $\operatorname{poly}(n)e^{-c_\beta r^{1/D}}$ under uniform clustering \cite{KatoKuwahara2025}.
With the assumption of high temperature, Bakshi, Liu, Moitra, and Tang later obtain the exponential bound $C_\beta|A||C|e^{-c_\beta r}$ for local Hamiltonians with bounded interaction and growth parameters, 
using a quantum Dobrushin condition and rapid mixing \cite[Theorem~2.3]{BakshiLiuMoitraTang2025}. 
More recently, Rosa-Ruiz, Scandi, Capel, and Alhambra proved CMI decay for shielded partitions of fixed points of primitive, frustration-free, rapidly mixing $k$-local Lindbladians \cite{rosaruiz2026staticfeaturesmixingshort}.  
Their bounds have polynomial region-size dependence and give exponential decay for short-range interactions and algebraic decay for sufficiently fast power-law interactions.  
Separately, they extended the local Markov property of Chen and Rouz\'e to Gibbs states of $k$-local Hamiltonians with power-law decay exponent $\alpha>D$ at every finite inverse temperature $\beta>0$, 
obtaining algebraic CMI decay up to logarithmic factors, with a prefactor exponential in one outer-region size and linear in the other.

Our main quantitative improvement is to replace the exponential \emph{volume} dependence by exponential \emph{boundary} dependence.  Write $H_\Lambda=\sum_{X\subseteq\Lambda}\Phi(X)$, where $\Phi(X)$ is supported on $X$, and let $A^c=\Lambda\setminus A$.  With $\|\cdot\|_\infty$ denoting the operator norm, for finite-range interactions we prove
\begin{equation}\label{eq:gm:intro-bound}
 I(A:C\mid B)_{\rho_{\Lambda,\beta}}
 \le C_\beta e^{C_\beta g_{A\mid A^c}-c_\beta r},
 \qquad
 g_{A\mid A^c}=\sum_{\substack{X\subseteq\Lambda\\X\cap A\ne\varnothing,\ X\cap A^c\ne\varnothing}}
 \|\Phi(X)\|_\infty,
\end{equation}

Here $g_{A\mid A^c}$ is the total interaction strength across the cut.  On the lattice $\Z^D$, the cut strength $g_{A\mid A^c}$ is bounded by a constant times $|\partial_{\mathrm e}A|$, the number of nearest-neighbor edges crossing the boundary.  Thus the exponential factor improves from $e^{O_\beta(|A|)}$ to $e^{O_\beta(|\partial_{\mathrm e}A|)}$. 
We also extend the argument to quasi-local interactions. 
Under the aggregate decay assumption $\sum_{X\ni x,y}\|\Phi(X)\|_\infty\le J_{\mu,\theta}e^{-\mu d(x,y)^\theta}$ of Theorem~\ref{thm:gm:quasilocal}, 
where $\mu>0$ and $J_{\mu,\theta}<\infty$ are fixed and $0<\theta\le1$, we obtain stretched-exponential CMI decay in $D\ge2$ 
without a $k$-local assumption on the interaction terms.  
For power-law interactions satisfying
$\sum_{X\ni x,y}\|\Phi(X)\|_\infty\le J_\zeta(1+d(x,y))^{-\zeta}$ with fixed $J_\zeta<\infty$ and decay exponent $\zeta>D$,
Theorem~\ref{thm:gm:powerlaw} gives inverse-logarithmic CMI decay for $D<\zeta\le2D$ (with a squared iterated-logarithm correction at $\zeta=2D$) and algebraic decay for $\zeta>2D$.
This differs in scope from the $k$-local power-law Gibbs result in Ref.~\cite{rosaruiz2026staticfeaturesmixingshort}: our support sizes are unrestricted, while the algebraic theorem uses the stronger sufficient threshold $\zeta>2D$.
Table~\ref{tab:gm:markov-properties} compares these results with previous CMI bounds, listing their assumptions, distance decay, and region-size dependence.  It highlights the boundary-dependent prefactors in our finite-range and stretched-exponential bounds and the polynomial region-size dependence in our power-law bounds.

Our approach is static: we compare the Gibbs state with the reference state obtained by removing interactions across the cut $A\mid BC$. 
The reference state is a product across this cut, and the CMI is bounded by the decrease in relative entropy between the two states when $C$ is traced out. 
A variational bound based on a resolvent formula reduces the estimate of this loss to spatial localization, controlled by Lieb--Robinson bounds \cite{NachtergaeleSimsYoung2019,ChenLucasYin2023}, and spectral estimates for the relative modular operator.  
For finite-range interactions, short-imaginary-time estimates control the prefactor in terms of the interaction strength across the cut. 
For the stretched-exponential and algebraic power-law bounds, we first truncate long interactions within a shielding neighborhood around $A$ and control the resulting change in CMI by Gibbs interpolation.  
We then combine spatial localization near $A$ with spectral bounds derived from the decay of interaction strengths with support size to derive the final CMI estimates. 

Section~\ref{sec:gm:setting} states the assumptions and main results, and Section~\ref{sec:gm:modular} develops the comparison with the cut Gibbs state and the variational bound.  Sections~\ref{sec:gm:local-proof}, \ref{sec:gm:quasilocal-proof}, and~\ref{sec:gm:powerlaw-proof} prove the finite-range, stretched-exponential, and power-law bounds, respectively.  Section~\ref{sec:gm:discussion} discusses their scope and open questions.  Appendices~\ref{app:gm:gronwall} and~\ref{app:gm:gap-integral} provide auxiliary proofs, and Appendix~\ref{app:gm:notation} collects the notation.

\section{Setting and main results}\label{sec:gm:setting}

Let $\Gamma$ be a set of sites with distance $d(x,y)$, and assign a finite-dimensional Hilbert space $\mathcal H_x$ to each site $x$.  On $\Z^D$, we use the nearest-neighbor graph metric $d(x,y)=|x-y|_1$, where $|z|_1:=\sum_{j=1}^D|z_j|$ for $z=(z_1,\ldots,z_D)\in\Z^D$.  For a finite volume $\Lambda\Subset\Gamma$, all operators are understood to act on $\mathcal H_\Lambda=\bigotimes_{x\in\Lambda}\mathcal H_x$, with identities on omitted tensor factors.  We write $XY=X\cup Y$ for disjoint regions, $X^c=\Lambda\setminus X$, and $d(X,Y)=\min_{x\in X,y\in Y}d(x,y)$ for nonempty finite sets.  We use the ordinary, unnormalized trace, natural logarithms, the operator norm $\|O\|_\infty$, the trace norm $\|O\|_1=\Tr\sqrt{O^\dagger O}$, and the Hilbert--Schmidt inner product $\langle X,Y\rangle=\Tr(X^\dagger Y)$, with norm $\|O\|_2^2=\langle O,O\rangle$.  Entropy and CMI are defined in \eqref{eq:gm:intro-cmi}; the relative entropy is
\begin{equation}\label{eq:gm:entropy-definitions}
 D(\rho\|\sigma)=\Tr\rho(\log\rho-\log\sigma).
\end{equation}

We consider the Gibbs state
\begin{equation}\label{eq:gm:gibbs}
 H_\Lambda=\sum_{X\subseteq\Lambda}\Phi(X),
 \qquad
 \rho_{\Lambda,\beta}=\frac{e^{-\beta H_\Lambda}}{Z_\Lambda},
 \qquad Z_\Lambda=\Tr e^{-\beta H_\Lambda},\qquad \beta>0,
\end{equation}

where $\Phi(X)=\Phi(X)^\dagger$ is supported on $X$.  An empty-support scalar term can be omitted.  All Gibbs states and their marginals are faithful in finite volume.

Throughout the main theorems,
\begin{equation}\label{eq:gm:full-partition}
 \Lambda=A\sqcup B\sqcup C,
 \qquad r=d(A,C),
\end{equation}

is a \emph{full} partition, with $A$ and $C$ nonempty.  Thus $B$ contains every site of $\Lambda$ outside $A\cup C$.  This hypothesis is essential to the cut-Hamiltonian argument; a marginal with an arbitrary unobserved exterior is not silently identified with a local Gibbs state.

For each interaction class, define the cut interaction and its total strength by
\begin{equation}\label{eq:gm:cut-strength}
 V=\sum_{\substack{X\subseteq\Lambda\\ X\cap A\ne\varnothing,\ X\cap A^c\ne\varnothing}}\Phi(X),
 \qquad
 g_{A\mid A^c}=\sum_{\substack{X\subseteq\Lambda\\ X\cap A\ne\varnothing,\ X\cap A^c\ne\varnothing}}\|\Phi(X)\|_\infty.
\end{equation}

In particular, $\|V\|_\infty\le g_{A\mid A^c}$.  When interactions are indexed by terms $h_\gamma$ instead of supports $X$, the same definitions are taken term by term.
The quantitative statements below imply this property and, through the recovery theorem, the existence of recovery maps supported on the shielding neighborhood of $A$.

\begin{theorem}[Finite-range interactions]\label{thm:gm:local}
Suppose
\begin{equation}\label{eq:gm:finite-range-assumptions}
 H_\Lambda=\sum_\gamma h_\gamma,
 \qquad \|h_\gamma\|_\infty\le J,
 \qquad \diam(\supp h_\gamma)\le R_0.
\end{equation}
Assume that each term overlaps at most $\mathfrak d$ terms, counting itself.  The bounds $J>0$, $R_0$, and $\mathfrak d$ are uniform in $\Lambda$.  For every fixed $\beta>0$, there exist $C_\beta,c_\beta>0$, depending only on these interaction parameters and $\beta$, such that
\begin{equation}\label{eq:gm:local-main}
 I(A:C\mid B)_{\rho_{\Lambda,\beta}}
 \le
 C_\beta\exp\!\left(C_\beta g_{A\mid A^c}-c_\beta r\right).
\end{equation}
On $\Z^D$ with its nearest-neighbor graph metric, let $\partial_{\mathrm e}A$ denote the set of unoriented nearest-neighbor edges with exactly one endpoint in $A$.  Then $g_{A\mid A^c}\le C |\partial_{\mathrm e}A|$, with $C$ depending only on the local interaction parameters, and therefore
\begin{equation}\label{eq:gm:local-boundary}
 I(A:C\mid B)_{\rho_{\Lambda,\beta}}
 \le C_\beta e^{C_\beta |\partial_{\mathrm e}A|-c_\beta r}.
\end{equation}

\end{theorem}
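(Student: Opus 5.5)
We compare $\rho=\rho_{\Lambda,\beta}$ with the cut Gibbs state $\sigma=e^{-\beta(H_\Lambda-V)}/\Tr e^{-\beta(H_\Lambda-V)}$, where $V$ is the cut interaction \eqref{eq:gm:cut-strength}. Since $H_\Lambda-V=H_A+H_{A^c}$ with commuting pieces, $\sigma=\sigma_A\otimes\sigma_{BC}$ is a product across $A\mid BC$. The first step is the identity
\begin{equation*}
 I(A:C\mid B)_\rho = D(\rho_{ABC}\|\sigma_A\otimes\rho_{BC})-D(\rho_{AB}\|\sigma_A\otimes\rho_B),
\end{equation*}
which holds for any state $\sigma_A$ on $A$ and follows by expanding both relative entropies. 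Because of this identity the CMI is exactly the loss of relative entropy to the reference $\sigma_A\otimes\rho_{(\cdot)}$ when $C$ is traced out. Both terms are finite, and by data processing the right-hand side is nonnegative.

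The second step is a variational bound for this loss. I would write $\log\rho_{ABC}-\log(\sigma_A\otimes\rho_{BC})$ using the resolvent/integral representation $\log X-\log Y=\int_0^\infty[(Y+s)^{-1}-(X+s)^{-1}]\,ds$. Alternatively, I would use the Gibbs variational principle: $D(\rho\|\sigma_A\otimes\rho_{BC})=\sup_{W}\{\Tr\rho W-\log\Tr\exp(\log(\sigma_A\otimes\rho_{BC})+W)\}$. The optimal $W$ is the "effective cut operator" $W^\star=\log\rho-\log(\sigma_A\otimes\rho_{BC})$. If $W^\star$ can be approximated, up to error $\varepsilon$ in operator norm, by an operator $W_\ell$ supported on the $\ell$-neighborhood $A_\ell$ of $A$ with $\ell<r$, then $W_\ell$ is admissible in the $AB$ problem after partial trace. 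Concretely, testing the $AB$ variational problem with $W_\ell$ and using that $\log\Tr\exp$ is $1$-Lipschitz in operator norm shows that the loss is at most $2\varepsilon$. The remainder terms arising because $\rho_{BC}$ versus $\rho_B$ enter the exponent are handled by the fact that $\rho_{ABC}$ and $\sigma_A\otimes\rho_{BC}$ differ only through $V$. This gives $I\le C\varepsilon(\ell)$ with $\ell\approx r$.

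The third step, and the main obstacle, is localization of $W^\star$ with a prefactor depending only on $g_{A\mid A^c}$ rather than $|A|$. Using Duhamel/imaginary-time expansion, $\rho\propto \sigma^{1/2}\,\mathcal T e^{-\int_0^{\beta}V(\tau)d\tau}\,\sigma^{1/2}$-type representations express the difference through $V$ evolved in imaginary time. I would split imaginary time into short steps $\beta/N$. On each step the Lieb--Robinson bound \cite{NachtergaeleSimsYoung2019,ChenLucasYin2023} for imaginary time $\tau\le\tau_0(R_0,\mathfrak d,J)$ localizes $e^{\tau H}Ve^{-\tau H}$ within distance $\ell$ with error $\|V\|e^{-c\ell}\le g_{A\mid A^c}e^{-c\ell}$ and norm growth at most $e^{C\tau}$ per term of $V$. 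Composing $N$ steps multiplies the norms by factors like $e^{C\beta g_{A\mid A^c}}$, since only the terms of $V$ (boundary terms) are expanded. The remaining difficulty is the $\log$ (modular) part: $\sigma_A\otimes\rho_{BC}$ is not a Gibbs state of a local Hamiltonian on $BC$. I expect to control it through spectral bounds on the relative modular operator $\Delta_{\rho,\sigma}$, namely $e^{-2\beta\|V\|}\sigma\le\rho\cdot\text{const}$-type operator inequalities (Golden--Thompson/Araki), giving $\|\Delta^{\pm1}\|\le e^{2\beta g_{A\mid A^c}}$. This bound is then fed into the resolvent integral, where the large-$s$ and small-$s$ regions are cut off at cost $e^{Cg_{A\mid A^c}}$ and the intermediate region inherits the $e^{-c\ell}$ localization. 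Choosing $\ell=r-1$ yields \eqref{eq:gm:local-main}.

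Finally, for \eqref{eq:gm:local-boundary} I would count terms. Each $h_\gamma$ meeting both $A$ and $A^c$ has diameter at most $R_0$, so it contains a site of $A$ and a site of $A^c$ at distance at most $R_0$. A lattice path between them crosses some boundary edge $e\in\partial_{\mathrm e}A$ within distance $R_0$ of the support. The number of terms meeting a ball of radius $R_0$ is bounded by a constant depending on $\mathfrak d$, $R_0$ and $D$, so $g_{A\mid A^c}\le J\cdot C(R_0,\mathfrak d,D)\,|\partial_{\mathrm e}A|$.
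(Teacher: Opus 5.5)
Your Step~1 identity is correct (it is even exact, whereas the paper has only an inequality), and your boundary count at the end matches the paper's. The proof breaks in Step~2, and the cause is your choice of reference $\sigma_A\otimes\rho_{BC}$. Testing the $AB$ variational problem with $W_\ell$ and using $\|W^\star-W_\ell\|_\infty\le\varepsilon$ gives only
\[
 I(A:C\mid B)_\rho\le\varepsilon+\log\Tr_{AB}\exp\bigl(\log(\sigma_A\otimes\rho_B)+W_\ell\bigr).
\]
Lipschitz continuity controls only the three-party normalization, $\log\Tr\exp(\log(\sigma_A\otimes\rho_{BC})+W_\ell)\in[-\varepsilon,\varepsilon]$. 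Replacing $\rho_{BC}$ by $\rho_B$ inside the exponential goes the wrong way. Lieb's concavity of $X\mapsto\Tr\exp(H+\log X)$, applied to the Haar twirl over $C$ (which leaves $H=\log\sigma_A+W_\ell$ invariant), gives $\Tr_{AB}\exp(\log(\sigma_A\otimes\rho_B)+W_\ell)\ge\Tr\exp(\log(\sigma_A\otimes\rho_{BC})+W_\ell)$. This is just data processing resurfacing, and the upper bound you need is not available. Your justification that $\rho_{ABC}$ and $\sigma_A\otimes\rho_{BC}$ ``differ only through $V$'' is false. $\log\rho_{BC}+\beta H_{BC}$ is the effective Hamiltonian of a marginal, and controlling its locality near $A$ is exactly the hard problem (Kuwahara's route, which costs a factor exponential in $|A|+|C|$). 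Your Step~3 then expands $\rho$ around $\gamma_A\otimes\gamma_{BC}$, not around the reference of Steps~1--2, so $\log\rho_{BC}$ is never addressed.

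The tools proposed for Step~3 are also unavailable.
\begin{itemize}
\item \emph{Imaginary time.} The conjugation $e^{\tau H_0}he^{-\tau H_0}$ is controlled by the commutator series $\|\ad_{H_0}^{\,n}(h)\|_\infty\le n!(2\mathfrak dJ)^n\|h\|_\infty$ only for $|\tau|<(2\mathfrak dJ)^{-1}$. Slicing $[0,\beta]$ into short steps does not help, since the slices compose to the same conjugation. Beyond that radius no volume-uniform bound is available in general for $D\ge2$.
\item \emph{Spectral bound.} The claim $\|\Delta^{\pm1}\|\le e^{2\beta g_{A\mid A^c}}$ is false: $\Delta=L_\sigma R_{\rho^{-1}}$ has spectrum $\{\sigma_i/\rho_j\}$, spread exponentially in $|\Lambda|$.
\item \emph{Operator inequality.} Even $\rho\ge e^{-2\beta\|V\|_\infty}\sigma$ does not follow for noncommuting $H_0,V$. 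The exponential is not operator monotone, and bounding $\|\sigma^{1/2}\rho^{-1/2}\|_\infty$ requires imaginary time $\beta/2$.
\end{itemize}

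The paper avoids all of these issues as follows.
\begin{itemize}
\item It gives up the nonnegative term $D(\rho_{BC}\|\gamma_{BC})-D(\rho_B\|\gamma_B)$ in order to keep the local Gibbs reference $\sigma=\gamma_A\otimes\gamma_{BC}$. Then the cocycle $\sigma^{it}\rho^{-it}=e^{-i\beta tH_0}e^{i\beta t(H_0+V)}$ is a \emph{real-time} interaction-picture evolution, which Lieb--Robinson bounds localize near $A$ for all $t$.
\item A Fourier identity writes $(\Delta+s)^{-1}\rho^{1/2}=B_s\rho^{1/2}$ through this cocycle. The trial vector $\mathsf E_{AB}(B_s)\rho^{1/2}$ in the exact variational formula for $j(s)$ then gives $\int_a^b j\le\mathfrak q_{AB}^2(a^{-1}+\log(b/a))$.
\item The unbounded spectrum of $\Delta$ enters only at small $s$, through the negative moment $\Tr(\rho^{1+\alpha}\sigma^{-\alpha})\le e^{3\alpha\beta g_{A\mid A^c}}$ for $\alpha\le\alpha_0$. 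This needs imaginary time only up to $\alpha\beta\le(4\mathfrak dJ)^{-1}$.
\item Optimizing the cutoffs trades a factor $2\alpha/(1+\alpha)$ in the decay rate for a prefactor $e^{O(g_{A\mid A^c})}$.
\end{itemize}
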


\begin{theorem}[Stretched-exponential interactions of unrestricted body size]\label{thm:gm:quasilocal}
Let $\Gamma=\Z^D$, $D\ge2$, with the $\ell^1$ metric and uniformly bounded on-site dimensions.  Suppose $\mu>0$, $0<\theta\le1$, and
\begin{equation}\label{eq:gm:F-assumption}
 J_{\mu,\theta}:=\sup_{x,y\in\Z^D}
 e^{\mu d(x,y)^\theta}
 \sum_{\substack{X\Subset\Z^D\\x,y\in X}}\|\Phi(X)\|_\infty
 <\infty.
\end{equation}
The supremum includes $x=y$.  Thus this is the aggregate pair bound
$\sum_{X\ni x,y}\|\Phi(X)\|_\infty\le J_{\mu,\theta}e^{-\mu d(x,y)^\theta}$.
No uniform bound on $|X|$ or on the number of terms meeting a site is assumed.
For every fixed $\beta>0$, there are constants $C_\beta,c_\beta>0$ depending only on
\[
 \beta,\ D,\ \mu,\ \theta,\ J_{\mu,\theta},
\]
and, for $C_\beta$, the fixed bound on the on-site dimensions, such that
\begin{equation}\label{eq:gm:quasilocal-main}
 \begin{split}
 I(A:C\mid B)_{\rho_{\Lambda,\beta}}
 \le
 C_\beta\exp\!\bigl[
 C_\beta(1+|\partial_{\mathrm e}A|)^{\theta/D}
 -c_\beta(1+r)^{\theta/(D+1-\theta)}
 \bigr].
 \end{split}
\end{equation}
The boundary $\partial_{\mathrm e}A$ is taken in the ambient lattice $\Z^D$, including edges leaving $\Lambda$.  The estimate is uniform over all finite $\Lambda$ and all full partitions.
\end{theorem}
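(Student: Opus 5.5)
We prove Theorem~\ref{thm:gm:quasilocal} in two stages: first reduce to a finite-range-like problem by truncation, then run the static cut comparison of Section~\ref{sec:gm:modular}. The finite-range constants from Theorem~\ref{thm:gm:local} cannot be used directly, since there is no bound on $|X|$ or on the interaction degree.

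\emph{Truncation.} Fix a truncation length $\ell\le r/3$ and a shielding neighborhood $N=\{x:d(x,A)\le r/2\}$. Let $\Phi^{(\ell)}$ be $\Phi$ with every term $\Phi(X)$ removed if $X\cap N\neq\varnothing$ and $\diam X>\ell$. Terms far from $A$ are kept, so $C$ is untouched except near $N$. The CMI changes by at most the Fannes/Alicki--Fannes-type cost of a Gibbs interpolation. Writing $\rho_s$ for the Gibbs state of $H^{(\ell)}+s(H-H^{(\ell)})$, we bound $\frac{d}{ds}I(A:C\mid B)_{\rho_s}$ by $\beta\|H-H^{(\ell)}\|_\infty$ times logarithmic dimension factors. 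That route introduces $|N|$ dependence, so we refine it: we compare relative entropies directly via $D(\rho\|\rho^{(\ell)})\le 2\beta\|H-H^{(\ell)}\|_\infty$. The pair bound \eqref{eq:gm:F-assumption}, summed over $x\in N$ and $y$ at distance $>\ell$, gives $\|H-H^{(\ell)}\|_\infty\lesssim J_{\mu,\theta}|N|\,e^{-\mu\ell^\theta/2}$. Since $|N|\lesssim (|A|+|\partial_{\mathrm e}A|r)^{\dots}$, we handle the volume by an isoperimetric step on $\Z^D$, $|A|\le |\partial_{\mathrm e}A|^{D/(D-1)}$, which is polynomial in the boundary and absorbed into the stretched exponential. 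This is where $D\ge2$ enters.

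\emph{Cut comparison.} For the truncated interaction we invoke the variational/resolvent bound of Section~\ref{sec:gm:modular}. It bounds the CMI by the relative-entropy loss $D(\rho_{ABC}\|\sigma_{ABC})-D(\rho_{AB}\|\sigma_{AB})$, where $\sigma$ is the product Gibbs state with the cut interaction $V$ removed. The loss is then controlled by two factors. The first is a spectral factor for the relative modular operator, of size $e^{O(\beta\|V_{\mathrm{eff}}\|)}$. The second is a localization factor: a Lieb--Robinson bound for $V$ evolved in imaginary/real time, with the quasi-local (stretched-exponential) LR bound of \cite{ChenLucasYin2023}, giving decay $e^{-c(r/t)^{\theta}}$-type tails up to time $t$.

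The main obstacle is the prefactor. $\|V\|_\infty$ is not bounded by $|\partial_{\mathrm e}A|$, because long-range terms cross the cut from deep inside $A$. Even after truncation, $g_{A\mid A^c}\lesssim |\partial_{\mathrm e}A|\,\ell^{D}$ roughly, counting sites within $\ell$ of the boundary. So we only use the short-imaginary-time estimate on a coarser scale. We split $V$ into the part within distance $L$ of $\partial A$, whose norm is $\lesssim|\partial_{\mathrm e}A|L^{D}$ but which we only need to exponentiate after a further averaging, and a remainder of size $e^{-\mu L^\theta}$. Optimizing, the exponent becomes $C(1+|\partial_{\mathrm e}A|)^{\theta/D}$: choose $L\sim |\partial_{\mathrm e}A|^{1/D}$ so that the effective cut strength scales as $L^{\theta}$, using that a spectral bound of size $e^{\beta\|\cdot\|}$ need only be applied to a coarse-grained block Hamiltonian. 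Finally we balance the three errors. These are the truncation error $e^{-\mu\ell^\theta}$, the Lieb--Robinson tail at time $t$ with light cone $\ell t$, i.e.\ $e^{-c\,(r/(\ell t))}$-type, and the spectral cost $e^{ct}$. Choosing $t\sim\ell^{\theta}$ and $\ell^{\theta}\cdot\ell\sim r\,\ell^{\theta-\dots}$ gives $\ell\sim r^{1/(D+1-\theta)}$ after including the $\ell^{D}$ volume factor, which produces the stated rate $(1+r)^{\theta/(D+1-\theta)}$. This exponent bookkeeping, and making the boundary prefactor come out as $(1+|\partial_{\mathrm e}A|)^{\theta/D}$ rather than linear, is the step I expect to be the hardest and most delicate.
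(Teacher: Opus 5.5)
Your overall architecture matches the paper's: truncate long terms near $A$, pass to the cut Gibbs pair, localize with Lieb--Robinson inside the neighbourhood, and absorb $|A|$ by isoperimetry using $D\ge2$. However, the step that actually produces both exponents in \eqref{eq:gm:quasilocal-main} is missing, and the mechanism you put in its place does not work. You control the relative-entropy loss by a ``spectral factor of size $e^{O(\beta\|V_{\mathrm{eff}}\|)}$'' times a localization factor. That is the negative-moment route: Corollary~\ref{cor:gm:moment} fed by a short-imaginary-time bound as in Lemma~\ref{lem:gm:imaginary}. With unrestricted body size this input is not available. Each commutator with $H_0$ costs a factor of the current support size. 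The best available estimate trades support weight $e^{s|X|^{\theta/D}}$ for that factor (Lemma~\ref{lem:gm:commutator-weight}), and it bounds $n$-fold commutators only by $(Cn^{D/\theta})^n$. Since $D/\theta\ge2$, the resulting bound on $\sum_n u^n\|\ad_{H_0}^n(V)\|_\infty/n!$ diverges for every $u>0$, so you get no volume-uniform control of $\Tr(\rho^{1+\alpha}\sigma^{-\alpha})$.

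The missing idea is to use the additive cutoff inequality \eqref{eq:gm:cutoff-inequality} instead. Its small-resolvent term needs only a spectral \emph{tail} of $K=\log\Delta$. Since $K(O\rho^{1/2})=\bigl(\beta[O,H_0]+\beta OV+(\log Z-\log Z_0)O\bigr)\rho^{1/2}$, iterating in the decomposition norm $\|\cdot\|_{s,\mathrm{dec}}$ with $n$ equal weight decrements gives $\|K^n\rho^{1/2}\|_2\le[C_\beta(n^{D/\theta}+|\partial_{\mathrm e}A|)]^n$. Chebyshev with $n\asymp\ell^{\theta/D}$ then gives the tail $C_\beta\exp[C_\beta(1+|\partial_{\mathrm e}A|)^{\theta/D}-c_\beta\ell^{\theta/D}]$ (Lemmas~\ref{lem:gm:gevrey} and~\ref{lem:gm:small-quasilocal}). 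This single estimate is the source of the $\theta/D$ in the rate and also of the boundary prefactor. The choice $n\asymp\ell^{\theta/D}$ only works once $\ell\gtrsim|\partial_{\mathrm e}A|$. Below that threshold one bounds the tail by $1$, and that loss is absorbed because $\ell^{\theta/D}\lesssim(1+|\partial_{\mathrm e}A|)^{\theta/D}$ there.

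Your diagnosis of the prefactor is also mistaken, because $\|V\|_\infty$ \emph{is} $O(|\partial_{\mathrm e}A|)$. Summing the pair bound over $x\in A$, $y\notin A$ and using $|\{x\in A:x+z\notin A\}|\le|\partial_{\mathrm e}A|\,|z|_1$ gives $g_{A\mid A^c}\le C|\partial_{\mathrm e}A|$, even with weights $e^{a_{\mathrm{supp}}|X|^{\theta/D}}$ (Lemma~\ref{lem:gm:support-moments}). So the coarse-graining with $L\sim|\partial_{\mathrm e}A|^{1/D}$ addresses a nonexistent problem and supplies no $\theta/D$ power. An $e^{O(\beta g_{A\mid A^c})}$ factor would give only $e^{C|\partial_{\mathrm e}A|}$.

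The exponent bookkeeping is likewise fitted to the answer. Take your three errors $e^{-\ell^\theta}$, $e^{-r/(\ell t)}$, $e^{ct}$ with your choice $t\sim\ell^\theta$. Equating $r/(\ell t)$ with $\ell^\theta$ gives $\ell\sim r^{1/(1+2\theta)}$, and the ``$\ell^D$ volume factor'' has no source. In the correct argument the leakage is not the bottleneck. Lieb--Robinson inside the neighbourhood with weight $w_L=(\mu/2)L^{\theta-1}$ gives $\mathfrak q_{AB}^{(L)}\lesssim e^{-cr/L^{1-\theta}}$ up to polynomial factors, which permits the resolvent cutoff $\ell\asymp r/L^{1-\theta}$. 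The balance is then truncation $e^{-cL^\theta}$ against the spectral tail $e^{-c(r/L^{1-\theta})^{\theta/D}}$, giving $L\asymp r^{1/(D+1-\theta)}$.

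Two smaller points. First, the Gibbs-interpolation derivative does not bring in $|N|$: it is bounded by $6\beta(1+\log d_A)\|H-H^{(\ell)}\|_\infty$ (Lemma~\ref{lem:gm:gibbs-cmi-lipschitz}). Your relative-entropy comparison, by contrast, still needs a conditional-entropy continuity step (Pinsker plus Alicki--Fannes--Winter) to reach the CMI. Second, with $N$ of radius $r/2$, the truncation length must stay below $r/4$, not $r/3$. Otherwise the cut terms and the terms crossing $\partial N$ are not separated by order $r$.
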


\begin{theorem}[Power-law interactions of unrestricted body size]\label{thm:gm:powerlaw}
Let $\Gamma=\Z^D$ with the nearest-neighbor graph metric and uniformly bounded on-site dimensions.  Assume
\begin{equation}\label{eq:gm:powerlaw-assumption}
 J_\zeta:=\sup_{x,y\in\Z^D}
 (1+d(x,y))^\zeta
 \sum_{\substack{X\Subset\Z^D\\x,y\in X}}\|\Phi(X)\|_\infty
 <\infty,
 \qquad \zeta>D,
\end{equation}
with the supremum including $x=y$.  No bound on $|X|$ or on the number of terms meeting a site is required.  Fix $\beta>0$ and a full partition $\Lambda=A\sqcup B\sqcup C$ with nonempty $A,C$.  Write $r=d(A,C)$ and
\begin{equation}\label{eq:gm:powerlaw-exponents}
 b_\zeta=1+\frac{2\zeta}{D},
 \qquad
 \nu_\zeta=\frac{b_\zeta(\zeta-2D)}{\zeta-D+b_\zeta}.
\end{equation}
For $D<\zeta<2D$,
\begin{equation}\label{eq:gm:powerlaw-log-main}
 I(A:C\mid B)_{\rho_{\Lambda,\beta}}
 \le C_\beta(1+|A|)^{b_\zeta+1}[\log(e+r)]^{-b_\zeta}.
\end{equation}
At the endpoint $\zeta=2D$,
\begin{equation}\label{eq:gm:powerlaw-log-endpoint}
 I(A:C\mid B)_{\rho_{\Lambda,\beta}}
 \le C_\beta(1+|A|)^6[\log(e+r)]^{-5}
 [\log(e+\log(e+r))]^2.
\end{equation}
For $\zeta>2D$,
\begin{equation}\label{eq:gm:powerlaw-main}
 I(A:C\mid B)_{\rho_{\Lambda,\beta}}
 \le C_\beta(1+|A|)^{b_\zeta+1}(1+r)^{-\nu_\zeta}.
\end{equation}
The constants depend only on $\beta,D,J_\zeta,\zeta$ and the fixed on-site dimension bound, not on $\Lambda,A,B,C$ or $r$.  All estimates hold uniformly over all full partitions of finite volumes.
\end{theorem}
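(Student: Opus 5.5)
The plan is to follow the static route of Section~\ref{sec:gm:modular}. The new difficulty is to keep every size dependence polynomial in $|A|$, because power-law tails defeat both exponential-in-$g$ bookkeeping and plain Lieb--Robinson light cones.

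\textbf{Step 1 (truncation).} Fix a radius $\ell\le r/2$ and let $N_\ell(A)=\{x:d(x,A)\le \ell\}$. Delete from $H_\Lambda$ every term $\Phi(X)$ with $\diam X>\ell$ that meets $N_\ell(A)$, obtaining $\tilde H$. Using \eqref{eq:gm:powerlaw-assumption} and summing over pairs $x\in N_\ell(A)$, $y$ with $d(x,y)>\ell$, the deleted norm satisfies
\[
\|H_\Lambda-\tilde H\|_\infty\lesssim J_\zeta\,(|A|+1)\,\ell^{D}\,\ell^{D-\zeta}.
\]
This estimate is only useful for $\zeta>2D$; for $D<\zeta\le 2D$ one must instead delete in a dyadic fashion and pay logarithms, which is where the logarithmic rates will come from. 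I then control $|I(A:C\mid B)_\rho-I(A:C\mid B)_{\tilde\rho}|$ by Gibbs interpolation, $\rho_s\propto e^{-\beta(\tilde H+s(H-\tilde H))}$. The derivative of each entropy term is a Kubo--Mori covariance of $\beta(H-\tilde H)$ against a logarithm of a marginal. This gives a bound of order $\beta\|H-\tilde H\|_\infty$ times a quantity polynomial in $|A|$. To avoid a $\log\dim$ factor from $C$, I bound each covariance through continuity of the CMI in the $A$-variable only (Alicki--Fannes type), so that only $|A|$ enters.

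\textbf{Step 2 (cut comparison).} Let $\sigma=\sigma_A\otimes\sigma_{BC}$ be the Gibbs state of $\tilde H-\tilde V$, where $\tilde V$ is the truncated cut interaction. By the chain rule and monotonicity under $\Tr_C$,
\[
I(A:C\mid B)_{\tilde\rho}\le D(\tilde\rho_{ABC}\|\sigma)-D(\tilde\rho_{AB}\|\sigma_{AB}),
\]
which is the relative-entropy loss from Section~\ref{sec:gm:modular}. The variational/resolvent bound established there controls this loss by the error in approximating the quasi-local operator $\log\tilde\rho-\log\sigma$ (a modular-frequency integral of $e^{i t\tilde H}\tilde V e^{-it\tilde H}$-type terms) by an operator supported in $AB$.

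\textbf{Step 3 (localization and spectral cutoff).} Split frequencies at a cutoff $\Omega$. For $|\omega|\le\Omega$, apply the power-law Lieb--Robinson bound (Chen--Lucas--Yin / Nachtergaele--Sims--Young) to the truncated Hamiltonian up to times $t\sim\Omega^{-1}\log$. Since $\tilde H$ has range at most $\ell$ near $A$, propagation from the cut to $C$ costs algebraic decay in $r/\ell$. For $|\omega|>\Omega$, use the decay of interaction strengths with support size to get a tail of order $\|\tilde V\|_\infty\,\Omega^{-b}$. Here $\|\tilde V\|_\infty\le g_{A\mid A^c}\lesssim |A|$, because $\zeta>D$ makes $\sum_y(1+d)^{-\zeta}$ finite. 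Combining the three errors gives
\[
I\lesssim |A|^{b_\zeta+1}\bigl[\Omega^{-b_\zeta}+(\ell/r)^{\cdot}\text{-type LR term}+\ell^{2D-\zeta}\bigr].
\]
Optimizing $\ell$ and $\Omega$ against $r$ yields $(1+r)^{-\nu_\zeta}$ with $\nu_\zeta=b_\zeta(\zeta-2D)/(\zeta-D+b_\zeta)$. For $\zeta\le2D$ the truncation term does not decay in $\ell$; choosing $\Omega\sim\log r$ and using only the spectral tail gives $[\log(e+r)]^{-b_\zeta}$. At $\zeta=2D$ the dyadic sum diverges logarithmically, producing the $[\log\log]^2$ correction.

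\textbf{Main obstacle.} The hardest part is the high-frequency/spectral estimate in Step~3, i.e.\ showing that the modular tail of $\tilde V$ decays like $\Omega^{-b_\zeta}$ with only polynomial $|A|$ dependence and no use of bounded body size. I would first try a commutator expansion, bounding $\|[\tilde H,\cdot]^k\tilde V\|$ through the pair-sum assumption \eqref{eq:gm:powerlaw-assumption}. The aim is to get $k\sim 2\zeta/D$ usable derivatives, which would explain the value $b_\zeta=1+2\zeta/D$, combined with a Markov-type inequality on the spectral measure.
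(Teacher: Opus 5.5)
\textbf{The algebraic regime $\zeta>2D$ has a gap.} Your skeleton matches the paper's: cut Gibbs comparison, resolvent localization of the entropy loss, a modular spectral tail from iterated commutators, and truncation plus Gibbs interpolation for $\zeta>2D$. What is missing is the mechanism that produces \emph{algebraic} decay.

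In the cutoff inequality the leakage enters as $\mathfrak q_{AB}^2(e^{\ell}+2\ell)$, where $\ell$ is the log-modular cutoff. For the spectral term $\ell^{-b_\zeta}$ to be polynomially small, $\ell$ must be a power of $r$, so $\mathfrak q_{AB}$ must be \emph{superpolynomially} small. A Lieb--Robinson term with algebraic decay in $r/\ell$, which is what you propose, only allows $\ell\sim\log r$ and hence inverse-logarithmic decay. Your heuristic that frequencies up to $\Omega$ need only times $t\sim\Omega^{-1}\log$ also misreads the structure. $\mathfrak q_{AB}$ integrates the leakage of the cocycle $\sigma^{it}\rho^{-it}$ over all modular times against $1/|\sinh\pi t|$, and the cutoff instead multiplies $\mathfrak q_{AB}^2$ by $e^{\ell}$.

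Your truncation cannot give the needed localization either. You delete long terms only within distance $\ell$ of $A$, using the same $\ell$ as the range cutoff. Long terms starting just outside $N_\ell(A)$ then couple directly to $C$ after one hop. So the leakage stays polynomial in $r$, no better than the untruncated bound $\mathfrak q_{AB}\lesssim|A|(1+r)^{-\kappa_\beta}$.

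The paper separates the two scales:
\begin{itemize}
\item It deletes terms of diameter $>L$ that touch the $\lfloor r/3\rfloor$-neighborhood of $A$, with $L\asymp r^{\alpha}\ll r$. The CMI cost is $(1+|A|)^2(1+r)^D(1+L)^{D-\zeta}$.
\item Inside that neighborhood every retained term has range $\le L$. Hence the row sum weighted by $e^{w_Ld(x,y)}$, with $w_L=\log(1+L)/L$, is bounded uniformly in $L$, using $e^{w_Lu}\le1+u$ on $[0,L]$ and $\zeta>D+1$.
\item A Lieb--Robinson bound for the inner Hamiltonian then gives $\mathfrak q^{(L)}_{AB}\lesssim(1+|A|)^2(1+r)^De^{-cr\log(1+L)/L}$. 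Only interface commutators with the boundary-crossing terms are estimated, and nothing is assumed about the exterior.
\end{itemize}
This permits $\ell\asymp r^{1-\alpha}\log r$. The exponent $\nu_\zeta$ is the balance of $r^{-b_\zeta(1-\alpha)}$ against $r^{D-\alpha(\zeta-D)}$. It reflects the truncation cost of an $r$-wide neighborhood, not your $|A|\ell^{2D-\zeta}$, so your claimed optimization to $\nu_\zeta$ is asserted rather than derived.

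\textbf{The low-$\zeta$ and endpoint reasoning also needs correcting.} For $D<\zeta\le2D$ nothing is truncated, and there is no dyadic deletion. The untruncated cut pair already has polynomial leakage by the power-law Lieb--Robinson bound, via the convolution property of $(1+r)^{-\zeta}$. This forces $\ell=\kappa_\beta\log(1+r)$, and the spectral tail then gives $[\log(e+r)]^{-b_\zeta}$. The factor $[\log(e+\log(e+r))]^2$ at $\zeta=2D$ comes from the spectral tail at integer $\zeta/D$, not from a divergent dyadic sum.

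This ties to the derivative count you left open. A commutator with $H_0$ of a term supported on $X$ costs a factor $|X|$, i.e.\ one power of the support-size tail $(1+N)^{-\zeta/D}$. So you must track the tails $T_O(N)$ of a local decomposition, not operator norms of $\ad^k_{\tilde H}(\tilde V)$. You must also iterate the full map $\mathcal D(O)=\beta[O,H_0]+\beta OV+(\log Z-\log Z_0)O$, which satisfies $K(O\rho^{1/2})=\mathcal D(O)\rho^{1/2}$. It can be iterated only $m=\lceil\zeta/D\rceil$ times, finishing with a fractional increment of $K^m\rho^{1/2}$ of order $s=\zeta/D-m+1$. Squaring gives the tail $R^{-2(1+\zeta/D)}$, and the first-moment integration turns this into $\ell^{-b_\zeta}$. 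When $s=1$ the increment is only $|t|\log(e/|t|)$, which produces the $[\log(e+\ell)]^2$ factor.

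Finally, an Alicki--Fannes-type continuity bound brings a binary-entropy term and hence an extra $\log r$ in the truncation error. The paper avoids this by writing $\frac{d}{ds}I$ as a BKM covariance of the perturbation with the Markov log-defect $K_{\mathrm M}=R_{A\mid BC}-R_{A\mid B}$. Its state-weighted second moment is $O((1+\log d_A)^2)$.
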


Unless indicated otherwise, constants denoted by $C,c,C_\beta,c_\beta$ may be enlarged or decreased from one occurrence to the next.  
They never depend on the finite volume or the chosen regions; all region-size dependence is displayed explicitly. 

\section{The entropy-loss energy and its variational upper bound}\label{sec:gm:modular}

For faithful states $\rho,\sigma$ on a tensor product $R\otimes E$, put
\begin{equation}\label{eq:gm:loss}
 \delta_R(\rho,\sigma)
 =D(\rho\|\sigma)-D(\rho_R\|\sigma_R).
\end{equation}

\subsection{Comparison with a cut Gibbs Hamiltonian}

\begin{lemma}[Cut-Hamiltonian comparison]\label{lem:gm:cut}
Let $H_\Lambda=H_0+V$, where $V$ is given by \eqref{eq:gm:cut-strength}.  For $R=A,BC$, define
\begin{equation}\label{eq:gm:regional-gibbs-states}
 H_R=\sum_{X\subseteq R}\Phi(X),\qquad
 \gamma_R=\frac{e^{-\beta H_R}}{\Tr e^{-\beta H_R}},
 \qquad \gamma_B=\Tr_C\gamma_{BC}.
\end{equation}
Set $Z_0=\Tr e^{-\beta H_0}$ and $Z=\Tr e^{-\beta H_\Lambda}=Z_\Lambda$.  Then
\begin{equation}\label{eq:gm:cut-reference-pair}
 H_0=H_A+H_{BC},\qquad
 \sigma=\frac{e^{-\beta H_0}}{Z_0}=\gamma_A\otimes\gamma_{BC},\qquad
 \rho=\frac{e^{-\beta(H_0+V)}}Z
\end{equation}
satisfy
\begin{equation}\label{eq:gm:comparison}
 \delta_{AB}(\rho,\sigma)
 =I(A:C\mid B)_\rho
   +D(\rho_{BC}\|\gamma_{BC})-D(\rho_B\|\gamma_B), \quad \text{and} \quad
 I(A:C\mid B)_\rho\le\delta_{AB}(\rho,\sigma).
\end{equation}

Moreover,
\begin{equation}\label{eq:gm:free-energy}
 |\log Z-\log Z_0|\le\beta\|V\|_\infty\le\beta g_{A\mid A^c}.
\end{equation}

\end{lemma}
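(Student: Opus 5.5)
We prove the three parts of the lemma in order: the structural identities for $H_0$ and $\sigma$, the exact relative-entropy identity, and the free-energy bound.

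\emph{Structure of $\sigma$.} Every $X\subseteq\Lambda$ either lies inside $A$, lies inside $A^c=BC$, or meets both. The last class is exactly the sum defining $V$ in \eqref{eq:gm:cut-strength}. Hence $H_0=H_\Lambda-V=H_A+H_{BC}$. Since $H_A$ and $H_{BC}$ act on different tensor factors, they commute, so
\[
e^{-\beta H_0}=e^{-\beta H_A}\otimes e^{-\beta H_{BC}} .
\]
It follows that $Z_0=\Tr e^{-\beta H_A}\,\Tr e^{-\beta H_{BC}}$ and $\sigma=\gamma_A\otimes\gamma_{BC}$. Consequently $\sigma_{AB}=\gamma_A\otimes\gamma_B$, $\sigma_{BC}=\gamma_{BC}$ and $\sigma_B=\gamma_B$, with all these states faithful.

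\emph{The identity.} Expand both relative entropies in $\delta_{AB}(\rho,\sigma)$ using the product structure of $\sigma$:
\[
D(\rho\|\sigma)=-S(\rho)-\Tr\rho_A\log\gamma_A-\Tr\rho_{BC}\log\gamma_{BC},
\]
\[
D(\rho_{AB}\|\sigma_{AB})=-S(\rho_{AB})-\Tr\rho_A\log\gamma_A-\Tr\rho_B\log\gamma_B .
\]
Subtracting, the $\log\gamma_A$ terms cancel, and
\[
\delta_{AB}=S(\rho_{AB})-S(\rho)-\Tr\rho_{BC}\log\gamma_{BC}+\Tr\rho_B\log\gamma_B .
\]
Now add and subtract $S(\rho_{BC})-S(\rho_B)$. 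This regroups into
\[
\bigl[S(\rho_{AB})+S(\rho_{BC})-S(\rho_B)-S(\rho)\bigr]+\bigl[-S(\rho_{BC})-\Tr\rho_{BC}\log\gamma_{BC}\bigr]-\bigl[-S(\rho_B)-\Tr\rho_B\log\gamma_B\bigr],
\]
which is $I(A:C\mid B)_\rho+D(\rho_{BC}\|\gamma_{BC})-D(\rho_B\|\gamma_B)$. Faithfulness of all states makes every logarithm well defined. The inequality $I\le\delta_{AB}$ then follows from data processing for relative entropy under the partial trace $\Tr_C$, which gives $D(\rho_B\|\gamma_B)\le D(\rho_{BC}\|\gamma_{BC})$.

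\emph{Free energy.} Since $-\|V\|_\infty\one\le V\le\|V\|_\infty\one$, we have
\[
H_0-\|V\|_\infty\one\le H_\Lambda\le H_0+\|V\|_\infty\one .
\]
The map $H\mapsto\Tr e^{-\beta H}$ is monotone decreasing in the operator order; this follows from monotonicity of eigenvalues under the Weyl/min-max principle. Therefore
\[
e^{-\beta\|V\|_\infty}Z_0\le Z\le e^{\beta\|V\|_\infty}Z_0 .
\]
Taking logarithms gives $|\log Z-\log Z_0|\le\beta\|V\|_\infty$. Finally, the triangle inequality gives $\|V\|_\infty\le g_{A\mid A^c}$.

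The only genuinely nontrivial input is the data-processing step, and it is standard. The main care needed is bookkeeping: the $\log\gamma_A$ cross terms must cancel exactly, and this uses that $\log(\gamma_A\otimes\gamma_B)=\log\gamma_A\otimes\one+\one\otimes\log\gamma_B$.
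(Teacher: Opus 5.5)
Your proof is correct. The relative-entropy identity and the inequality $I(A:C\mid B)_\rho\le\delta_{AB}(\rho,\sigma)$ (expanding with the product form of $\sigma$, regrouping, then data processing under $\Tr_C$) follow the same argument as the paper.

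The only difference is the free-energy bound. The paper sets $Z(s)=\Tr e^{-\beta(H_0+sV)}$, uses $\frac{d}{ds}\log Z(s)=-\beta\Tr(\rho_sV)$, bounds this by $\beta\|V\|_\infty$, and integrates over $s\in[0,1]$. You instead use $H_0-\|V\|_\infty\one\le H_\Lambda\le H_0+\|V\|_\infty\one$ together with monotonicity of $\Tr e^{-\beta H}$ in the operator order. Both arguments give the same constant. Yours avoids differentiation. The paper's version also records the exact interpolation formula, whose Gibbs path $\rho_s$ reappears in its later Gibbs-interpolation CMI lemma.
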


\begin{proof}
Expand the two relative entropies, use the tensor-product form of $\sigma$, and collect the four entropies appearing in CMI.  This gives the first line of \eqref{eq:gm:comparison}.  Data processing under $\Tr_C$ makes its last difference nonnegative.

For the free-energy estimate, let $Z(s)=\Tr e^{-\beta(H_0+sV)}$.  Differentiation under the trace gives
\begin{equation}\label{eq:gm:partition-function-derivative}
 \frac{d}{ds}\log Z(s)=-\beta\Tr(\rho_s V),
 \qquad \rho_s=Z(s)^{-1}e^{-\beta(H_0+sV)}.
\end{equation}
Its absolute value is at most $\beta\|V\|_\infty$.  Integration from $0$ to $1$ proves \eqref{eq:gm:free-energy}.
\end{proof}

\subsection{A variational resolvent formula}
For the moment, $\rho,\sigma$ are arbitrary faithful states on $R\otimes E$.  Let
\begin{equation}\label{eq:gm:relative-modular}
 \Delta=L_\sigma R_{\rho^{-1}},\qquad
 \Delta(Y)=\sigma Y\rho^{-1},\qquad
 K=\log\Delta.
\end{equation}

Here $L_X$ and $R_X$ denote left and right multiplication on Hilbert--Schmidt space, respectively: $L_X(Y)=XY$ and $R_X(Y)=YX$.  The operators $L_\sigma$ and $R_{\rho^{-1}}$ are self-adjoint and commute, so $\Delta$ is self-adjoint.  Faithfulness of $\rho$ and $\sigma$ also gives, for every nonzero $X$,
\begin{equation}\label{eq:gm:modular-positivity}
 \langle X,\Delta(X)\rangle
 =\Tr(X^\dagger\sigma X\rho^{-1})
 =\|\sigma^{1/2}X\rho^{-1/2}\|_2^2>0.
\end{equation}
Thus $\Delta$ is strictly positive on Hilbert--Schmidt space, and $K=\log\Delta$ is well defined and self-adjoint.  If $K=\sum_j k_j P_j$ is its spectral decomposition, where $P_j$ projects onto the eigenspace with eigenvalue $k_j$, then $f(K)=\sum_j f(k_j)P_j$.  In particular, $\one_{\{|K|\ge\ell\}}=\sum_{j:\,|k_j|\ge\ell}P_j$ projects onto the eigenspaces whose eigenvalues have absolute value at least $\ell$.

\begin{lemma}[Resolvent representation of the entropy loss]\label{lem:gm:gap-integral}
Let $\rho,\sigma$ be faithful states on the finite-dimensional tensor product $R\otimes E$, and set $\Delta_R=L_{\sigma_R}R_{\rho_R^{-1}}$.  With $\Delta$ as in \eqref{eq:gm:relative-modular},
\begin{equation}\label{eq:gm:gap-integral}
 \begin{split}
 \delta_R(\rho,\sigma)&=\int_0^\infty j(s)\,ds,\\
 j(s)&=\langle\rho^{1/2},(\Delta+s)^{-1}\rho^{1/2}\rangle
 -\langle\rho_R^{1/2},(\Delta_R+s)^{-1}\rho_R^{1/2}\rangle.
 \end{split}
\end{equation}
\end{lemma}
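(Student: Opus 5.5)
We need $D(\rho\|\sigma)=\int_0^\infty \langle\rho^{1/2},(\Delta+s)^{-1}\rho^{1/2}\rangle - (1+s)^{-1}\,ds$, and the same identity for the marginals. Subtracting the two, the $(1+s)^{-1}$ counterterms cancel and \eqref{eq:gm:gap-integral} follows. So the plan is to prove one scalar identity for an arbitrary pair of faithful states, apply it twice, and take the difference.

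\textbf{Step 1: rewrite the relative entropy through $K$.} With $\xi=\rho^{1/2}$, the definition \eqref{eq:gm:relative-modular} gives $\Delta^t(\xi)=\sigma^t\rho^{1/2-t}$. Differentiating at $t=0$ yields
\[
K(\xi)=\log\sigma\,\rho^{1/2}-\rho^{1/2}\log\rho .
\]
Hence
\[
\langle\xi,K\xi\rangle=\Tr\rho\log\sigma-\Tr\rho\log\rho=-D(\rho\|\sigma).
\]
We also have $\|\xi\|_2^2=\Tr\rho=1$.

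\textbf{Step 2: spectral integral for $-\log$.} For every $\lambda>0$,
\[
-\log\lambda=\int_0^\infty\Bigl[(\lambda+s)^{-1}-(1+s)^{-1}\Bigr]ds .
\]
The integrand is $O(s^{-2})$ at infinity and bounded near $0$. The right side equals $\lim_{M\to\infty}\bigl[\log\frac{1+s}{\lambda+s}\bigr]_0^M=-\log\lambda$.

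Apply this to the finitely many eigenvalues of the strictly positive operator $\Delta$, which is positive by \eqref{eq:gm:modular-positivity}. Write $\mu_j=\langle\xi,P_j\xi\rangle$ for the spectral weights of $\xi$; they satisfy $\sum_j\mu_j=1$. Then
\[
D(\rho\|\sigma)=-\sum_j\mu_j\log e^{k_j}
=\int_0^\infty\Bigl[\langle\xi,(\Delta+s)^{-1}\xi\rangle-(1+s)^{-1}\Bigr]ds .
\]
Exchanging the finite sum with the integral is immediate, because each summand is absolutely integrable.

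\textbf{Step 3: the marginal and the difference.} The pair $\rho_R,\sigma_R$ is faithful on $R$, so the same identity holds for it with $\Delta_R$ and $\rho_R^{1/2}$. Subtracting the two representations, the counterterms $(1+s)^{-1}$ cancel because both states have unit trace. This leaves exactly $\int_0^\infty j(s)\,ds$. The integral of $j$ converges absolutely, since it is the difference of two absolutely convergent integrals. Indeed, for large $s$,
\[
\langle\xi,(\Delta+s)^{-1}\xi\rangle=s^{-1}-O(s^{-2}),
\]
and the marginal term has the same behaviour.

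\textbf{Main obstacle.} The only delicate point is convergence at $s\to\infty$ when the counterterm is dropped: $j(s)$ by itself must be integrable. This follows from the unit normalization of both $\xi$ and $\rho_R^{1/2}$, which makes the $s^{-1}$ leading terms cancel and leaves $O(s^{-2})$. Near $s=0$ there is no issue, because $\Delta$ and $\Delta_R$ have spectra bounded away from $0$ in finite dimension.
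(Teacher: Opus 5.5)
Your proposal is correct and follows essentially the same route as the paper: write $D(\rho\|\sigma)=-\langle\rho^{1/2},K\rho^{1/2}\rangle$ (and likewise on the marginal), apply $-\log\lambda=\int_0^\infty[(\lambda+s)^{-1}-(1+s)^{-1}]\,ds$ by spectral calculus using unit normalization, and subtract the two absolutely convergent integrals. One small slip: the antiderivative should be $[\log\frac{\lambda+s}{1+s}]_0^M$, since your bracket $[\log\frac{1+s}{\lambda+s}]_0^M$ tends to $+\log\lambda$; the scalar identity you state is nevertheless correct.
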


The lemma can be proved directly using standard resolvent formula for $\log(x)$. The proof is given in Appendix~\ref{app:gm:gap-integral}.

\begin{lemma}[Variational representation of the resolvent gap]\label{lem:gm:variational}
Under the setup of Lemma~\ref{lem:gm:gap-integral}, define the map from the reduced Hilbert--Schmidt space by
\begin{equation}\label{eq:gm:isometry}
 W_R(Y)=\bigl(Y\rho_R^{-1/2}\otimes\one_E\bigr)\rho^{1/2}.
\end{equation}

This map is an isometry and satisfies
\begin{equation}\label{eq:gm:compression}
 W_R^\dagger(Z)=\Tr_E(Z\rho^{1/2})\rho_R^{-1/2},\qquad
 W_R\rho_R^{1/2}=\rho^{1/2},\qquad
 W_R^\dagger\Delta W_R=\Delta_R.
\end{equation}

For every $s>0$, the function $j(s)$ in \eqref{eq:gm:gap-integral} has the exact variational representation
\begin{equation}\label{eq:gm:variational}
 j(s)=\inf_{y\in\Ran W_R}
 \bigl\|(\Delta+s)^{1/2}\bigl(y-(\Delta+s)^{-1}\rho^{1/2}\bigr)\bigr\|_2^2.
\end{equation}

The infimum is attained uniquely at $y=W_R Y$ with
$Y=(\Delta_R+s)^{-1}\rho_R^{1/2}$.  In particular, $j(s)\ge0$, and every trial vector in $\Ran W_R$ gives an \emph{upper} bound on $j(s)$.
\end{lemma}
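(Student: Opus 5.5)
The proof has three parts: verify the algebraic identities for $W_R$ directly, identify the second term of $j(s)$ with a compressed quadratic form, and obtain the variational formula as the usual projection identity for a positive operator. Throughout, put $T=\Delta+s$, which is strictly positive for $s>0$ by \eqref{eq:gm:modular-positivity}, and $\xi=\rho^{1/2}$.

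\textbf{Identities for $W_R$.} For isometry, compute
\[
\langle W_RY,W_RY'\rangle=\Tr\bigl(\rho^{1/2}(\rho_R^{-1/2}Y^\dagger Y'\rho_R^{-1/2}\otimes\one_E)\rho^{1/2}\bigr).
\]
Cycling $\rho^{1/2}$ and taking the partial trace over $E$ gives $\Tr(\rho_R\rho_R^{-1/2}Y^\dagger Y'\rho_R^{-1/2})=\langle Y,Y'\rangle$. The adjoint formula follows from the same partial-trace manipulation, and $W_R\rho_R^{1/2}=\rho^{1/2}$ is immediate. For $W_R^\dagger\Delta W_R$, note
\[
\Delta W_R Y=\sigma(Y\rho_R^{-1/2}\otimes\one)\rho^{-1/2}.
\]
Hence
\[
W_R^\dagger\Delta W_RY=\Tr_E\bigl(\sigma(Y\rho_R^{-1/2}\otimes\one)\bigr)\rho_R^{-1/2}=\sigma_R Y\rho_R^{-1}=\Delta_R Y,
\]
because $\Tr_E(\sigma(M\otimes\one))=\sigma_R M$ for operators $M$ on $R$.

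\textbf{Second term as a compression.} Set $P=W_RW_R^\dagger$, the orthogonal projection onto $\Ran W_R$. Because $W_R$ is an isometry, $W_R^\dagger T W_R=\Delta_R+s$. Then $\langle\rho_R^{1/2},(\Delta_R+s)^{-1}\rho_R^{1/2}\rangle=\langle\xi,W_R(W_R^\dagger TW_R)^{-1}W_R^\dagger\xi\rangle$, where $W_R^\dagger\xi=\rho_R^{1/2}$ because $W_R^\dagger W_R=\one$. Define the $T$-inner product $\langle u,v\rangle_T=\langle u,Tv\rangle$. A short check shows that $Q:=W_R(W_R^\dagger TW_R)^{-1}W_R^\dagger T$ is the $T$-orthogonal projection onto $\Ran W_R$: it is idempotent, has range $\Ran W_R$, and is $T$-self-adjoint.

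\textbf{Variational formula.} Put $x=T^{-1}\xi$. Then $\langle\xi,T^{-1}\xi\rangle=\|x\|_T^2$, and the second term of $j(s)$ equals $\langle\xi,Qx\rangle=\langle x,Qx\rangle_T=\|Qx\|_T^2$. Hence
\[
j(s)=\|x\|_T^2-\|Qx\|_T^2=\|x-Qx\|_T^2=\inf_{y\in\Ran W_R}\|y-x\|_T^2,
\]
by Pythagoras in the $T$-inner product. Since $\|v\|_T^2=\|T^{1/2}v\|_2^2$, this is exactly \eqref{eq:gm:variational}. The infimum is attained uniquely at $y=Qx$, because $T$ is strictly positive and so $\|\cdot\|_T$ is a genuine norm. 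Explicitly,
\[
Qx=W_R(\Delta_R+s)^{-1}W_R^\dagger\xi=W_R(\Delta_R+s)^{-1}\rho_R^{1/2}.
\]
Nonnegativity of $j(s)$ and the upper bound from any trial vector follow at once.

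The only step needing care is the partial-trace computation for $W_R^\dagger\Delta W_R$, where $\sigma$ must be correctly reduced to $\sigma_R$; the remainder is standard Hilbert-space geometry.
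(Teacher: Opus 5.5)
Your proposal is correct and follows essentially the same route as the paper. Both arguments rest on the compression identities $W_R^\dagger W_R=\one$, $W_R^\dagger(\Delta+s)W_R=\Delta_R+s$ and $W_R^\dagger\rho^{1/2}=\rho_R^{1/2}$; the paper then expands the square for $y=W_RY$ and completes it in the reduced variable $Y$, while you present the same least-squares computation as Pythagoras for the $(\Delta+s)$-orthogonal projector $Q$ onto $\Ran W_R$, with uniqueness coming from strict positivity of $\Delta+s$ rather than of $\Delta_R+s$.
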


Equation~\eqref{eq:gm:variational} is the variational form of \cite[Lemma~2.1 and Eqs.~(2.8)--(2.9)]{CV}, with $U=W_R$, $A=\Delta+s\one$, $B=\Delta_R+s\one$, and $v=\rho_R^{1/2}$ on the corresponding Hilbert--Schmidt spaces.  
Their identity gives the value at the minimizer; For completeness, the lemma is proved in Appendix~\ref{app:gm:gap-integral}.

Define the operator conditional expectation $\mathsf E_R$, the modular leakage $\eta_R(t)$, and the integrated modular leakage $\mathfrak q_R$ by
\begin{equation}\label{eq:gm:q-def}
 \boxed{\displaystyle
 \mathsf E_R(O):=\frac{\Tr_E O}{\dim\mathcal H_E}\otimes\one_E,
 \quad
 \eta_R(t):=\|\sigma^{it}\rho^{-it}-\mathsf E_R(\sigma^{it}\rho^{-it})\|_\infty,
 \quad
 \mathfrak q_R:=\frac12\int_\R\frac{\eta_R(t)}{|\sinh(\pi t)|}\,dt.
 }
\end{equation}
This integral is finite.  Indeed, $\sigma^{it}\rho^{-it}$ equals $\one$ at $t=0$ and is differentiable there, and $\eta_R(t)\le2$ for all $t$.

\begin{lemma}[Variational localization of the entropy loss]\label{lem:gm:resolvent}
For every $\ell>0$,
\begin{equation}\label{eq:gm:cutoff-inequality}
 \displaystyle
 \delta_R(\rho,\sigma)
 \le \langle\rho^{1/2},\log(1+e^{-\ell-K})\rho^{1/2}\rangle
       +e^{-\ell}
       +\mathfrak q_R^2(e^\ell+2\ell).
\end{equation}

More generally, for $0<a<1<b$ and $0<\alpha\le1$,
\begin{equation}\label{eq:gm:two-cutoffs}
 \delta_R(\rho,\sigma)
 \le \frac{\Tr(\rho^{1+\alpha}\sigma^{-\alpha})}{\alpha}a^\alpha
       +b^{-1}
       +\mathfrak q_R^2\left(a^{-1}+\log\frac ba\right).
\end{equation}

\end{lemma}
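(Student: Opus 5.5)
The plan is to integrate the variational bound of Lemma~\ref{lem:gm:variational} over $s$ using Lemma~\ref{lem:gm:gap-integral}, with a different trial vector in each of three ranges of $s$. I prove the two-cutoff bound \eqref{eq:gm:two-cutoffs} with $0<a<1<b$, together with the cutoff version needed for \eqref{eq:gm:cutoff-inequality}. The first bound is the case $a=e^{-\ell}$, $b=e^{\ell}$, except that for the small-$s$ piece I keep the exact logarithm instead of the power bound.

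\emph{Small $s\in(0,a)$.} Take the trial vector $y=0$, which lies in $\Ran W_R$. This gives
\[
j(s)\le\langle\rho^{1/2},(\Delta+s)^{-1}\rho^{1/2}\rangle .
\]
Integrate spectrally using $\int_0^a (x+s)^{-1}\,ds=\log(1+a/x)$, with $x=e^{K}$. This yields $\langle\rho^{1/2},\log(1+e^{-\ell-K})\rho^{1/2}\rangle$ when $a=e^{-\ell}$. For general $\alpha\in(0,1]$, use $\log(1+u)\le u^\alpha/\alpha$. The identity $\Delta^{-\alpha}(\rho^{1/2})=\sigma^{-\alpha}\rho^{1/2}\rho^{\alpha}$ then gives
\[
\langle\rho^{1/2},\Delta^{-\alpha}\rho^{1/2}\rangle=\Tr(\rho^{1+\alpha}\sigma^{-\alpha}),
\]
which is the first term of \eqref{eq:gm:two-cutoffs}.

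\emph{Large $s>b$.} Take $y=\rho^{1/2}/s=W_R(\rho_R^{1/2}/s)$. The functional calculus gives
\[
j(s)\le\langle\rho^{1/2},\Delta^2 s^{-2}(\Delta+s)^{-1}\rho^{1/2}\rangle\le s^{-2}\langle\rho^{1/2},\Delta\rho^{1/2}\rangle=s^{-2}\Tr\sigma=s^{-2}.
\]
Integrating over $(b,\infty)$ gives $b^{-1}$.

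\emph{Middle range $s\in[a,b]$.} This is where the leakage enters, and it is the main step. Write $(x+s)^{-1}=s^{-1}g(\log x-\log s)$ with $g(k)=(1+e^k)^{-1}$. The Fourier representation of $g$ has the form
\[
g(k)=\tfrac12+\int_\R \hat g(t)\,e^{ikt}\,dt,\qquad |\hat g(t)|=\frac{1}{2|\sinh\pi t|},
\]
understood as a principal value at $t=0$. Using $\Delta^{it}\rho^{1/2}=\sigma^{it}\rho^{-it}\rho^{1/2}$, this gives
\[
(\Delta+s)^{-1}\rho^{1/2}=\frac{1}{2s}\rho^{1/2}+\frac1s\,\mathrm{p.v.}\!\int\hat g(t)\,s^{-it}\,U_t\,\rho^{1/2}\,dt,\qquad U_t=\sigma^{it}\rho^{-it}.
\]
As trial vector I take the same expression with $U_t$ replaced by $\mathsf E_R(U_t)$. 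Since $\mathsf E_R(U_t)=u_t\otimes\one_E$, we have $\mathsf E_R(U_t)\rho^{1/2}=W_R(u_t\rho_R^{1/2})\in\Ran W_R$. The error vector is then $s^{-1}\int\hat g(t)s^{-it}O_t\rho^{1/2}\,dt$ with $O_t=U_t-\mathsf E_R(U_t)$. For any operator $O$, the two pieces of the weighted norm are bounded separately:
\[
\|(\Delta+s)^{1/2}O\rho^{1/2}\|_2^2=\Tr(\sigma OO^\dagger)+s\Tr(\rho O^\dagger O)\le(1+s)\|O\|_\infty^2 .
\]
Minkowski's inequality for the norm $\|(\Delta+s)^{1/2}\cdot\|_2$ then gives $j(s)\le(1+s)s^{-2}\mathfrak q_R^2$. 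The principal value is harmless here, because $\eta_R(t)=O(|t|)$ near $t=0$. Integrating over $[a,b]$,
\[
\int_a^b\frac{1+s}{s^2}\,ds\le a^{-1}+\log\frac ba ,
\]
which becomes $e^\ell+2\ell$ when $a=e^{-\ell}$ and $b=e^{\ell}$.

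The main obstacle I expect is making the Fourier/principal-value representation rigorous: the constant $\tfrac12$ mode and the $1/t$ singularity of $\hat g$ must be handled so that the constant part is reproduced exactly by a vector in $\Ran W_R$, here $\rho^{1/2}/(2s)$. This is where the choice $\hat g\propto 1/\sinh(\pi t)$ and the differentiability of $U_t$ at $t=0$ are used.
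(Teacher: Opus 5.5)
Your proposal is correct and follows essentially the paper's argument. You use the same three-range split of $\int_0^\infty j(s)\,ds$ and the same small-$s$ bound followed by $\log(1+w)\le w^\alpha/\alpha$. Your middle-range trial vector is identical to the paper's $\mathsf E_R(B_s)\rho^{1/2}$; the paper writes it with $s^{-it}(\sigma^{it}\rho^{-it}-\one)$ and the constant $1/(1+s)$ to avoid your principal value. The only genuine difference is the large-$s$ tail: your trial vector $\rho^{1/2}/s$ gives $j(s)\le s^{-2}$ directly, whereas the paper uses Jensen to bound the reduced resolvent below by $1/(1+s)$. Both routes yield the $b^{-1}$ term.
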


\begin{proof}
By Lemma~\ref{lem:gm:gap-integral}, $\delta_R(\rho,\sigma)=\int_0^\infty j(s)\,ds$.  We bound $j(s)$ from above using the variational representation in Lemma~\ref{lem:gm:variational}
with the following scalar identity (derived in Appendix~\ref{app:gm:fourier-resolvent})
\begin{equation}\label{eq:gm:fourier-resolvent}
 \frac1{s+e^u}
 =\frac1{1+s}
  +\frac{i}{2s}\int_\R
   \frac{s^{-it}(e^{itu}-1)}{\sinh(\pi t)}\,dt
 \qquad(s>0,\ u\in\R)
\end{equation}
First, with the fact that $\Delta^{it}\rho^{1/2}=\sigma^{it}\rho^{-it}\rho^{1/2}$, we have
\begin{equation}\label{eq:gm:resolvent-cocycle-operator}
 (\Delta+s)^{-1}\rho^{1/2}=B_s\rho^{1/2},\qquad
 B_s=\frac{\one}{1+s}
 +\frac{i}{2s}\int_\R
   \frac{s^{-it}(\sigma^{it}\rho^{-it}-\one)}{\sinh(\pi t)}\,dt.
\end{equation}
Since $\mathsf E_R(B_s)$ is supported on $R$, the vector
$y_s=\mathsf E_R(B_s)\rho^{1/2}$ belongs to $\Ran W_R$ by having 
$Y_s=\frac{\operatorname{Tr}_E B_s}{\dim\mathcal H_E}\rho_R^{1/2}$.
Moreover, plugging the above expression of $B_s$, we have
\begin{equation}\label{eq:gm:Bs-localization}
\|B_s-\mathsf E_R(B_s)\|_\infty\le\frac{\mathfrak q_R}{s}.
\end{equation}

For any operator $O$,
\begin{equation}\label{eq:gm:energy-weight}
 \begin{split}
 \Delta^{1/2}(O\rho^{1/2})&=\sigma^{1/2}O,\\
 \|(\Delta+s)^{1/2}(O\rho^{1/2})\|_2^2
 &=\|\sigma^{1/2}O\|_2^2+s\|O\rho^{1/2}\|_2^2
 \le(1+s)\|O\|_\infty^2.
 \end{split}
\end{equation}

Substituting $y=y_s=\mathsf E_R(B_s)\rho^{1/2}$ into \eqref{eq:gm:variational} and using $(\Delta+s)^{-1}\rho^{1/2}=B_s\rho^{1/2}$ gives
\begin{equation}\label{eq:gm:j-bound}
 \begin{aligned}
 0\le j(s)
 &\le\bigl\|(\Delta+s)^{1/2}\bigl(y_s-(\Delta+s)^{-1}\rho^{1/2}\bigr)\bigr\|_2^2\\
 &=\bigl\|(\Delta+s)^{1/2}\bigl[(\mathsf E_R(B_s)-B_s)\rho^{1/2}\bigr]\bigr\|_2^2\\
 &=\bigl\|\sigma^{1/2}(\mathsf E_R(B_s)-B_s)\bigr\|_2^2
   +s\bigl\|(\mathsf E_R(B_s)-B_s)\rho^{1/2}\bigr\|_2^2\\
 &\le(1+s)\|\mathsf E_R(B_s)-B_s\|_\infty^2
 \le\frac{1+s}{s^2}\mathfrak q_R^2.
 \end{aligned}
\end{equation}
Here the second equality uses \eqref{eq:gm:energy-weight}; the last line uses $\Tr\sigma=\Tr\rho=1$ and \eqref{eq:gm:Bs-localization}.

For the small-resolvent interval, positivity of the reduced resolvent gives
\begin{equation}\label{eq:gm:small-tail}
 \int_0^a j(s)\,ds
 \le\langle\rho^{1/2},\log(1+a\Delta^{-1})\rho^{1/2}\rangle.
\end{equation}

For the large-resolvent interval, $\langle\rho^{1/2},(\Delta+s)^{-1}\rho^{1/2}\rangle\le1/s$.  Writing
\begin{equation}\label{eq:gm:reduced-modular-spectral-weights}
 \Delta_R=\sum_j\lambda_jP_j,\qquad
 p_j=\|P_j(\rho_R^{1/2})\|_2^2\ge0,\qquad
 \sum_jp_j=\Tr\rho_R=1,
\end{equation}
we have
\begin{equation}\label{eq:gm:reduced-resolvent-jensen}
 \begin{aligned}
 \sum_jp_j\lambda_j
 &=\langle\rho_R^{1/2},\Delta_R\rho_R^{1/2}\rangle
 =\Tr(\rho_R^{1/2}\sigma_R\rho_R^{-1/2})
 =\Tr\sigma_R=1,\\
 \langle\rho_R^{1/2},(\Delta_R+s)^{-1}\rho_R^{1/2}\rangle
 &=\sum_j\frac{p_j}{\lambda_j+s}
 \underset{\text{Jensen}}{\ge}\frac{1}{\sum_jp_j\lambda_j+s}
 =\frac1{1+s}.
 \end{aligned}
\end{equation}
Jensen's inequality can be applied because $\left(\frac1{x+s}\right)''=\frac{2}{(x+s)^3}>0.$ is convex.
Consequently,
\begin{equation}\label{eq:gm:large-tail}
 \int_b^\infty j(s)\,ds \le\int_b^\infty\left(\frac1s-\frac1{s+1}\right)\,ds \le\log(1+b^{-1})\le b^{-1}.
\end{equation}

Integrating \eqref{eq:gm:j-bound} over $[a,b]$ and choosing
$a=e^{-\ell}$, $b=e^\ell$ proves \eqref{eq:gm:cutoff-inequality}.
Finally, $\log(1+w)\le w^\alpha/\alpha$ for $w\ge0$ and $0<\alpha\le1$, while
\begin{equation}\label{eq:gm:negative-modular-moment}
 \langle\rho^{1/2},\Delta^{-\alpha}\rho^{1/2}\rangle
 =\Tr(\rho^{1+\alpha}\sigma^{-\alpha}).
\end{equation}
These observations in \eqref{eq:gm:small-tail} prove \eqref{eq:gm:two-cutoffs}.
\end{proof}

\begin{corollary}[Optimized negative-moment bound]\label{cor:gm:moment}
For $0<\alpha\le1$,
\begin{equation}\label{eq:gm:optimized-moment}
 \displaystyle
 \delta_R(\rho,\sigma)
 \le\left(\frac1\alpha+3\right)
 \bigl[\Tr(\rho^{1+\alpha}\sigma^{-\alpha})\bigr]^{1/(1+\alpha)}
 \mathfrak q_R^{\,2\alpha/(1+\alpha)}.
\end{equation}

\end{corollary}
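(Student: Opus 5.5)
The plan is to optimize the two-cutoff estimate \eqref{eq:gm:two-cutoffs} over $a$ and $b$. One preliminary remark: the proof of Lemma~\ref{lem:gm:resolvent} never uses $a<1<b$, only $0<a\le b$. Splitting $\int_0^\infty j$ into $[0,a]\cup[a,b]\cup[b,\infty)$ and using \eqref{eq:gm:small-tail}, \eqref{eq:gm:j-bound} and \eqref{eq:gm:large-tail} gives, for all $0<a\le b$,
\begin{equation*}
 \delta_R(\rho,\sigma)\le \frac{M a^\alpha}{\alpha}+b^{-1}+\mathfrak q_R^2\Bigl(a^{-1}-b^{-1}+\log\frac ba\Bigr),
 \qquad M:=\Tr(\rho^{1+\alpha}\sigma^{-\alpha}).
\end{equation*}
When $a=b$ the middle interval is empty. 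I will also use $M\ge1$. This follows from \eqref{eq:gm:negative-modular-moment} and Jensen: write $M=\langle\rho^{1/2},\Delta^{-\alpha}\rho^{1/2}\rangle$, note that $x\mapsto x^{-\alpha}$ is convex, and use $\langle\rho^{1/2},\Delta\rho^{1/2}\rangle=\Tr\sigma=1$, exactly as in \eqref{eq:gm:reduced-resolvent-jensen}.

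Set $X:=M^{1/(1+\alpha)}\mathfrak q_R^{2\alpha/(1+\alpha)}$, which is the target quantity.

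\textbf{Case $\mathfrak q_R<1$.} Choose $a=(\mathfrak q_R^2/M)^{1/(1+\alpha)}$, which balances $Ma^\alpha$ against $\mathfrak q_R^2/a$, and choose $b=1/X$. Then $Ma^\alpha=\mathfrak q_R^2/a=X$ and $b^{-1}=X$. A direct computation gives $aX=\mathfrak q_R^2<1$, hence $a<b$. The remaining term is $\mathfrak q_R^2\log(b/a)=x\log(1/x)$ with $x=\mathfrak q_R^2$. Using $x^{1/(1+\alpha)}\log(1/x)\le (1+\alpha)/e\le1$ on $(0,1)$, I get $x\log(1/x)\le x^{\alpha/(1+\alpha)}\le X$, the last step because $M\ge1$. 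Dropping the negative $-\mathfrak q_R^2/b$ term, the total is at most $X/\alpha+X+X+X=(1/\alpha+3)X$.

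\textbf{Case $\mathfrak q_R\ge1$.} Take $a=b=M^{-1/(1+\alpha)}$. The bound becomes $M^{1/(1+\alpha)}(1/\alpha+1)$, and this is at most $(1/\alpha+3)X$ since $\mathfrak q_R^{2\alpha/(1+\alpha)}\ge1$.

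The only delicate point is the logarithmic term $\mathfrak q_R^2\log(b/a)$. It forces the case split and uses the elementary inequality $x^{1/(1+\alpha)}\log(1/x)\le1$. It also requires checking that the relaxed version of Lemma~\ref{lem:gm:resolvent} with $a\le b$ (rather than $a<1<b$) is valid. This holds because the three interval bounds are stated separately in its proof.
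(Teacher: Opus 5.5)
Your proof is correct and follows the paper's route: the same balancing choice $a=(\mathfrak q_R^2/M)^{1/(1+\alpha)}$ in \eqref{eq:gm:two-cutoffs}, the same split at $\mathfrak q_R=1$, and the same use of $M\ge1$. The differences are cosmetic. The paper takes $b=a^{-1}$, so $a<1<b$ holds without your (valid) relaxation to $0<a\le b$. For $\mathfrak q_R\ge1$ it uses $\delta_R\le D(\rho\|\sigma)\le\alpha^{-1}\log M$, which gives the same factor $(1+\alpha^{-1})M^{1/(1+\alpha)}$ as your choice $a=b$.

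The only omission is the case $\mathfrak q_R=0$, which your "Case $\mathfrak q_R<1$" includes but cannot handle, since your $a=0$ and $b=1/X$ are then undefined. Your relaxed bound covers it at once: it gives $\delta_R\le Ma^\alpha/\alpha+b^{-1}$ for every $0<a\le b$, so $\delta_R=0$ on letting $a\to0$ and $b\to\infty$. The paper reads this case off \eqref{eq:gm:j-bound} directly.
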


\begin{proof}
Jensen's inequality gives $\Tr(\rho^{1+\alpha}\sigma^{-\alpha})\ge e^{\alpha D(\rho\|\sigma)}\ge1$.
If $0<\mathfrak q_R<1$, put
\begin{equation}\label{eq:gm:optimized-resolvent-cutoffs}
 a=\left(\frac{\mathfrak q_R^2}{\Tr(\rho^{1+\alpha}\sigma^{-\alpha})}\right)^{1/(1+\alpha)},
 \qquad b=a^{-1},
 \qquad B=\bigl[\Tr(\rho^{1+\alpha}\sigma^{-\alpha})\bigr]^{1/(1+\alpha)}\mathfrak q_R^{2\alpha/(1+\alpha)}.
\end{equation}
Then $\Tr(\rho^{1+\alpha}\sigma^{-\alpha}) a^\alpha=B$, $\mathfrak q_R^2/a=B$, and $a\le B$.
Also $2\mathfrak q_R^2\log(1/a)=2B a\log(1/a)\le2B/e$.
Substitution in \eqref{eq:gm:two-cutoffs} proves the assertion, since $2+2/e<3$.
If $\mathfrak q_R=0$, \eqref{eq:gm:j-bound} gives $\delta_R=0$.
If $\mathfrak q_R\ge1$, use
\begin{equation}\label{eq:gm:large-leakage-entropy-bound}
 \delta_R\le D(\rho\|\sigma)\le\frac{\log \Tr(\rho^{1+\alpha}\sigma^{-\alpha})}{\alpha}
 \le\frac{1+\alpha}{\alpha}\bigl[\Tr(\rho^{1+\alpha}\sigma^{-\alpha})\bigr]^{1/(1+\alpha)},
\end{equation}
which is again bounded by the right-hand side of \eqref{eq:gm:optimized-moment}.
\end{proof}

\section{Finite-range proof and optimization of the cut dependence}\label{sec:gm:local-proof}

From now on $\rho,\sigma,H_0,V$ are the cut Gibbs pair of Lemma~\ref{lem:gm:cut}, and the retained region is $AB$.

\subsection{Finite-range interactions}

\begin{lemma}[Finite-range integrated modular leakage]\label{lem:gm:local-leakage}
Under the hypotheses of Theorem~\ref{thm:gm:local}, there are $C_0,v,\mu_{\mathrm{LR}}>0$, independent of $\Lambda,A,B,C$, such that
\begin{equation}\label{eq:gm:local-q}
 \mathfrak q_{AB}
 \le C_0(1+\beta g_{A\mid A^c})
       \exp\!\left[-\lambda_\beta(r-R_0)_+\right],
 \qquad
 \lambda_\beta=\frac{\pi\mu_{\mathrm{LR}}}{\pi+\beta v}.
\end{equation}

Here $(s)_+=\max\{s,0\}$.
\end{lemma}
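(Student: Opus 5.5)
Write $U(t):=\sigma^{it}\rho^{-it}$. Since $\sigma^{it}=Z_0^{-it}e^{-i\beta tH_0}$ and $\rho^{-it}=Z^{it}e^{i\beta tH_\Lambda}$, and the scalar phase $(Z/Z_0)^{it}$ commutes with $\mathsf E_{AB}$, we have $\eta_{AB}(t)=\|U_0(t)-\mathsf E_{AB}(U_0(t))\|_\infty$ with $U_0(t)=e^{-i\beta tH_0}e^{i\beta t(H_0+V)}$. This is an interaction-picture unitary. Differentiating gives
\[
\tfrac{d}{d\tau}\,e^{-i\tau H_0}e^{i\tau(H_0+V)}=i\,V(\tau)\,e^{-i\tau H_0}e^{i\tau(H_0+V)},
\qquad V(\tau):=e^{-i\tau H_0}Ve^{i\tau H_0},
\]
so $U_0(t)$ is the time-ordered exponential of $iV(\tau)$ for $\tau\in[0,\beta t]$. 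The plan is to replace $V(\tau)$ by the localized generator $\tilde V(\tau):=\mathsf E_{AB}(V(\tau))$. This generator is self-adjoint and supported on $AB$, so its time-ordered exponential $\tilde U(t)$ is a unitary supported on $AB$ and satisfies $\mathsf E_{AB}(\tilde U)=\tilde U$. Since $\mathsf E_{AB}$ is a contraction in operator norm, this yields
\[
\eta_{AB}(t)\le 2\|U_0(t)-\tilde U(t)\|_\infty\le 2\int_0^{\beta|t|}\|V(\tau)-\mathsf E_{AB}(V(\tau))\|_\infty\,d\tau
\]
by Duhamel's formula.

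The key input is a Lieb--Robinson bound for $H_0$, which is itself a finite-range, bounded-degree Hamiltonian with the same parameters $J,R_0,\mathfrak d$. Let $h_\gamma$ be a term of $V$. It lies within distance $R_0$ of $A$, so its distance to $C$ is at least $(r-R_0)_+$. The partial-trace expectation $\mathsf E_{AB}$ equals the Haar average over unitaries $W_C$ on $C$ of $W_C^\dagger(\cdot)W_C$. Hence
\[
\|h_\gamma(\tau)-\mathsf E_{AB}(h_\gamma(\tau))\|_\infty\le\sup_{W_C}\|[h_\gamma(\tau),W_C]\|_\infty\le C\|h_\gamma\|_\infty\,e^{\mu_{\mathrm{LR}}(v|\tau|-(r-R_0)_+)}.
\]
The constant must not depend on $|C|$ or $|\Lambda|$. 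This holds because the bounded-degree Lieb--Robinson estimate sums over paths starting from the fixed, bounded support of $h_\gamma$, so only the (bounded) support of $h_\gamma$ enters the prefactor. Summing over the terms of $V$ produces the factor $g_{A\mid A^c}$. Separately, we use the trivial bound $\|V(\tau)-\mathsf E_{AB}V(\tau)\|_\infty\le 2g_{A\mid A^c}$. Together,
\[
\eta_{AB}(t)\le\min\Bigl\{2,\;C\beta|t|\,g_{A\mid A^c}\,e^{\mu_{\mathrm{LR}}v\beta|t|-\mu_{\mathrm{LR}}(r-R_0)_+}\Bigr\}.
\]
Here I rename $\mu_{\mathrm{LR}}v$ as the paper's $v$ where needed.

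Finally we insert this into $\mathfrak q_{AB}=\frac12\int\eta_{AB}(t)/|\sinh\pi t|\,dt$ and split at a threshold $T$.
\begin{itemize}
\item For $|t|\le T$, the factor $|t|$ cancels the singularity of $1/\sinh$. This part contributes at most $C\beta g_{A\mid A^c}\,e^{vT\beta-\mu_{\mathrm{LR}}(r-R_0)_+}$, up to an $O(1)$ integral (on a bounded range of $T$ one can also absorb a factor $T$ into the exponential).
\item For $|t|>T$, the bound $\eta\le2$ gives a contribution $\lesssim e^{-\pi T}$.
\end{itemize}
Choosing $T=\mu_{\mathrm{LR}}(r-R_0)_+/(\pi+\beta v)$ balances the two terms. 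Both then equal $\exp[-\pi\mu_{\mathrm{LR}}(r-R_0)_+/(\pi+\beta v)]$, which is exactly the rate $\lambda_\beta$. This gives the prefactor $C_0(1+\beta g_{A\mid A^c})$.

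The main obstacle is the uniform Lieb--Robinson commutator estimate against arbitrary unitaries on the possibly huge region $C$, with a constant proportional to $\|h_\gamma\|_\infty$ and independent of volume. It must be stated carefully from the bounded-degree finite-range assumption, including the convention linking $v$ and $\mu_{\mathrm{LR}}$ so that the rate matches $\pi\mu_{\mathrm{LR}}/(\pi+\beta v)$. The splitting and Duhamel steps are routine.
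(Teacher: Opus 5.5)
Your argument is the paper's proof step for step: the phase-stripped cocycle $e^{-i\beta tH_0}e^{i\beta t(H_0+V)}$, comparison by Duhamel with the localized cocycle generated by $\mathsf E_{AB}(V(u))$, the Haar-averaged Lieb--Robinson bound using $d(\supp h_\gamma,C)\ge(r-R_0)_+$, and the split of the modular-time integral at $T=\mu_{\mathrm{LR}}(r-R_0)_+/(\pi+\beta v)$.

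One step fails as written: your claim that the region $|t|>T$ contributes $\lesssim e^{-\pi T}$. Near $t=0$, $1/|\sinh(\pi t)|\approx 1/(\pi|t|)$. So $\int_{|t|>T}dt/|\sinh(\pi t)|$ grows like $\tfrac{2}{\pi}\log(1/T)$ as $T\downarrow 0$. It is infinite at $T=0$, which is exactly the case $r\le R_0$. More generally $T$ is small whenever $(r-R_0)_+\lesssim(\pi+\beta v)/\mu_{\mathrm{LR}}$. The kernel-tail bound $Ce^{-\pi T}$ therefore holds only for $T$ bounded below, say $T\ge1$, and the regime $T<1$ needs its own argument.

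The fix uses an ingredient you mention and then drop from your combined bound $\min\{2,C\beta|t|g_{A\mid A^c}e^{\cdots}\}$. Your trivial estimate $\|V(u)-\mathsf E_{AB}V(u)\|_\infty\le 2g_{A\mid A^c}$, inserted in the Duhamel bound, gives $\eta_{AB}(t)\le 4\beta g_{A\mid A^c}|t|$ for all $t$. The paper obtains $2\beta g_{A\mid A^c}|t|$ directly from $\|U_0(t)-\one\|_\infty\le\beta g_{A\mid A^c}|t|$. Using $\int_\R|t|/|\sinh(\pi t)|\,dt=\tfrac12$, this yields $\mathfrak q_{AB}\le\beta g_{A\mid A^c}$. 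For $T<1$ this is at most $e^{\pi}(1+\beta g_{A\mid A^c})e^{-\pi T}$, since $e^{\pi(1-T)}\ge1$. With this case distinction, which is how the paper closes Step~3, your proof is complete.
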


\begin{proof}
\emph{Step 1: General leakage bounds.} Using
$\sigma^{it}\rho^{-it} = e^{-i\beta tH_0}e^{i\beta t(H_0+V)}$.
For real $u$, the interaction-picture evolution is generated by
\begin{equation}\label{eq:gm:interaction-picture}
 V(u)=e^{-iuH_0}Ve^{iuH_0},\qquad
 \frac{d}{du}\bigl(e^{-iuH_0}e^{iu(H_0+V)}\bigr)
 =iV(u)e^{-iuH_0}e^{iu(H_0+V)}.
\end{equation}

Integrating the evolution equation from its initial value $\one$ and using $\|V(u)\|_\infty\le g_{A\mid A^c}$, gives the linear bound below; unitarity and contractivity give the bound $2$.  Integration against $1/(2|\sinh(\pi t)|)$, with $\int_\R |t|/|\sinh(\pi t)|\,dt=1/2$, then yields
\begin{equation}\label{eq:gm:elementary-leakage}
 \eta_{AB}(t)\le\min\{2,2\beta g_{A\mid A^c}|t|\},
 \qquad \mathfrak q_{AB}\le\frac{\beta g_{A\mid A^c}}{2}.
\end{equation}

\emph{Step 2: Spatial decay of the leakage.} To obtain a bound sensitive to spatial locality, let $W_{AB}(u)$ solve
\begin{equation}\label{eq:gm:localized-cocycle-evolution}
 \frac{d}{du}W_{AB}(u)=i\mathsf E_{AB}(V(u))W_{AB}(u),
 \qquad W_{AB}(0)=\one.
\end{equation}
Since its generator is self-adjoint and supported on $AB$, $W_{AB}(u)$ is unitary and $\mathsf E_{AB}(W_{AB}(u))=W_{AB}(u)$.  Duhamel's formula (Lemma~\ref{lem:gm:duhamel} in Appendix~\ref{app:gm:gronwall}) and unitarity give
\begin{equation}\label{eq:gm:cocycle-duhamel-comparison}
 \bigl\|e^{-i\beta tH_0}e^{i\beta t(H_0+V)}-W_{AB}(\beta t)\bigr\|_\infty
 \le\int_0^{\beta|t|}
 \|V(\operatorname{sgn}(t)u)-\mathsf E_{AB}(V(\operatorname{sgn}(t)u))\|_\infty\,du.
\end{equation}
Since $(\mathrm{id}-\mathsf E_{AB})(W_{AB}(\beta t))=0$, subtracting $W_{AB}(\beta t)$ does not change the leakage.  Contractivity of $\mathsf E_{AB}$ therefore gives
\begin{equation}\label{eq:gm:leakage-from-cocycle-comparison}
 \begin{aligned}
 \eta_{AB}(t)
 &=\bigl\|(\mathrm{id}-\mathsf E_{AB})
 \bigl(e^{-i\beta tH_0}e^{i\beta t(H_0+V)}-W_{AB}(\beta t)\bigr)\bigr\|_\infty\\
 &\le2\bigl\|e^{-i\beta tH_0}e^{i\beta t(H_0+V)}-W_{AB}(\beta t)\bigr\|_\infty.
 \end{aligned}
\end{equation}
Combining these estimates yields
\begin{equation}\label{eq:gm:duhamel-leakage}
 \eta_{AB}(t)
 \le 2\int_0^{\beta|t|}
 \|V(\operatorname{sgn}(t)u)-\mathsf E_{AB}(V(\operatorname{sgn}(t)u))\|_\infty\,du.
\end{equation}

The finite-range Lieb--Robinson bound and local approximation (cf.~\cite[Eq.~(3.26) and Corollary~4.4]{NachtergaeleSimsYoung2019}) give, for every original interaction term,
\begin{equation}\label{eq:gm:local-LR}
 \|e^{-iuH_0}h_\gamma e^{iuH_0}
   -\mathsf E_{AB}(e^{-iuH_0}h_\gamma e^{iuH_0})\|_\infty
 \le C_{\rm LR}\|h_\gamma\|_\infty
 e^{v|u|-\mu_{\mathrm{LR}} d(\supp h_\gamma,C)}.
\end{equation}
Here $C_{\rm LR}>0$ is a prefactor, $v>0$ is the time-growth rate, and $\mu_{\mathrm{LR}}>0$ is the spatial decay rate; these constants depend only on the fixed interaction parameters and are independent of $\Lambda,A,B,C$.  The distance $d(\supp h_\gamma,C)$ is the minimum distance between the support of $h_\gamma$ and the region $C$.
For every term contributing to $V$, its support meets $A$ and has diameter at most $R_0$, whence
$d(\supp h_\gamma,C)\ge(r-R_0)_+$.  Sum \eqref{eq:gm:local-LR} over these terms and apply \eqref{eq:gm:duhamel-leakage}.  This gives
\begin{equation}\label{eq:gm:finite-range-pointwise-leakage}
 \eta_{AB}(t)
 \le\min\{2,C\beta g_{A\mid A^c}|t|
 e^{v\beta|t|-\mu_{\mathrm{LR}}(r-R_0)_+}\}.
\end{equation}
\emph{Step 3: Integration over modular time.} Set $T=\mu_{\mathrm{LR}}(r-R_0)_+/(\pi+\beta v)$.  On $|t|\le T$ use the localized bound; on $|t|>T$ use $\eta_{AB}(t)\le2$.  For the latter contribution, we use
\begin{equation}\label{eq:gm:kernel-tail}
 \int_{|t|>T}\frac{dt}{|\sinh(\pi t)|}\le C e^{-\pi T}\qquad(T\ge1).
\end{equation}
Using $\int_\R |t|/|\sinh(\pi t)|\,dt=1/2$ for the short-time contribution gives, for $T\ge1$,
\begin{equation}\label{eq:gm:finite-range-time-split}
 \begin{aligned}
 \mathfrak q_{AB}
 &\le C\beta g_{A\mid A^c}
 e^{v\beta T-\mu_{\mathrm{LR}}(r-R_0)_+}+Ce^{-\pi T}\\
 &\le C(1+\beta g_{A\mid A^c})e^{-\pi T},
 \end{aligned}
\end{equation}
since the choice of $T$ makes $v\beta T-\mu_{\mathrm{LR}}(r-R_0)_+=-\pi T$.
If $0\le T<1$, the elementary bound \eqref{eq:gm:elementary-leakage} gives
\begin{equation}\label{eq:gm:finite-range-short-distance-leakage}
 \mathfrak q_{AB}\le\frac{\beta g_{A\mid A^c}}2
 \le\frac{e^\pi}{2}(1+\beta g_{A\mid A^c})e^{-\pi T},
\end{equation}
since $e^{\pi(1-T)}\ge1$.  Thus choosing $C_0\ge\max\{C,e^\pi/2\}$ covers both $T\ge1$ and $T<1$.  Finally, $\pi T=\lambda_\beta(r-R_0)_+$, which proves \eqref{eq:gm:local-q}.
\end{proof}

\subsection{Moment bound and cut optimization}

\begin{lemma}[Short-imaginary-time moment]\label{lem:gm:imaginary}
Under the finite-range hypotheses, for $0<\alpha\le1$ with
$2\alpha\beta\mathfrak dJ<1$, the cut Gibbs pair satisfies
\begin{equation}\label{eq:gm:sharp-moment}
 \Tr(\rho^{1+\alpha}\sigma^{-\alpha})
 \le
 \exp\!\left[
 \alpha\beta g_{A\mid A^c}
 -\frac{g_{A\mid A^c}}{2\mathfrak dJ}\log(1-2\alpha\beta\mathfrak dJ)
 \right].
\end{equation}

In particular, if
\begin{equation}\label{eq:gm:alpha0}
 \alpha_0=\min\left\{1,\frac1{4\beta\mathfrak dJ}\right\},
\end{equation}

then for $0<\alpha\le\alpha_0$,
\begin{equation}\label{eq:gm:coarse-moment}
 \Tr(\rho^{1+\alpha}\sigma^{-\alpha})\le e^{3\alpha\beta g_{A\mid A^c}}.
\end{equation}

\end{lemma}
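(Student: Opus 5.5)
The plan is to separate the partition-function ratio from a short-imaginary-time cocycle, and to bound that cocycle by a Dyson series whose integrand is controlled by a nested-commutator (cluster) estimate. Write $H=H_0+V$ and $s=\alpha\beta$. Since $\rho=e^{-\beta H}/Z$ and $\sigma=e^{-\beta H_0}/Z_0$,
\[
 \Tr(\rho^{1+\alpha}\sigma^{-\alpha})
 =\Bigl(\frac{Z_0}{Z}\Bigr)^{\alpha}\,
 \frac{\Tr\bigl(e^{-\beta H}\,e^{-sH}e^{sH_0}\bigr)}{Z}.
\]
By \eqref{eq:gm:free-energy} of Lemma~\ref{lem:gm:cut}, the prefactor is at most $e^{\alpha\beta g_{A\mid A^c}}$. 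This produces the first term in the exponent of \eqref{eq:gm:sharp-moment}.

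For the second factor, write $X_s=e^{-sH}e^{sH_0}$. Since $e^{-\beta H}\ge0$, Hölder's inequality gives $|\Tr(e^{-\beta H}X_s)|\le Z\|X_s\|_\infty$, so it suffices to bound the operator norm of this imaginary-time cocycle. Differentiating gives $\frac{d}{du}X_u=-X_u\,V(-iu)$, where $V(-iu):=e^{-uH_0}Ve^{uH_0}$ and $X_0=\one$. The Dyson expansion (or the Grönwall argument of Lemma~\ref{lem:gm:duhamel}, run with a non-unitary generator) then yields
\[
 \|X_s\|_\infty\le\exp\int_0^s\|e^{-uH_0}Ve^{uH_0}\|_\infty\,du .
\]

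The key input, and the step I expect to be the main obstacle, is a uniform imaginary-time growth estimate for a single term:
\[
 \|e^{-uH_0}h_\gamma e^{uH_0}\|_\infty\le\frac{\|h_\gamma\|_\infty}{1-2u\mathfrak dJ},\qquad 0\le u<\frac{1}{2\mathfrak dJ}.
\]
I would prove it by expanding $e^{-uH_0}h_\gamma e^{uH_0}=\sum_n\frac{(-u)^n}{n!}\ad_{H_0}^n(h_\gamma)$ and expanding each $\ad^n_{H_0}$ into chains of terms $h_{\gamma_1},\dots,h_{\gamma_n}$ of $H_0$. A chain contributes only if each $h_{\gamma_k}$ overlaps the union of the supports of $h_\gamma,h_{\gamma_1},\dots,h_{\gamma_{k-1}}$. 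Hence at step $k$ there are at most $k\mathfrak d$ admissible choices, and each commutator costs a factor $2J$. This gives
\[
 \|\ad^n_{H_0}h_\gamma\|_\infty\le n!\,(2\mathfrak dJ)^n\|h_\gamma\|_\infty,
\]
and summing the geometric series yields the claimed bound. The delicate point is the bookkeeping: the overlap count must use only the degree $\mathfrak d$ of the full interaction graph, since $H_0$ has a subset of the terms, and the counting must really produce $n!\,\mathfrak d^n$ rather than something larger.

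Summing this estimate over the terms of $V$ gives
\[
 \|e^{-uH_0}Ve^{uH_0}\|_\infty\le\frac{g_{A\mid A^c}}{1-2u\mathfrak dJ}.
\]
Integrating over $u\in[0,\alpha\beta]$ gives $-\frac{g_{A\mid A^c}}{2\mathfrak dJ}\log(1-2\alpha\beta\mathfrak dJ)$, which together with the prefactor proves \eqref{eq:gm:sharp-moment}. For \eqref{eq:gm:coarse-moment}, set $x=2\alpha\beta\mathfrak dJ$. When $\alpha\le\alpha_0$ we have $x\le1/2$, so $-\log(1-x)\le x/(1-x)\le2x$. The second term is therefore at most $2\alpha\beta g_{A\mid A^c}$, and the total exponent is at most $3\alpha\beta g_{A\mid A^c}$.
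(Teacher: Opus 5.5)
Your proposal is correct and follows the paper's proof essentially step for step. The shared steps are the factorization $\Tr(\rho^{1+\alpha}\sigma^{-\alpha})=(Z_0/Z)^\alpha\Tr[\rho\,e^{-\alpha\beta(H_0+V)}e^{\alpha\beta H_0}]$ with the free-energy bound, the Chen--Rouz\'e nested-commutator count $\|\ad_{H_0}^{\,n}(h_\gamma)\|_\infty\le n!(2\mathfrak dJ)^n\|h_\gamma\|_\infty$ summed into $g_{A\mid A^c}/(1-2\mathfrak dJu)$, and Gr\"onwall applied to the imaginary-time cocycle (this is Lemma~\ref{lem:gm:gronwall}, not Lemma~\ref{lem:gm:duhamel}); your bound $-\log(1-x)\le 2x$ for $x\le 1/2$ is equivalent to the paper's remark that the denominator stays above $1/2$.
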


\begin{proof}
We use the nested-commutator counting argument of \cite[Lemma~IX.3 and its proof]{Chen2025}, applied here to an original interaction term $h_\gamma$ with the interaction norm bound $J$ restored.  Writing $\ad_{H_0}(O)=[H_0,O]$ and $\ad_{H_0}^{\,n}$ for its $n$-fold iteration, it gives
\begin{equation}\label{eq:gm:local-commutators}
 \|\ad_{H_0}^{\,n}(h_\gamma)\|_\infty
 \le n!(2\mathfrak dJ)^n\|h_\gamma\|_\infty.
\end{equation}

After $m$ commutators, a nonzero contribution is supported in a union of at most $m+1$ original interaction supports.  At most $\mathfrak d(m+1)$ terms can contribute to the next commutator, each at cost at most $2J$.  Multiplication of these bounds proves \eqref{eq:gm:local-commutators}.  Summing the resulting geometric series, as in \cite[Lemma~IX.3]{Chen2025}, and then summing over the terms of $V$ gives, for $2\mathfrak dJ|u|<1$,
\begin{equation}\label{eq:gm:imaginary-V}
 \|e^{-uH_0}Ve^{uH_0}\|_\infty
 \le\frac{g_{A\mid A^c}}{1-2\mathfrak dJ|u|}.
\end{equation}
Let $Q(u)=e^{-u(H_0+V)}e^{uH_0}$.  It satisfies
\begin{equation}\label{eq:gm:imaginary-cocycle-evolution}
 Q'(u)=-Q(u)e^{-uH_0}Ve^{uH_0},\qquad Q(0)=\one.
\end{equation}
Integrating this equation, taking operator norms, and using \eqref{eq:gm:imaginary-V} gives, for $0\le u<(2\mathfrak dJ)^{-1}$,
\begin{equation}\label{eq:gm:imaginary-cocycle-integral-bound}
 \|Q(u)\|_\infty
 \le 1+\int_0^u\frac{g_{A\mid A^c}}{1-2\mathfrak dJs}\|Q(s)\|_\infty\,ds.
\end{equation}
Applying Gr\"onwall's inequality (Lemma~\ref{lem:gm:gronwall} in Appendix~\ref{app:gm:gronwall}) with $f(u)=\|Q(u)\|_\infty$, $c=1$, and $a(s)=g_{A\mid A^c}/(1-2\mathfrak dJs)$ gives
\begin{equation}\label{eq:gm:imaginary-Q}
 \|Q(u)\|_\infty
 \le\exp\left(\int_0^u\frac{g_{A\mid A^c}}{1-2\mathfrak dJs}\,ds\right)
 =(1-2\mathfrak dJu)^{-g_{A\mid A^c}/(2\mathfrak dJ)}.
\end{equation}

Since
\begin{equation}\label{eq:gm:gibbs-moment-cocycle-identity}
 \Tr(\rho^{1+\alpha}\sigma^{-\alpha})
 =\left(\frac{Z_0}{Z}\right)^\alpha
 \Tr\left[\rho\,e^{-\alpha\beta(H_0+V)}e^{\alpha\beta H_0}\right],
\end{equation}
\eqref{eq:gm:free-energy} and \eqref{eq:gm:imaginary-Q} prove \eqref{eq:gm:sharp-moment}.  The trace on the left is positive; its absolute-value estimate by the operator norm of $Q$ is therefore sufficient even though $Q$ need not be self-adjoint.
For $\alpha\le\alpha_0$, the denominator in \eqref{eq:gm:imaginary-V} is at least $1/2$ on $[0,\alpha\beta]$.  Thus $\|Q(\alpha\beta)\|_\infty\le e^{2\alpha\beta g_{A\mid A^c}}$, which proves \eqref{eq:gm:coarse-moment}.
\end{proof}

\begin{proof}[Proof of Theorem~\ref{thm:gm:local}]
Combine Lemma~\ref{lem:gm:cut}, Corollary~\ref{cor:gm:moment}, Lemma~\ref{lem:gm:local-leakage}, and \eqref{eq:gm:sharp-moment}.  For every admissible $\alpha$,
\begin{equation}\label{eq:gm:local-optimized}
 \begin{split}
 I(A:C\mid B)_\rho
 &\le\left(\frac1\alpha+3\right)
       [C_0(1+\beta g_{A\mid A^c})]^{\frac{2\alpha}{1+\alpha}}\exp\left\{
 \frac{g_{A\mid A^c}}{1+\alpha}
 \left[\alpha\beta-\frac{\log(1-2\alpha\beta\mathfrak dJ)}{2\mathfrak dJ}\right]
 -\frac{2\alpha\lambda_\beta}{1+\alpha}(r-R_0)_+
 \right\}.
 \end{split}
\end{equation}

The infimum over $0<\alpha\le1$ with $2\alpha\beta\mathfrak dJ<1$ may be taken.

Choose $\alpha=\alpha_0 =\min\left\{1,\frac1{4\beta\mathfrak dJ}\right\}$. The polynomial factor in $g_{A\mid A^c}$ can be absorbed into $C_\beta e^{C_\beta g_{A\mid A^c}}$, and the fixed range shift into $C_\beta$.  This proves \eqref{eq:gm:local-main}.  One explicit decay length before these absorptions is
\begin{equation}\label{eq:gm:local-length}
 \xi_\beta=\frac{1+\alpha_0}{2\alpha_0\lambda_\beta}
 =\frac{1+\alpha_0}{2\alpha_0}\frac{\pi+\beta v}{\pi\mu_{\mathrm{LR}}}.
\end{equation}

For a fixed finite $A$, $g_{A\mid A^c}\le\mathfrak dJ|A|$: all terms containing a given site overlap, so there are at most $\mathfrak d$ of them.  The bound is consequently uniform in $\Lambda$ and $C$ and tends to zero with $r$.

On $\Z^D$, every cut term contains an $A$-site within distance $R_0$ of a nearest-neighbor boundary edge.  The number of such sites is at most $C_{D,R_0}|\partial_{\mathrm e}A|$.  Summing the at-most-$\mathfrak dJ$ interaction strength at each of these sites gives $g_{A\mid A^c}\le C_{D,R_0}\mathfrak dJ|\partial_{\mathrm e}A|$, proving \eqref{eq:gm:local-boundary}.
\end{proof}

\section{Quasi-local interactions: spatial localization and modular tails}\label{sec:gm:quasilocal-proof}

Compared with the finite-range case, two additional difficulties arise: individual interactions can span the shielding region, and their supports can contain arbitrarily many sites.  The bounded-degree commutator counting used in Lemma~\ref{lem:gm:imaginary} therefore no longer applies directly, so we cannot rely on the same short-imaginary-time bound for the negative modular moment.

We first remove interactions of large diameter that touch a neighborhood of $A$.  A Gibbs-interpolation estimate bounds the resulting change in CMI, while the retained interactions admit a spatial localization estimate inside that neighborhood.  Instead of using the negative-moment estimate in Lemma~\ref{lem:gm:imaginary}, we apply the resolvent bound \eqref{eq:gm:cutoff-inequality} directly.  Its small-resolvent contribution is controlled by spectral-tail estimates: spatial decay bounds the strength of interactions with large supports, which in turn controls moments of the relative modular logarithm with boundary-dependent prefactors.  Combining this spectral control with spatial localization and balancing the truncation error against the spectral-tail contribution gives the stretched-exponential distance decay in Theorem~\ref{thm:gm:quasilocal}.

\subsection{Consequences of spatial decay for support cardinalities}

\begin{lemma}[Weighted support and cut estimates]\label{lem:gm:support-moments}
Under \eqref{eq:gm:F-assumption}, there exists $a_{\mathrm{supp}}>0$, depending only on $D,\mu,\theta$, such that
\begin{equation}\label{eq:gm:Ja}
 J_{2a_{\mathrm{supp}}}:=\sup_{x\in\Z^D}
 \sum_{X\ni x}e^{2a_{\mathrm{supp}}|X|^{\theta/D}}\|\Phi(X)\|_\infty
 \le C J_{\mu,\theta}<\infty.
\end{equation}

Moreover,
\begin{equation}\label{eq:gm:weighted-cut}
 \sum_{X\text{ crossing }A\mid A^c}
 e^{a_{\mathrm{supp}}|X|^{\theta/D}}\|\Phi(X)\|_\infty\le C |\partial_{\mathrm e}A|,
\end{equation}

where $C$ depends only on $D,\mu,\theta,J_{\mu,\theta}$.
\end{lemma}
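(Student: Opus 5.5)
The plan is to convert the aggregate pair bound \eqref{eq:gm:F-assumption} into a bound in terms of support cardinality, using the fact that a large support on $\Z^D$ must have a large diameter. If $X\ni x$ and $|X|=m$, then $X$ cannot fit inside the $\ell^1$-ball of radius $\rho$ around $x$ once that ball has fewer than $m$ sites. Hence $\diam X\ge\max_{y\in X}d(x,y)\ge c_D m^{1/D}$. Given $x$, I will assign to each $X\ni x$ a point $y_X\in X$ with $d(x,y_X)=\max_{y\in X}d(x,y)=:\rho_X$, so that $|X|\le C_D(1+\rho_X)^D$. Then
\[
\sum_{X\ni x}e^{2a|X|^{\theta/D}}\|\Phi(X)\|_\infty\le\sum_{y\in\Z^D}e^{2aC_D^{\theta/D}(1+d(x,y))^\theta}\sum_{X\ni x,y}\|\Phi(X)\|_\infty\le J_{\mu,\theta}\sum_{y}e^{2aC(1+d(x,y))^\theta-\mu d(x,y)^\theta}.
\]
I use $(1+d)^\theta\le1+d^\theta$ for $\theta\le1$ and choose $a_{\mathrm{supp}}$ so that $2a_{\mathrm{supp}}C\le\mu/2$. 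The remaining sum $\sum_y e^{-\mu d(x,y)^\theta/2}$ is then a finite constant depending on $D,\mu,\theta$, because the number of points at distance $k$ is polynomial in $k$. This gives \eqref{eq:gm:Ja}. Bounding each $X$ by its farthest point is a slight overcount, but it is harmless since all terms are nonnegative.

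For \eqref{eq:gm:weighted-cut}, each $X$ crossing $A\mid A^c$ contains some $x\in A$ and $y\in A^c$. Take a geodesic lattice path from $x$ to $y$ of length $d(x,y)$. It crosses some boundary edge $e=\{u,w\}\in\partial_{\mathrm e}A$ with $u\in A$, $d(x,u)\le d(x,y)$ and $d(u,y)\le d(x,y)$. I will charge $X$ to the edge $e$ together with the pair $(x,y)$, and then estimate
\[
\sum_{X\text{ crossing}}e^{a|X|^{\theta/D}}\|\Phi(X)\|_\infty\le\sum_{e\in\partial_{\mathrm e}A}\ \sum_{x,y:\ d(x,u_e),\,d(u_e,y)\le d(x,y)}\ \sum_{X\ni x,y}e^{a|X|^{\theta/D}}\|\Phi(X)\|_\infty .
\]

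The main obstacle is controlling the inner weighted sum over $X\ni x,y$, since the pair bound is unweighted. The first thing I would try is the splitting $e^{a|X|^{\theta/D}}\le e^{-a|X|^{\theta/D}}e^{2a|X|^{\theta/D}}$. Since $|X|\ge c\,d(x,y)^{D}$ fails in general but $|X|\ge d(x,y)+1$ always holds, this splitting alone does not give pair decay. I would instead interpolate with Hölder or Cauchy--Schwarz:
\[
\sum_{X\ni x,y}e^{a|X|^{\theta/D}}\|\Phi\|\le\Bigl(\sum_{X\ni x,y}\|\Phi\|\Bigr)^{1/2}\Bigl(\sum_{X\ni x}e^{2a|X|^{\theta/D}}\|\Phi\|\Bigr)^{1/2}\le (J_{\mu,\theta}\,CJ_{\mu,\theta})^{1/2}e^{-\mu d(x,y)^\theta/2}.
\]
This uses \eqref{eq:gm:Ja} and is where the factor $2$ in $J_{2a_{\mathrm{supp}}}$ is needed. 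The triangle constraint forces $d(x,y)\ge\max\{d(x,u),d(u,y)\}$, so the sum over $(x,y)$ for a fixed edge is bounded by $\sum_{x,y}e^{-\mu(d(x,u)^\theta+d(u,y)^\theta)/4}$. That factorizes into two convergent lattice sums, giving a constant per boundary edge. Summing over the edges yields $C|\partial_{\mathrm e}A|$, and the crossings with $\Lambda^c$ are included automatically because the boundary is taken in $\Z^D$.
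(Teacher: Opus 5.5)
Your proof is correct. It matches the paper on the first two steps and differs only in the final boundary count.

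For \eqref{eq:gm:Ja} you charge each support to the farthest point from $x$. This also covers the singleton case through $y=x$. The paper instead uses some $y$ with $d(x,y)\ge\diam X/2$ and sums over all $y\in X\setminus\{x\}$, treating singletons separately. The pair decay of the weighted sum then comes from the same Cauchy--Schwarz step in both arguments.

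For the cut estimate, the paper groups pairs by displacement $z=y-x$. It uses the translation count $|\{x\in A:x+z\notin A\}|\le|\partial_{\mathrm e}A||z|_1$, which costs a factor $|z|_1$ paid for by $e^{-(\mu/2)|z|_1^\theta}$. You instead anchor each pair at a boundary edge $\{u,w\}$ on a geodesic from $x$ to $y$. You then use $d(x,y)^\theta\ge\tfrac12\bigl(d(x,u)^\theta+d(u,y)^\theta\bigr)$ to split the sum into two independent lattice sums around $u$, at the cost of halving the decay rate. Both are path-charging arguments and both give a constant per boundary edge. Yours avoids the translation lemma; the paper's keeps the rate $\mu/2$.

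One aside in your write-up is false: $|X|\ge d(x,y)+1$ does not hold, since supports need not be connected (e.g.\ $X=\{x,y\}$). You never use it, so the argument is unaffected.
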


\begin{proof}
For $|X|\ge2$, put $R_X=\diam X\ge1$.  A lattice counting bound gives
$|X|\le(2R_X+1)^D\le3^D R_X^D$, and therefore $|X|^{\theta/D}\le3^\theta R_X^\theta$.
For each fixed $x\in X$, some $y\in X$ obeys $d(x,y)\ge R_X/2$.  Choose
\[
 0<a_{\mathrm{supp}}\le\min\{1,\mu/(4\cdot6^\theta)\}.
\]
Then $2a_{\mathrm{supp}}|X|^{\theta/D}\le(\mu/2)d(x,y)^\theta$ for at least one such $y$, and hence
\begin{equation}\label{eq:gm:support-weight-distance-comparison}
 e^{2a_{\mathrm{supp}}|X|^{\theta/D}}\le\sum_{y\in X\setminus\{x\}}
                  e^{(\mu/2)d(x,y)^\theta}\qquad(|X|\ge2).
\end{equation}
Separating the singleton $X=\{x\}$ and interchanging the remaining nonnegative sums gives
\begin{equation}\label{eq:gm:weighted-per-site-summability}
 \begin{aligned}
 J_{2a_{\mathrm{supp}}}
 &\le e^{2a_{\mathrm{supp}}}J_{\mu,\theta}
 +\sup_x\sum_{y\ne x}e^{(\mu/2)d(x,y)^\theta}
       \sum_{X\ni x,y}\|\Phi(X)\|_\infty\\
 &\le e^{2a_{\mathrm{supp}}}J_{\mu,\theta}
 +J_{\mu,\theta}\sum_{z\ne0}
      e^{-(\mu/2)|z|_1^\theta}.
 \end{aligned}
\end{equation}
Here both bounds use \eqref{eq:gm:F-assumption}, with $x=y$ for the singleton and $x\ne y$ for the pair sum; in the last line, $z=y-x$.
The lattice sum is finite for every $\mu>0$ and $\theta>0$.  This proves \eqref{eq:gm:Ja}, including singleton supports.

For $x\ne y$, Cauchy--Schwarz for the sum over interaction supports gives
\begin{equation}\label{eq:gm:weighted-pair}
 \begin{split}
 \sum_{X\ni x,y}e^{a_{\mathrm{supp}}|X|^{\theta/D}}\|\Phi(X)\|_\infty
 &\le
 \left(\sum_{X\ni x,y}\|\Phi(X)\|_\infty\right)^{1/2}
 \left(\sum_{X\ni x,y}e^{2a_{\mathrm{supp}}|X|^{\theta/D}}\|\Phi(X)\|_\infty\right)^{1/2}\\
 &\le\sqrt{J_{\mu,\theta} J_{2a_{\mathrm{supp}}}}\,e^{-(\mu/2)d(x,y)^\theta}.
 \end{split}
\end{equation}

For every displacement $z\in\Z^D$,
\begin{equation}\label{eq:gm:translation-count}
 |\{x\in A:x+z\notin A\}|\le |\partial_{\mathrm e}A||z|_1.
\end{equation}
Follow a fixed nearest-neighbor path realizing $z$.
Every translated path from $A$ to $\Z^D\setminus A$ crosses an ambient boundary edge, and for each step and boundary edge there is at most one starting point.
This proves \eqref{eq:gm:translation-count}.

Every cut support contains at least one pair $x\in A,y\notin A$, so summing over all such pairs only overcounts its contribution.  Thus \eqref{eq:gm:weighted-pair} and \eqref{eq:gm:translation-count} give
\begin{equation}\label{eq:gm:weighted-cut-pair-counting}
 \begin{aligned}
 \sum_{X\text{ crossing }A\mid A^c}e^{a_{\mathrm{supp}}|X|^{\theta/D}}\|\Phi(X)\|_\infty
 &\le\sum_{x\in A}\sum_{y\in\Z^D\setminus A}
       \sum_{X\ni x,y}e^{a_{\mathrm{supp}}|X|^{\theta/D}}\|\Phi(X)\|_\infty\\
 &\le\sqrt{J_{\mu,\theta}J_{2a_{\mathrm{supp}}}}
       \sum_{z\in\Z^D}|\{x\in A:x+z\notin A\}|e^{-(\mu/2)|z|_1^\theta}\\
 &\le\sqrt{J_{\mu,\theta}J_{2a_{\mathrm{supp}}}}\,|\partial_{\mathrm e}A|
       \sum_{z\in\Z^D}|z|_1e^{-(\mu/2)|z|_1^\theta}.
 \end{aligned}
\end{equation}
Here $z=y-x$ groups pairs by their displacement.  Since
$\sum_z|z|_1 e^{-(\mu/2)|z|_1^\theta}<\infty$, this proves \eqref{eq:gm:weighted-cut}.
\end{proof}

\subsection{A weighted commutator estimate}
For an operator in a fixed finite volume, introduce the decomposition norm
\begin{equation}\label{eq:gm:decomp-norm}
 \|O\|_{s,\mathrm{dec}}
 :=\inf_{O=\sum_{X\subseteq\Lambda}O_X}
       \sum_{X\subseteq\Lambda}e^{s|X|^{\theta/D}}\|O_X\|_\infty,
 \qquad 0\le s\le a_{\mathrm{supp}},
\end{equation}

where $O_X$ is supported on $X$.  A scalar multiple of the identity is allowed on the empty support.  The label $\mathrm{dec}$ distinguishes this support-weighted decomposition norm from the Schatten norms.  In particular,
\[
 \|O\|_\infty\le\|O\|_{s,\mathrm{dec}},\qquad \|\one\|_{s,\mathrm{dec}}=1.
\]
The weights increase with $s$, so $\|O\|_{s',\mathrm{dec}}\le\|O\|_{s,\mathrm{dec}}$ for $0\le s'\le s\le a_{\mathrm{supp}}$.  The concavity inequality
$|X\cup Y|^{\theta/D}\le|X|^{\theta/D}+|Y|^{\theta/D}$ gives
\begin{equation}\label{eq:gm:submultiplicative}
 \|OP\|_{s,\mathrm{dec}}\le\|O\|_{s,\mathrm{dec}}\|P\|_{s,\mathrm{dec}}.
\end{equation}

The natural decomposition of the cut interaction and Lemma~\ref{lem:gm:support-moments} imply
\begin{equation}\label{eq:gm:V-decomp-bound}
 \|V\|_{a_{\mathrm{supp}},\mathrm{dec}}\le C |\partial_{\mathrm e}A|.
\end{equation}

\begin{lemma}[Commutator loss of support weight]\label{lem:gm:commutator-weight}
For $0\le s'<s\le a_{\mathrm{supp}}$,
\begin{equation}\label{eq:gm:weighted-commutator}
 \|[H_0,O]\|_{s',\mathrm{dec}}
 \le 2J_{a_{\mathrm{supp}}}\left[\frac{D}{e\theta(s-s')}\right]^{D/\theta}\|O\|_{s,\mathrm{dec}},
\end{equation}

where $J_{a_{\mathrm{supp}}}\le J_{2a_{\mathrm{supp}}}$ is defined as in \eqref{eq:gm:Ja} with $a_{\mathrm{supp}}$ in place of $2a_{\mathrm{supp}}$.
\end{lemma}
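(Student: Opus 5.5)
Fix an arbitrary decomposition $O=\sum_Y O_Y$ with weighted cost $\sum_Y e^{s|Y|^{\theta/D}}\|O_Y\|_\infty$ close to $\|O\|_{s,\mathrm{dec}}$, and write $H_0=\sum_X\Phi_0(X)$, where $\Phi_0(X)=\Phi(X)$ for every non-cut support $X\subseteq\Lambda$. Then
\[
[H_0,O]=\sum_Y\sum_{X\cap Y\ne\varnothing}[\Phi_0(X),O_Y],
\]
because terms with disjoint supports commute. Each summand is supported on $X\cup Y$ and has operator norm at most $2\|\Phi(X)\|_\infty\|O_Y\|_\infty$. This gives an admissible decomposition of $[H_0,O]$, so
\[
\|[H_0,O]\|_{s',\mathrm{dec}}\le 2\sum_Y\|O_Y\|_\infty\sum_{X\cap Y\ne\varnothing}e^{s'|X\cup Y|^{\theta/D}}\|\Phi(X)\|_\infty .
\]
Taking the infimum over decompositions at the end, it suffices to show, for each fixed $Y$,
\[
\sum_{X\cap Y\ne\varnothing}e^{s'|X\cup Y|^{\theta/D}}\|\Phi(X)\|_\infty\le J_{a_{\mathrm{supp}}}\Bigl[\frac{D}{e\theta(s-s')}\Bigr]^{D/\theta}e^{s|Y|^{\theta/D}}.
\]

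Use the concavity inequality $|X\cup Y|^{\theta/D}\le|X|^{\theta/D}+|Y|^{\theta/D}$ together with $s'\le a_{\mathrm{supp}}$. This bounds the inner sum by
\[
e^{s'|Y|^{\theta/D}}\sum_{y\in Y}\sum_{X\ni y}e^{a_{\mathrm{supp}}|X|^{\theta/D}}\|\Phi(X)\|_\infty\le |Y|\,J_{a_{\mathrm{supp}}}\,e^{s'|Y|^{\theta/D}},
\]
where the sum over $X$ meeting $Y$ has been dominated by a sum over anchor sites $y\in Y$. The extra factor of $|Y|$ is the loss from counting, and it is absorbed by the weight gap. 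With $t=|Y|^{\theta/D}$, so that $|Y|=t^{D/\theta}$, the elementary maximization
\[
\sup_{t\ge0}t^{p}e^{-(s-s')t}=\Bigl(\frac{p}{e(s-s')}\Bigr)^{p},\qquad p=D/\theta,
\]
gives
\[
|Y|e^{s'|Y|^{\theta/D}}\le\Bigl[\frac{D}{e\theta(s-s')}\Bigr]^{D/\theta}e^{s|Y|^{\theta/D}}.
\]
This is exactly the stated constant.

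Remaining bookkeeping:
\begin{itemize}
\item An empty-support scalar component of $O$ commutes with $H_0$ and contributes nothing.
\item Singleton supports and general $X$ are covered by the definition of $J_{a_{\mathrm{supp}}}$.
\item The inequality $J_{a_{\mathrm{supp}}}\le J_{2a_{\mathrm{supp}}}$ is immediate because the weights are monotone.
\end{itemize}

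The step that needs the most care is absorbing the polynomial factor $|Y|$. If one instead tried to keep the full weight $e^{s|X|^{\theta/D}}$ on the $X$ side, the bound would need $J_{s}$ for $s$ up to $a_{\mathrm{supp}}$. That is still fine, but it would give no gain. The clean constant comes from putting the gap $s-s'$ entirely on the $Y$ factor and using the $a_{\mathrm{supp}}$ weight for $X$.
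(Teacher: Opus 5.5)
Your proposal is correct and is essentially the paper's proof: you expand the commutator over interaction terms meeting each support of $O$, use subadditivity of $|\cdot|^{\theta/D}$ together with $s'\le a_{\mathrm{supp}}$ to bound the weighted inner sum by $|Y|J_{a_{\mathrm{supp}}}e^{s'|Y|^{\theta/D}}$, and absorb the factor $|Y|$ through the same elementary supremum, which gives the identical constant. The only differences are cosmetic: your labels $X$ and $Y$ are swapped relative to the paper, and you leave implicit that terms $[\Phi(X),O_Y]$ with the same union $X\cup Y$ are grouped using the triangle inequality.
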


\begin{proof}
Fix a decomposition $O=\sum_XO_X$.  Only interaction terms meeting $X$ contribute to $[H_0,O_X]$.  By subadditivity of the support weight,
\begin{equation}\label{eq:gm:single-support-commutator-bound}
 \|[H_0,O_X]\|_{s',\mathrm{dec}}
 \le2e^{s'|X|^{\theta/D}}\|O_X\|_\infty
       \sum_{Y:Y\cap X\ne\varnothing}
                e^{s'|Y|^{\theta/D}}\|\Phi(Y)\|_\infty
 \le2J_{a_{\mathrm{supp}}}|X|e^{s'|X|^{\theta/D}}\|O_X\|_\infty.
\end{equation}
Removing cut terms from $H_0$ only reduces the sum.  To absorb the factor $|X|$, use
\begin{equation}\label{eq:gm:support-weight-loss-optimization}
 \sup_{u\ge0}u e^{-(s-s')u^{\theta/D}}=\left[\frac{D}{e\theta(s-s')}\right]^{D/\theta}.
\end{equation}
Writing $e^{s'|X|^{\theta/D}}=e^{s|X|^{\theta/D}}e^{-(s-s')|X|^{\theta/D}}$ and applying this bound with $u=|X|$ gives
\begin{equation}\label{eq:gm:weighted-commutator-summation}
 \begin{aligned}
 \|[H_0,O]\|_{s',\mathrm{dec}}
 &\le\sum_X\|[H_0,O_X]\|_{s',\mathrm{dec}}\\
 &\le2J_{a_{\mathrm{supp}}}\sum_X
 |X|e^{-(s-s')|X|^{\theta/D}}e^{s|X|^{\theta/D}}\|O_X\|_\infty\\
 &\le2J_{a_{\mathrm{supp}}}\left[\frac{D}{e\theta(s-s')}\right]^{D/\theta}
 \sum_Xe^{s|X|^{\theta/D}}\|O_X\|_\infty.
 \end{aligned}
\end{equation}
This holds for every decomposition of $O$.  
Taking the infimum of the final sum gives $\|O\|_{s,\mathrm{dec}}$ by \eqref{eq:gm:decomp-norm}, proving \eqref{eq:gm:weighted-commutator}. 
\end{proof}

\subsection{Moments of the relative modular logarithm}
Recall $K=\log\Delta$ from \eqref{eq:gm:relative-modular}.  Since left and right multiplication commute,
\begin{equation}\label{eq:gm:modular-logarithm-expansion}
 K=L_{\log\sigma}-R_{\log\rho},\qquad
 \log\sigma=-\beta H_0-(\log Z_0)\one,\qquad
 \log\rho=-\beta(H_0+V)-(\log Z)\one.
\end{equation}
Using $[\rho^{1/2},\log\rho]=0$, we obtain, for every operator $O$,
\begin{equation}\label{eq:gm:K-action}
 \begin{aligned}
 K(O\rho^{1/2})
 &= (\log\sigma)O\rho^{1/2}-O\rho^{1/2}\log\rho\\
 &=\bigl((\log\sigma)O-O\log\rho\bigr)\rho^{1/2}\\
 &=\left(-\beta H_0O+\beta O(H_0+V)+(\log Z-\log Z_0)O\right)\rho^{1/2}\\
 &=\left(\beta[O,H_0]+\beta OV+(\log Z-\log Z_0)O\right)\rho^{1/2}.
 \end{aligned}
\end{equation}

The free-energy estimate also gives
$|\log Z-\log Z_0|\le\beta\|V\|_\infty\le\beta\|V\|_{a_{\mathrm{supp}},\mathrm{dec}}$.
Combining Lemma~\ref{lem:gm:commutator-weight} with submultiplicativity and monotonicity of the weighted norms gives, for $0\le s'<s\le a_{\mathrm{supp}}$,
\begin{equation}\label{eq:gm:K-weight-loss}
 \begin{aligned}
 &\left\|\beta[O,H_0]+\beta OV+(\log Z-\log Z_0)O\right\|_{s',\mathrm{dec}}\le\left(2\beta J_{a_{\mathrm{supp}}}\left[\frac{D}{e\theta(s-s')}\right]^{D/\theta}
       +2\beta\|V\|_{a_{\mathrm{supp}},\mathrm{dec}}\right)\|O\|_{s,\mathrm{dec}}.
 \end{aligned}
\end{equation}
The last two terms each contribute at most
$\beta\|V\|_{a_{\mathrm{supp}},\mathrm{dec}}\|O\|_{s,\mathrm{dec}}$, accounting for the factor $2$.

\begin{lemma}[Uniform modular moments and spectral tails]\label{lem:gm:gevrey}
Under \eqref{eq:gm:F-assumption}, there is a constant $C_\beta>0$, depending only on $\beta,D,\mu,\theta,J_{\mu,\theta}$, such that for every integer $n\ge1$,
\begin{equation}\label{eq:gm:modular-moments}
 \|K^n\rho^{1/2}\|_2
 \le\left[C_\beta\left(n^{D/\theta}+|\partial_{\mathrm e}A|\right)\right]^n.
\end{equation}

Consequently, for every $\ell\ge0$,
\begin{equation}\label{eq:gm:modular-tail}
 \langle\rho^{1/2},\one_{\{|K|\ge\ell\}}\rho^{1/2}\rangle
 \le C_\beta
 \exp\!\left[C_\beta(1+|\partial_{\mathrm e}A|)^{\theta/D}-c_\beta\ell^{\theta/D}\right].
\end{equation}

\end{lemma}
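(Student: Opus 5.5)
The plan is to iterate the action formula \eqref{eq:gm:K-action}. Define the map $\mathcal K(O):=\beta[O,H_0]+\beta OV+(\log Z-\log Z_0)O$. Then \eqref{eq:gm:K-action} says $K(O\rho^{1/2})=\mathcal K(O)\rho^{1/2}$, and by induction
\[
K^n\rho^{1/2}=\mathcal K^n(\one)\rho^{1/2}.
\]
Since $\|O\rho^{1/2}\|_2\le\|O\|_\infty\|\rho^{1/2}\|_2=\|O\|_\infty\le\|O\|_{0,\mathrm{dec}}$, it suffices to bound $\|\mathcal K^n(\one)\|_{0,\mathrm{dec}}$.

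\emph{Moment bound.} Split the weight budget $[0,a_{\mathrm{supp}}]$ into $n$ equal steps, $s_k=a_{\mathrm{supp}}(n-k)/n$, and apply \eqref{eq:gm:K-weight-loss} once per step, starting from $\|\one\|_{a_{\mathrm{supp}},\mathrm{dec}}=1$. Each application costs a factor
\[
2\beta J_{a_{\mathrm{supp}}}\left[\frac{Dn}{e\theta a_{\mathrm{supp}}}\right]^{D/\theta}+2\beta\|V\|_{a_{\mathrm{supp}},\mathrm{dec}}.
\]
Using \eqref{eq:gm:V-decomp-bound}, this is at most $C_\beta(n^{D/\theta}+|\partial_{\mathrm e}A|)$. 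The product of the $n$ factors gives \eqref{eq:gm:modular-moments} directly; this is the standard Cauchy--Nirenberg style weight-splitting argument. Note that the even moments $\langle\rho^{1/2},K^{2n}\rho^{1/2}\rangle=\|K^n\rho^{1/2}\|_2^2$ are controlled by the same quantity, because $K$ is self-adjoint.

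\emph{Spectral tail.} By Chebyshev, for each $n\ge1$,
\[
\langle\rho^{1/2},\one_{\{|K|\ge\ell\}}\rho^{1/2}\rangle\le\ell^{-2n}\|K^n\rho^{1/2}\|_2^2\le\left[\frac{C_\beta(n^{D/\theta}+|\partial_{\mathrm e}A|)}{\ell}\right]^{2n}.
\]
We now distinguish two regimes.

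In the first regime, $|\partial_{\mathrm e}A|\le n^{D/\theta}$, so the bracket is at most $2C_\beta n^{D/\theta}/\ell$. Choose $n\approx(\ell/(2eC_\beta))^{\theta/D}$, which makes the base at most $e^{-1}$. The bound becomes $e^{-2n}\le C e^{-c_\beta\ell^{\theta/D}}$. This choice is admissible precisely when $n^{D/\theta}\ge|\partial_{\mathrm e}A|$, that is, when $\ell\ge C'_\beta|\partial_{\mathrm e}A|$.

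In the second regime, $\ell< C'_\beta(1+|\partial_{\mathrm e}A|)$. Here use the trivial bound $1$ and absorb it: since $\ell^{\theta/D}\le C(1+|\partial_{\mathrm e}A|)^{\theta/D}$, choosing the constant $C_\beta$ in front of the boundary term large enough relative to $c_\beta$ makes the right side of \eqref{eq:gm:modular-tail} at least $1$. The case $\ell$ small, where we need $n\ge1$, is handled by enlarging $C_\beta$.

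\emph{Main obstacle.} The delicate point is the first step. One must check that $\mathcal K$ really maps $O\rho^{1/2}$ to $\mathcal K(O)\rho^{1/2}$ with the \emph{same} $\rho^{1/2}$ on the right, so that the iteration stays in the operator picture; this is exactly \eqref{eq:gm:K-action}. One must also check that the $n$-fold weight loss yields $n^{nD/\theta}$ rather than something worse. The equal-step splitting gives $(s-s')^{-D/\theta}=(n/a_{\mathrm{supp}})^{D/\theta}$ per step, hence $(Cn^{D/\theta})^n$ overall, which is the Gevrey order needed. The boundary term enters additively in each factor, which produces the stated form.
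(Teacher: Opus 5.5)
Your proposal is correct and follows the paper's proof essentially step for step. Like the paper, you iterate \eqref{eq:gm:K-action} starting from $\one$, lower the support weight from $a_{\mathrm{supp}}$ to $0$ in $n$ equal steps via \eqref{eq:gm:K-weight-loss} and \eqref{eq:gm:V-decomp-bound}, and then apply Chebyshev with $n\asymp\ell^{\theta/D}$ for large $\ell$, falling back on the trivial bound $1$ otherwise. The only cosmetic difference is in the tail step: you enforce $n^{D/\theta}\ge|\partial_{\mathrm e}A|$ and bound the base by $e^{-1}$, whereas the paper bounds the two contributions to the ratio separately by $1/4$ each.
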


\begin{proof}
To apply \eqref{eq:gm:K-weight-loss} $n$ times, decrease the support weight from $a_{\mathrm{supp}}$ to $0$ in equal steps $a_{\mathrm{supp}}/n$, starting from the identity.  Equation~\eqref{eq:gm:K-action} shows inductively that $K^n\rho^{1/2}=O_n\rho^{1/2}$, where
\begin{equation}\label{eq:gm:iterated-weight-loss-bound}
 \|O_n\|_{0,\mathrm{dec}}
 \le\left[2\beta J_{a_{\mathrm{supp}}}\left(\frac{Dn}{e\theta a_{\mathrm{supp}}}\right)^{D/\theta}
              +2\beta\|V\|_{a_{\mathrm{supp}},\mathrm{dec}}\right]^n.
\end{equation}
Since $\|O_n\rho^{1/2}\|_2\le\|O_n\|_\infty\le\|O_n\|_{0,\mathrm{dec}}$, use \eqref{eq:gm:V-decomp-bound} to prove \eqref{eq:gm:modular-moments}.

For $\ell>0$ and every integer $n\ge1$, we have $\one_{\{|K|\ge\ell\}}\le\ell^{-2n}|K|^{2n}.$
Taking its expectation in $\rho^{1/2}$ and applying \eqref{eq:gm:modular-moments} gives
\begin{equation}\label{eq:gm:tail-moment-choice}
 \begin{aligned}
 \langle\rho^{1/2},\one_{\{|K|\ge\ell\}}\rho^{1/2}\rangle
 &\le\ell^{-2n}\langle\rho^{1/2},|K|^{2n}\rho^{1/2}\rangle\\
 &=\ell^{-2n}\|K^n\rho^{1/2}\|_2^2\\
 &\le\left[\frac{C_\beta(n^{D/\theta}+|\partial_{\mathrm e}A|)}{\ell}\right]^{2n}.
 \end{aligned}
\end{equation}

We first consider $\ell$ large enough so that the moment order can be chosen proportional to $\ell^{\theta/D}$ while keeping the ratio in \eqref{eq:gm:tail-moment-choice} below one.
Specifically, for $\ell\ge4C_\beta |\partial_{\mathrm e}A|$ and $\ell\ge4C_\beta 2^{D/\theta}$, choose
$n=\lfloor(\ell/(4C_\beta))^{\theta/D}\rfloor$.  The two contributions to the ratio in \eqref{eq:gm:tail-moment-choice} are each at most $1/4$.  The second threshold also ensures $n\ge\tfrac12(\ell/(4C_\beta))^{\theta/D}\ge1$.  Hence the tail is at most $2^{-2n}\le e^{-c_\beta\ell^{\theta/D}}$, with $c_\beta=(\log2)(4C_\beta)^{-\theta/D}$.
If either threshold fails, then
\begin{equation}\label{eq:gm:modular-tail-small-cutoff}
 \ell^{\theta/D}
 \le(4C_\beta)^{\theta/D}\max\{|\partial_{\mathrm e}A|^{\theta/D},2\}
 \le2(4C_\beta)^{\theta/D}(1+|\partial_{\mathrm e}A|)^{\theta/D}.
\end{equation}
In this remaining range, the spectral probability is at most one, so \eqref{eq:gm:modular-tail} is satisfied after enlarging the coefficient of $(1+|\partial_{\mathrm e}A|)^{\theta/D}$ to make its right-hand side at least one.
Combining the two ranges and enlarging $C_\beta$ proves \eqref{eq:gm:modular-tail} uniformly in all finite volumes.
\end{proof}

\begin{lemma}[Small-resolvent tail]\label{lem:gm:small-quasilocal}
Under the same hypotheses, for $\ell>0$,
\begin{equation}\label{eq:gm:T-bound}
 \langle\rho^{1/2},\log(1+e^{-\ell-K})\rho^{1/2}\rangle
 \le C_\beta
 \exp\!\left[C_\beta(1+|\partial_{\mathrm e}A|)^{\theta/D}-c_\beta\ell^{\theta/D}\right].
\end{equation}

\end{lemma}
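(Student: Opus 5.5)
We bound the quantity spectrally, using the decomposition $K=\sum_j k_jP_j$ and the spectral weights $w_j=\langle\rho^{1/2},P_j\rho^{1/2}\rangle\ge0$, which satisfy $\sum_jw_j=\|\rho^{1/2}\|_2^2=1$. Then
\[
 \langle\rho^{1/2},\log(1+e^{-\ell-K})\rho^{1/2}\rangle=\sum_j w_j\log(1+e^{-\ell-k_j}).
\]
We split this sum according to whether $k_j>-\ell/2$ or $k_j\le-\ell/2$, and bound the two pieces separately.

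\emph{Bulk region $k_j>-\ell/2$.} Here $\log(1+e^{-\ell-k_j})\le e^{-\ell-k_j}\le e^{-\ell/2}$. Since the weights sum to one, this part contributes at most $e^{-\ell/2}$. For $\ell\ge1$ we have $\ell\ge\ell^{\theta/D}$, because $\theta/D\le1/2<1$. Hence $e^{-\ell/2}\le e^{-c\ell^{\theta/D}}$ with $c\le1/2$, which is of the required form.

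\emph{Tail region $k_j\le-\ell/2$.} Here the logarithm is unbounded, so we use $\log(1+e^{x})\le\log2+x_+$. This gives $\log(1+e^{-\ell-k_j})\le\log2+|k_j|$. We then apply Cauchy--Schwarz to the weights:
\[
 \sum_{j:\,k_j\le-\ell/2}w_j(\log2+|k_j|)
 \le\Bigl(\langle\rho^{1/2},\one_{\{|K|\ge\ell/2\}}\rho^{1/2}\rangle\Bigr)^{1/2}
 \Bigl(\langle\rho^{1/2},(\log2+|K|)^2\rho^{1/2}\rangle\Bigr)^{1/2}.
\]
The first factor is controlled by the spectral-tail bound \eqref{eq:gm:modular-tail} at cutoff $\ell/2$. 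The square root only halves the constants $C_\beta,c_\beta$, which is allowed. For the second factor we use $(\log2+|K|)^2\le2(\log2)^2+2K^2$ and then the moment bound \eqref{eq:gm:modular-moments} with $n=1$, namely $\|K\rho^{1/2}\|_2\le C_\beta(1+|\partial_{\mathrm e}A|)$. So the second factor is at most $C_\beta(1+|\partial_{\mathrm e}A|)$.

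\emph{Combining.} The one genuine obstacle is the polynomial prefactor $(1+|\partial_{\mathrm e}A|)$, which cannot simply be absorbed into $C_\beta$ because of the exponent $\theta/D$. We bound it as
\[
 1+|\partial_{\mathrm e}A|\le\exp\!\bigl[(D/\theta)\log(1+|\partial_{\mathrm e}A|)\bigr]\le C\exp\!\bigl[(1+|\partial_{\mathrm e}A|)^{\theta/D}\bigr],
\]
which holds since $\log u\le C_\epsilon u^{\epsilon}$ for every fixed $\epsilon>0$. The extra factor is then absorbed by enlarging the coefficient $C_\beta$ of $(1+|\partial_{\mathrm e}A|)^{\theta/D}$ in the exponent.

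\emph{Small $\ell$.} For $0<\ell<1$ the bulk estimate above does not directly give the stretched-exponential form, so we bound the whole quantity instead. Using $\log(1+e^{-\ell-K})\le\log2+|K|$ and the first moment bound, the quantity is at most $\log2+C_\beta(1+|\partial_{\mathrm e}A|)$. This is again dominated by the right-hand side of \eqref{eq:gm:T-bound} after enlarging constants, since $c_\beta\ell^{\theta/D}\le c_\beta$ in this range.

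Collecting the bulk, tail and small-$\ell$ cases and enlarging $C_\beta$ and shrinking $c_\beta$ as needed yields \eqref{eq:gm:T-bound}, uniformly in the finite volume.
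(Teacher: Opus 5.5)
Your argument is correct and follows the paper's proof. It uses the same scalar split of $\log(1+e^{-\ell-u})$ at $u=-\ell/2$: the bulk contributes $e^{-\ell/2}$, and the negative tail is controlled through Lemma~\ref{lem:gm:gevrey}. The only variation is in the tail first moment. You bound it by Cauchy--Schwarz against the $n=1$ moment $\|K\rho^{1/2}\|_2\le C_\beta(1+|\partial_{\mathrm e}A|)$, whereas the paper uses the layer-cake identity \eqref{eq:gm:spectral-first-tail-moment} and integrates the stretched-exponential tail. Your route costs a polynomial boundary factor, which you correctly absorb into $\exp[C_\beta(1+|\partial_{\mathrm e}A|)^{\theta/D}]$.
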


\begin{proof}
For real $u$,
\begin{equation}\label{eq:gm:scalar-log-tail}
 \log(1+e^{-\ell-u})
 \le e^{-\ell/2}
      +(\log2+|u|)\one_{\{u<-\ell/2\}}.
\end{equation}

On $u\ge-\ell/2$ this follows from $\log(1+w)\le w$; on the complementary set use $\log(1+e^v)\le\log2+v_+$.  For $b\ge0$, spectral calculus gives
\begin{equation}\label{eq:gm:spectral-first-tail-moment}
 \begin{aligned}
 &\langle\rho^{1/2},|K|\one_{\{|K|>b\}}\rho^{1/2}\rangle=b\langle\rho^{1/2},\one_{\{|K|>b\}}\rho^{1/2}\rangle
 +\int_b^\infty\langle\rho^{1/2},\one_{\{|K|>s\}}\rho^{1/2}\rangle\,ds.
 \end{aligned}
\end{equation}
Combining these inequalities with \eqref{eq:gm:modular-tail} at $b=\ell/2$ gives

\begin{equation}\label{eq:gm:stretched-small-resolvent-combination}
 \begin{aligned}
 \langle\rho^{1/2},\log(1+e^{-\ell-K})\rho^{1/2}\rangle \le e^{-\ell/2}
 +C_\beta e^{C_\beta(1+|\partial_{\mathrm e}A|)^{\theta/D}}
 \left[(1+\ell)e^{-c_\beta(\ell/2)^{\theta/D}}
 +\int_{\ell/2}^\infty e^{-c_\beta s^{\theta/D}}\,ds\right].
 \end{aligned}
\end{equation}
For the integral, use
\begin{equation}\label{eq:gm:stretched-tail-integral-bound}
 \begin{aligned}
 \int_b^\infty e^{-c_\beta s^{\theta/D}}\,ds
 &\le e^{-(c_\beta/2)b^{\theta/D}}
       \int_0^\infty e^{-(c_\beta/2)s^{\theta/D}}\,ds\\
 &\le C_\beta e^{-(c_\beta/2)b^{\theta/D}}.
 \end{aligned}
\end{equation}
The factor $1+\ell$ is absorbed by decreasing the decay constant.  The term $e^{-\ell/2}$ satisfies the same stretched-exponential bound because $\theta/D\le1$, with small $\ell$ covered by enlarging $C_\beta$.  This proves \eqref{eq:gm:T-bound}.
\end{proof}

\subsection{A Gibbs-specific continuity estimate}

A trace-distance continuity bound for conditional entropy introduces a binary-entropy term and therefore a small-error logarithm \cite{Winter_2016}.  For states connected by a bounded Hamiltonian perturbation at fixed temperature, the following direct estimate avoids that loss.  It holds for an arbitrary perturbation, not only a local one.

\begin{lemma}[CMI along a Gibbs interpolation]\label{lem:gm:gibbs-cmi-lipschitz}
Let $H,W$ be Hermitian operators on $ABC$, put $d_A=\dim\mathcal H_A$, and let
\begin{equation}\label{eq:gm:gibbs-interpolation-path}
 \rho_s=\frac{e^{-\beta(H+sW)}}{\Tr e^{-\beta(H+sW)}},\qquad 0\le s\le1.
\end{equation}
Then
\begin{equation}\label{eq:gm:gibbs-cmi-lipschitz}
 \bigl|I(A:C\mid B)_{\rho_1}-I(A:C\mid B)_{\rho_0}\bigr|
 \le 6\beta(1+\log d_A)\|W\|_\infty.
\end{equation}
There is no dependence on the Hilbert-space dimensions of $B$ or $C$.
\end{lemma}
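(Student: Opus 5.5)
The plan is to differentiate along the interpolation and bound the derivative pointwise in $s$, never passing through trace distance. First, write
\[
I(A:C\mid B)=S(A\mid B)-S(A\mid BC),\qquad
S(A\mid X)_\omega=-D(\omega_{AX}\|\one_A\otimes\omega_X),\quad X\in\{B,BC\}.
\]
Then \eqref{eq:gm:gibbs-cmi-lipschitz} follows once we show
\[
\Bigl|\tfrac{d}{ds}S(A\mid X)_{\rho_s}\Bigr|\le 3\beta(1+\log d_A)\|W\|_\infty
\]
for each of the two choices of $X$, and integrate over $s\in[0,1]$. This is where the constant $6=2\times3$ should come from.

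\emph{Derivative formula.} Since $\Tr\rho_s'=0$, the $\Tr(\omega\,\omega^{-1}\omega')$ terms drop out. This gives
\[
\tfrac{d}{ds}S(A\mid X)_{\rho_s}=-\Tr\bigl(\rho_s'\,L_s\bigr),\qquad
L_s=\bigl(\log\rho_{s,AX}-\one_A\otimes\log\rho_{s,X}\bigr)\otimes\one_{\text{rest}}.
\]
Duhamel's formula for the Gibbs family gives
\[
\rho_s'=-\beta\int_0^1\rho_s^{u}\,\widetilde W\,\rho_s^{1-u}\,du,\qquad
\widetilde W=W-\Tr(\rho_sW)\one,
\]
with $\|\widetilde W\|_\infty\le2\|W\|_\infty$. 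The derivative therefore equals
\[
\beta\int_0^1\Tr\bigl(\widetilde W\,\rho_s^{1-u}L_s\rho_s^{u}\bigr)\,du,
\]
and it suffices to show
\[
\int_0^1\|\rho_s^{1-u}L_s\rho_s^{u}\|_1\,du\le C(1+\log d_A),
\]
with $C$ free of $d_B$ and $d_C$.

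\emph{Controlling $L_s$.} Two facts hold. First, the operator inequality $\rho_{AX}\le d_A\,\one_A\otimes\rho_X$ (the pinching/twirling bound) and operator monotonicity of $\log$ give $L_s\le(\log d_A)\one$. Second, $\Tr(\rho_sL_s)=-S(A\mid X)\in[-\log d_A,\log d_A]$. Split $L_s=\log d_A\,\one-P_s$ with $P_s\ge0$. Then
\[
\Tr(\rho_sP_s)=\log d_A+S(A\mid X)\le2\log d_A .
\]
The identity part contributes at most $\log d_A$ to the integral. It remains to bound $\int_0^1\|\rho^{1-u}P\rho^{u}\|_1\,du$ by a constant multiple of $\Tr(\rho P)$ (plus $O(1)$).

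\emph{Main obstacle.} The hard step is this trace-norm estimate for positive $P$ that does not commute with $\rho$. I would first try factoring $\rho^{1-u}P\rho^u=(\rho^{1-u}P^{1/2})(P^{1/2}\rho^{u})$ and applying Hölder with exponents $1/(1-u)$ and $1/u$, followed by Araki--Lieb--Thirring. The hope is to bound each factor by a power of $\Tr(\rho P)$ and obtain $\|\rho^{1-u}P\rho^u\|_1\le\Tr(\rho P)$.

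If that route fails, because the Hölder exponents do not match ALT in the right direction, the fallback is to avoid the trace norm altogether. Since $\Tr(\widetilde W\rho^{1-u}P\rho^u)$ is a Kubo--Mori-type inner product, I would apply Cauchy--Schwarz in the Kubo--Mori inner product (or in the symmetric one after averaging $u\leftrightarrow1-u$). This bounds the integrated term by
\[
\|\widetilde W\|_\infty\Bigl(\int_0^1\Tr(\rho^{u}P\rho^{1-u}P)\,du\Bigr)^{1/2}.
\]
That quantity in turn would have to be controlled using $P\ge0$ together with $\Tr\rho P\le2\log d_A$. I expect the final constant bookkeeping to yield $3(1+\log d_A)$ per conditional entropy, and hence the stated $6\beta(1+\log d_A)\|W\|_\infty$.
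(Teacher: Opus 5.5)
Your reduction is correct: the conditional-entropy split, the derivative formula, Duhamel's formula, and $\rho_{AX}\le d_A\one_A\otimes\rho_X$ all hold. But the argument stops at the step you flag, and neither of your two routes closes it.

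The H\"older/Araki--Lieb--Thirring route fails. The hoped-for bound $\|\rho^{1-u}P\rho^u\|_1\le\Tr(\rho P)$ is false for $u\ne\frac12$; it already fails at $u=0$ for a rank-one $P$ not commuting with $\rho$. Moreover, ALT applied to $\|\rho^{1-u}P^{1/2}\|_{1/(1-u)}$ produces $\Tr(\rho P^{p})$ with $p=1/(2(1-u))$, which is unbounded as $u\to1$.

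The Kubo--Mori Cauchy--Schwarz fallback is the right move, and it is what the paper does. But it leaves you needing the second moment $\Tr(\rho_{AX}P^2)$, and this is \emph{not} controlled by $P\ge0$ together with $\Tr(\rho P)\le2\log d_A$. For $P=\lambda|v\rangle\langle v|$ with $\langle v|\rho|v\rangle=\lambda^{-1}$, the first moment is $1$ while the second is $\lambda$. The operator inequality $L\le(\log d_A)\one$ controls only one tail. Large values of $P$ come from the other tail, where $\log\rho_{AX}$ lies far below $\log\rho_X$, and nothing in your proposal bounds that tail.

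What is missing is a two-sided exponential-moment bound on the conditional surprisal that works without commutativity. Put $\omega=\rho_{s,AX}$, $\tau=d_A^{-1}\one_A\otimes\omega_X$, and $R_{A\mid X}=\log\omega-\log\tau=L+(\log d_A)\one$.
\begin{itemize}
\item Your pinching bound gives $\omega\le d_A^2\tau$, hence $\Tr(\omega^2\tau^{-1})\le d_A^2$. Log-convexity of $\alpha\mapsto\Tr(\omega^{1+\alpha}\tau^{-\alpha})$ then gives $\Tr(\omega^{1+\alpha}\tau^{-\alpha})\le d_A^{2\alpha}$ for $0\le\alpha\le1$.
\item H\"older gives $\Tr(\omega^{1-\alpha}\tau^{\alpha})\le1$. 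This is the bound that tames large values of $P$.
\item Diagonalize $\omega=\sum_ip_i|i\rangle\langle i|$ and $\tau=\sum_jq_j|j\rangle\langle j|$. The scalar variable $u_{ij}=\log p_i-\log q_j$, under the probabilities $p_i|\langle i|j\rangle|^2$, has second moment exactly $\Tr(\omega R_{A\mid X}^2)$. Its exponential moments are exactly the two traces above.
\item The inequality $u^2\le\alpha^{-2}(e^{\alpha u}+e^{-\alpha u}-2)$ with $\alpha=(1+2\log d_A)^{-1}$ then yields $\Tr(\omega R_{A\mid X}^2)\le(e-1)(1+2\log d_A)^2$.
\end{itemize}

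To finish, note that $\langle\widetilde W,\one\rangle_{\mathrm{BKM},\rho_s}=\Tr(\rho_s\widetilde W)=0$, so you may replace $L$ by $R_{A\mid X}$ in the Kubo--Mori pairing. Cauchy--Schwarz with $\langle\widetilde W,\widetilde W\rangle_{\mathrm{BKM},\rho_s}\le\Tr(\rho_s\widetilde W^2)\le\|W\|_\infty^2$ gives $\bigl|\frac{d}{ds}S(A\mid X)_{\rho_s}\bigr|\le\sqrt{e-1}\,(1+2\log d_A)\beta\|W\|_\infty\le3\beta(1+\log d_A)\|W\|_\infty$, which is your target. Using $\|\widetilde W\|_\infty\le2\|W\|_\infty$ instead, as you propose, loses a factor $2$ and no longer gives the constant $6$.
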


\begin{proof}
We first bound a conditional logarithm for a faithful state $\omega_{AY}$.  Put
\begin{equation}\label{eq:gm:conditional-logarithm-reference}
 \tau=\frac{\one_A}{d_A}\otimes\omega_Y,
 \qquad R_{A\mid Y}=\log\omega_{AY}-\log\tau.
\end{equation}
For any orthonormal basis $\{|i\rangle\}_{i=1}^{d_A}$ of $A$, define the pinching map
\begin{equation}\label{eq:gm:conditional-pinching-map}
 \mathcal P_A(O)=\sum_{i=1}^{d_A}(P_i\otimes\one_Y)O(P_i\otimes\one_Y),
 \qquad P_i=|i\rangle\langle i|.
\end{equation}
The pinching inequality gives $\omega_{AY}\le d_A\mathcal P_A(\omega_{AY})$.
Every diagonal block of $\mathcal P_A(\omega_{AY})$ is bounded above by $\omega_Y$;
thus $\omega_{AY}\le d_A\one_A\otimes\omega_Y=d_A^2\tau$.
Inversion reverses the operator order, giving $\tau^{-1}\le d_A^2\omega_{AY}^{-1}$; multiplying on both sides by $\omega_{AY}$ and taking the trace yields $\Tr(\omega_{AY}^2\tau^{-1})\le d_A^2\Tr\omega_{AY}=d_A^2$.
We claim that
\begin{equation}\label{eq:gm:conditional-surprisal-mgf}
 \Tr(\omega_{AY}^{1+\alpha}\tau^{-\alpha})\le d_A^{2\alpha},
 \qquad
 \Tr(\omega_{AY}^{1-\alpha}\tau^\alpha)\le1,
 \qquad 0\le\alpha\le1,
\end{equation}
For the first inequality, the function $\alpha\mapsto\Tr(\omega_{AY}^{1+\alpha}\tau^{-\alpha})$ is log-convex: in eigenbases of $\omega_{AY}$ and $\tau$, it is a positive weighted sum of exponentials in $\alpha$.  Interpolation between $\alpha=0$ and $\alpha=1$ therefore gives
\begin{equation}\label{eq:gm:conditional-moment-interpolation}
 \begin{aligned}
 \Tr(\omega_{AY}^{1+\alpha}\tau^{-\alpha})
 &\le (\Tr\omega_{AY})^{1-\alpha}
       [\Tr(\omega_{AY}^2\tau^{-1})]^\alpha\\
 &\le d_A^{2\alpha}.
 \end{aligned}
\end{equation}
For the second inequality, apply Schatten H\"older with conjugate exponents $1/(1-\alpha)$ and $1/\alpha$, for $0<\alpha<1$:
\begin{equation}\label{eq:gm:conditional-moment-holder}
 \Tr(\omega_{AY}^{1-\alpha}\tau^\alpha)
 \le (\Tr\omega_{AY})^{1-\alpha}(\Tr\tau)^\alpha=1.
\end{equation}
At $\alpha=0,1$, this inequality is an equality by normalization.

To use these estimates without assuming commutativity, diagonalize
$\omega_{AY}=\sum_i p_i|i\rangle\langle i|$ and
$\tau=\sum_j q_j|j\rangle\langle j|$.
The probabilities $p_i|\langle i|j\rangle|^2$ define a scalar random variable
$u_{ij}=\log p_i-\log q_j$.  Its second moment equals
$\Tr(\omega_{AY}R_{A\mid Y}^2)$, and its exponential moments are exactly the traces in \eqref{eq:gm:conditional-surprisal-mgf}.
The scalar inequality
$u^2\le\alpha^{-2}(e^{\alpha u}+e^{-\alpha u}-2)$ gives
\begin{equation}\label{eq:gm:conditional-log-second-moment-alpha}
 \Tr(\omega_{AY}R_{A\mid Y}^2)
 \le\frac{d_A^{2\alpha}-1}{\alpha^2}.
\end{equation}
Choosing $\alpha=(1+2\log d_A)^{-1}$ yields
\begin{equation}\label{eq:gm:conditional-log-variance}
 \Tr(\omega_{AY}R_{A\mid Y}^2)
 \le(e-1)(1+2\log d_A)^2.
\end{equation}

Define, for a faithful tripartite state, the logarithmic Markov defect
\begin{equation}\label{eq:gm:markov-log-defect}
 K_{\mathrm M}(\omega)
 =\log\omega-\log\omega_{AB}-\log\omega_{BC}+\log\omega_B.
\end{equation}
It is the difference of $R_{A\mid BC}$ and the embedded $R_{A\mid B}$.  The triangle inequality in the state-weighted Hilbert--Schmidt norm, followed by \eqref{eq:gm:conditional-log-variance} for the state and its $AB$ marginal, gives
\begin{equation}\label{eq:gm:markov-log-second-moment}
 \Tr\!\left[\omega K_{\mathrm M}(\omega)^2\right]
 \le4(e-1)(1+2\log d_A)^2.
\end{equation}
Here $K_{\mathrm M}$ is an operator on the physical Hilbert space and is distinct from the relative modular superoperator $K$ in \eqref{eq:gm:relative-modular}.

For Hermitian $X,Y$, write the Bogoliubov--Kubo--Mori inner product as
\begin{equation}\label{eq:gm:bkm-inner-product}
 \langle X,Y\rangle_{\mathrm{BKM},\rho}
 =\int_0^1\Tr(\rho^uX\rho^{1-u}Y)\,du.
\end{equation}
In an eigenbasis of $\rho$, the logarithmic mean is bounded by the arithmetic mean; hence
$\langle X,X\rangle_{\mathrm{BKM},\rho}\le\Tr(\rho X^2)$.
Differentiating the normalized Gibbs exponential by Duhamel's formula, \eqref{eq:gm:duhamel-normalized-gibbs}, gives
\begin{equation}\label{eq:gm:normalized-gibbs-derivative}
 \rho_s'=-\beta\int_0^1
 \rho_s^u\bigl(W-\Tr(\rho_sW)\one\bigr)\rho_s^{1-u}\,du.
\end{equation}
Since every marginal derivative has trace zero, differentiation of the four entropies gives
\begin{equation}\label{eq:gm:cmi-entropy-derivative}
 \frac{d}{ds}I(A:C\mid B)_{\rho_s}
 =\Tr\!\left[\rho_s'\bigl(\log\rho_s-\log\rho_{s,AB}
                  -\log\rho_{s,BC}+\log\rho_{s,B}\bigr)\right]
 =\Tr\!\left[\rho_s'K_{\mathrm M}(\rho_s)\right],
\end{equation}
where marginal logarithms are embedded by tensoring with the identity.
Substituting the Gibbs derivative yields
\begin{equation}\label{eq:gm:gibbs-cmi-derivative}
 \frac{d}{ds}I(A:C\mid B)_{\rho_s}
 =-\beta\left\langle W-\Tr(\rho_sW)\one,
                         K_{\mathrm M}(\rho_s)\right\rangle_{\mathrm{BKM},\rho_s}.
\end{equation}
Cauchy--Schwarz, \eqref{eq:gm:markov-log-second-moment}, and
$\Tr\rho_s(W-\Tr(\rho_sW)\one)^2\le\|W\|_\infty^2$ imply
\begin{equation}\label{eq:gm:gibbs-cmi-derivative-bound}
 \left|\frac{d}{ds}I(A:C\mid B)_{\rho_s}\right|
 \le2\sqrt{e-1}\,\beta(1+2\log d_A)\|W\|_\infty
 \le6\beta(1+\log d_A)\|W\|_\infty.
\end{equation}
Integration over $s\in[0,1]$ proves the lemma.
\end{proof}

\subsection{Modular leakage after local truncation}

We use the cut Gibbs pair and the general leakage bounds \eqref{eq:gm:duhamel-leakage} and \eqref{eq:gm:elementary-leakage} from Section~\ref{sec:gm:local-proof}.

\begin{lemma}[Weighted integrated modular leakage after local truncation]\label{lem:gm:powerlaw-collar-leakage}
Let $\Phi$ be an interaction on $\Z^D$ with
\begin{equation}\label{eq:gm:truncated-leakage-interaction-moments}
 J_0:=\sup_x\sum_{X\ni x}\|\Phi(X)\|_\infty<\infty,
 \qquad
 J_1:=\sup_x\sum_{X\ni x}|X|\|\Phi(X)\|_\infty<\infty.
\end{equation}
For $r\ge24$, choose an integer $1\le L\le r/24$ and define
\begin{equation}\label{eq:gm:local-truncation-definition}
 \Omega=\{x\in\Lambda:d(x,A)\le\lfloor r/3\rfloor\},
 \qquad
 H^{(L)}=H_\Lambda-
 \sum_{\substack{X\subseteq\Lambda\\X\cap\Omega\ne\varnothing\\\diam X>L}}\Phi(X).
\end{equation}
Let $V^{(L)}$ be the sum of retained interactions crossing $A\mid A^c$, and define
\begin{equation}\label{eq:gm:truncated-cut-gibbs-pair}
 H_0^{(L)}=H^{(L)}-V^{(L)},\qquad
 \rho^{(L)}=\frac{e^{-\beta H^{(L)}}}{\Tr e^{-\beta H^{(L)}}},\qquad
 \sigma^{(L)}=\frac{e^{-\beta H_0^{(L)}}}{\Tr e^{-\beta H_0^{(L)}}}.
\end{equation}
Let $\mathfrak q_{AB}^{(L)}$ denote the integrated modular leakage in \eqref{eq:gm:q-def} evaluated with $R=AB$, $\rho=\rho^{(L)}$, and $\sigma=\sigma^{(L)}$.  Suppose $w>0$ satisfies the weighted interaction bound
\begin{equation}\label{eq:gm:truncated-weighted-propagation-assumption}
 \sup_x\sum_y e^{w d(x,y)}
 \sum_{\substack{X\ni x,y\\X\subseteq\Omega\\\diam X\le L}}
 \|\Phi(X)\|_\infty\le J_*
\end{equation}
for a fixed constant $J_*>0$.  Then
\begin{equation}\label{eq:gm:powerlaw-collar-q}
 \mathfrak q_{AB}^{(L)}
 \le C_\beta|A|^2(1+r)^D e^{-c_\beta w r}.
\end{equation}
The constants depend only on $\beta,D,J_0,J_1,J_*$, and are independent of $w,L,\Lambda,A,B,C$.  In particular, $w$ may depend on $L$ provided the same $J_*$ bounds the weighted sum.  The choice $w=1/L$ always satisfies the hypothesis with $J_*=eJ_1$.
\end{lemma}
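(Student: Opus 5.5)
The plan is to rerun the proof of Lemma~\ref{lem:gm:local-leakage} for the truncated cut Gibbs pair $(\rho^{(L)},\sigma^{(L)})$, replacing the finite-range Lieb--Robinson input \eqref{eq:gm:local-LR} by a comparison with a dynamics confined to the collar $\Omega$. Since $\sigma^{(L)it}\rho^{(L)-it}=e^{-i\beta tH_0^{(L)}}e^{i\beta t(H_0^{(L)}+V^{(L)})}$, the Duhamel leakage bound \eqref{eq:gm:duhamel-leakage} and the elementary bound \eqref{eq:gm:elementary-leakage} apply verbatim with $H_0,V$ replaced by $H_0^{(L)},V^{(L)}$. Every term of $V^{(L)}$ meets $A\subseteq\Omega$, so it survives the truncation only if $\diam X\le L$. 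Hence $\|V^{(L)}\|_\infty\le J_0|A|$, and each such support lies within distance $L\le r/24$ of $A$.

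\emph{Confined dynamics.} Let $H_\Omega$ be the sum of the terms $\Phi(X)$ of $H_0^{(L)}$ with $X\subseteq\Omega$; all of them have $\diam X\le L$. Write $\tau_u$ for the evolution generated by $H_0^{(L)}$ and $\tau^\Omega_u$ for the one generated by $H_\Omega$. Because $\lfloor r/3\rfloor<r$, the collar $\Omega$ is disjoint from $C$. Thus $\tau^\Omega_u(\Phi(X))$ is supported in $\Omega\subseteq AB$ and is fixed by $\mathsf E_{AB}$. Contractivity of $\mathsf E_{AB}$ then gives
\[
 \|(\mathrm{id}-\mathsf E_{AB})\tau_u(\Phi(X))\|_\infty\le2\|\tau_u(\Phi(X))-\tau^\Omega_u(\Phi(X))\|_\infty .
\]
By Duhamel, this difference is bounded by $\int_0^{|u|}\|[H_0^{(L)}-H_\Omega,\tau^\Omega_{s}(\Phi(X))]\|_\infty\,ds$. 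Terms of $H_0^{(L)}-H_\Omega$ disjoint from $\Omega$ commute with $\tau^\Omega_s(\Phi(X))$. Every surviving term meets $\Omega$ without lying inside it, so it has $\diam\le L$ and contains a site $y\in\Omega$ with $d(y,A)\ge\lfloor r/3\rfloor-L$. Since $\diam X\le L$ as well, such $y$ satisfy $d(y,X)\ge r/3-2L-1\ge r/4$ for $r\ge24$.

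\emph{Weighted Lieb--Robinson bound.} Hypothesis \eqref{eq:gm:truncated-weighted-propagation-assumption} is exactly the $e^{wd}$-weighted summability for $H_\Omega$. The standard Lieb--Robinson bound \cite{NachtergaeleSimsYoung2019} therefore gives, for an observable supported on $X$ and an operator at site $y$,
\[
 \|[O_y,\tau^\Omega_s(\Phi(X))]\|_\infty\le2\|O_y\|_\infty\|\Phi(X)\|_\infty\sum_{x\in X}e^{2J_*|s|-w\,d(x,y)} .
\]
The rate $v=2J_*$ does not depend on $w$ or $L$. Each boundary term $\Phi(Y)$ is handled by summing over its sites $y\in Y\cap\Omega$; the factor $|Y|$ this produces is what $J_1$ controls, so the total boundary interaction per site is $\le J_1$. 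There are at most $|\Omega|\le|A|(2r+1)^D$ such sites, and $|X|\le(2L+1)^D$ is absorbed by lowering $w$ slightly in the exponent. Summing over the cut terms, whose total norm is $\le J_0|A|$, and inserting the result into \eqref{eq:gm:duhamel-leakage} gives
\[
 \eta_{AB}(t)\le\min\bigl\{2,\ C\beta|A|^2(1+r)^D|t|\,e^{v\beta|t|-wr/4}\bigr\}.
\]

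\emph{Modular-time integration.} This step is identical to Step~3 of Lemma~\ref{lem:gm:local-leakage}. Take $T=(wr/4)/(\pi+\beta v)$. If $T\ge1$, the kernel tail \eqref{eq:gm:kernel-tail} yields $\mathfrak q^{(L)}_{AB}\le C(1+\beta|A|^2(1+r)^D)e^{-\pi T}$. If $T<1$, \eqref{eq:gm:elementary-leakage} gives $\mathfrak q^{(L)}_{AB}\le\beta J_0|A|/2\le C e^{\pi}e^{-\pi T}|A|$. Both cases give \eqref{eq:gm:powerlaw-collar-q} with $c_\beta=\pi/(4(\pi+2\beta J_*))$. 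For the final remark, take $w=1/L$. Then $e^{d(x,y)/L}\le e|X|/\dots$ fails in general, but $d(x,y)\le\diam X\le L$ gives $e^{wd}\le e$. Summing over $y\in X$ produces the factor $|X|$, so the bound holds with $J_*=eJ_1$.

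The main obstacle is the Lieb--Robinson step. We need a bound whose velocity depends only on $J_*$ while the decay is $e^{-wd}$ with $w$ possibly as small as $1/L$. We must also count boundary terms of unbounded body size without picking up volume factors beyond $|A|(1+r)^D$. I would first try the Nachtergaele--Sims--Young $F$-function form with $F_w(d)=e^{-wd}(1+d)^{-D-1}$. In that form the extra polynomial convolution constant is uniform in $w\le1$, and the $|X|$ and $|Y|$ site sums are paid by $J_1$ and by a polynomial factor in $r$.
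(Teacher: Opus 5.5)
Your route is essentially the paper's. You confine the dynamics to the terms lying inside $\Omega$ (the paper's $H_{\rm in}$) and compare by Duhamel, so only the interface terms $W_\Omega$ contribute. You control those with the row-sum Lieb--Robinson iteration at velocity $v=2J_*$, and split the modular-time integral at $T\propto wr/(\pi+\beta v)$. You localize each $\tau_u(\Phi(X))$ and feed it into \eqref{eq:gm:duhamel-leakage}, whereas the paper builds the confined unitary $\widetilde U(t)$; this is an equivalent bookkeeping choice.

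\emph{One step fails as written.} You cannot absorb $|X|\le(2L+1)^D$ by lowering $w$ with constants independent of $w$ and $L$. For the advertised choice $w=1/L$ with $L\asymp r/24$, one has $wr=O(1)$ while $(2L+1)^D\asymp r^D$. Your bound would then carry $(1+r)^{2D}$ instead of $(1+r)^D$, so the claimed uniformity in $w,L$ is lost.

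The fix is to pay for $|X|$ with the $J_1$ hypothesis, as the paper does:
$\sum_{X\text{ in }V^{(L)}}|X|\,\|\Phi(X)\|_\infty\le J_1|A|$.
Your closing paragraph hints at this, but the main argument uses the cut-term total $J_0|A|$ instead. The boundary terms need no $|Y|$ control at all. Summing your site bound over $y\in Y\cap\Omega$ and then over $Y$ gives
$\sum_Y\|\Phi(Y)\|_\infty|Y\cap\Omega|\le J_0|\Omega|$.
The product $J_1|A|\cdot J_0|\Omega|$ then yields exactly $|A|^2(1+r)^D$.

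Two minor points:
\begin{itemize}
\item Your separation estimate is off by one: $\lfloor r/3\rfloor-2L\ge r/4-1$, not $r/4$. Using $\ge r/6$ instead only changes $c_\beta$.
\item The $F$-function form of Nachtergaele--Sims--Young that you float at the end would not be uniform in $w$. The row-sum hypothesis only yields the pointwise bound $J_*e^{-wd(x,y)}$, without the extra $(1+d)^{-D-1}$ factor. The direct commutator-path expansion with weighted row sums, which you already state, is the right tool.
\end{itemize}
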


\begin{proof}
Remove the cut interaction from $H^{(L)}$ and write
\begin{equation}\label{eq:gm:truncated-hamiltonian-splitting}
 H^{(L)}=H_0^{(L)}+V^{(L)},
 \qquad
 H_0^{(L)}=H_{\rm in}+H_{\rm out}+W_\Omega.
\end{equation}
The last three sums contain, respectively, interactions inside $\Omega$, outside $\Omega$, and crossing its boundary.  Every retained interaction touching $\Omega$ has diameter at most $L$.
In particular, $V^{(L)}$ is supported inside the $L$-neighborhood of $A$ and
\begin{equation}\label{eq:gm:powerlaw-collar-strengths}
 \sum_{X\text{ in }V^{(L)}}|X|\|\Phi(X)\|_\infty\le |A|J_1,
 \qquad
 \sum_{Y\text{ in }W_\Omega}\|\Phi(Y)\|_\infty\le J_0|\Omega|.
\end{equation}
An individual support $X$ in $V^{(L)}$ and every support $Y$ in $W_\Omega$ satisfy
\begin{equation}\label{eq:gm:truncated-support-separation}
 d(X,Y)\ge\lfloor r/3\rfloor-2L
 \ge r/4-1\ge r/6,
\end{equation}
where we used $L\le r/24$ and $r\ge24$.
Indeed, a crossing support $Y$ has an exterior point at distance greater than $\lfloor r/3\rfloor$ from $A$, and all its points are within $L$ of that point; a support of $V^{(L)}$ is within $L$ of $A$.

The assumed weighted bound gives the interaction matrix estimate
\begin{equation}\label{eq:gm:weighted-propagation-matrix}
 a_{xy}=\sum_{\substack{X\ni x,y\\X\text{ in }H_{\rm in}}}\|\Phi(X)\|_\infty,
 \qquad
 \sup_x\sum_y e^{w d(x,y)}a_{xy}\le J_*.
\end{equation}
In the commutator-path expansion, the triangle inequality supplies the factor $e^{-w d(X,Y)}$, and each weighted step is bounded by $J_*$.  Summing the time-ordered series gives $e^{2J_*|t|}$.  Thus, for disjoint supports,
\begin{equation}\label{eq:gm:powerlaw-collar-LR}
 \|[\tau_t^{H_{\rm in}}(O_X),P_Y]\|_\infty
 \le 2\|O_X\|_\infty\|P_Y\|_\infty|X|
           e^{v|t|-w d(X,Y)},
 \qquad v=2J_*,
\end{equation}
where $\tau_t^G(O)=e^{itG}Oe^{-itG}$.
This is the standard Lieb--Robinson iteration \cite{NachtergaeleSimsYoung2019,ChenLucasYin2023}; no bound on $|Y|$ is needed because the row sum already includes all possible endpoints in $Y$.

The evolution under $H_{\rm in}+H_{\rm out}$ of $V^{(L)}$ is its evolution under $H_{\rm in}$ alone.  To bound the Duhamel integrand, expand both interactions and apply \eqref{eq:gm:powerlaw-collar-LR} and \eqref{eq:gm:truncated-support-separation}.  For $u\ge0$,
\begin{equation}\label{eq:gm:interface-commutator-sum}
 \begin{aligned}
 \|[W_\Omega,\tau_{\operatorname{sgn}(t)u}^{H_{\rm in}}(V^{(L)})]\|_\infty
 &\le\sum_{X\text{ in }V^{(L)}}\sum_{Y\text{ in }W_\Omega}
 \|[\Phi(Y),\tau_{\operatorname{sgn}(t)u}^{H_{\rm in}}(\Phi(X))]\|_\infty\\
 &\le2e^{vu-wr/6}
 \left(\sum_{X\text{ in }V^{(L)}}|X|\|\Phi(X)\|_\infty\right)
 \left(\sum_{Y\text{ in }W_\Omega}\|\Phi(Y)\|_\infty\right).
 \end{aligned}
\end{equation}
The two strength bounds in \eqref{eq:gm:powerlaw-collar-strengths} therefore give
\begin{equation}\label{eq:gm:interface-commutator-strength-bound}
 \|[W_\Omega,\tau_{\operatorname{sgn}(t)u}^{H_{\rm in}}(V^{(L)})]\|_\infty
 \le2|A|J_1J_0|\Omega|e^{vu-wr/6}.
\end{equation}
Duhamel's formula and unitarity of the exterior evolution give
\begin{align}
 \|\tau_t^{H_0^{(L)}}(V^{(L)})-\tau_t^{H_{\rm in}}(V^{(L)})\|_\infty
 &\le\int_0^{|t|}
      \|[W_\Omega,\tau_{\operatorname{sgn}(t)u}^{H_{\rm in}}(V^{(L)})]\|_\infty\,du
 \nonumber\\
 &\le C |A|J_1J_0|\Omega|\,|t|e^{v|t|-cwr}.
 \label{eq:gm:powerlaw-collar-dynamics}
\end{align}
Only commutators at the inner--outer interface are estimated.  No locality assumption is imposed on $H_{\rm out}$.

Up to a scalar phase, the relative modular evolution
$(\sigma^{(L)})^{it}(\rho^{(L)})^{-it}$ equals
$U(t):=e^{-i\beta tH_0^{(L)}}e^{i\beta tH^{(L)}}$, which satisfies
$U'(t)=i\beta\tau_{-\beta t}^{H_0^{(L)}}(V^{(L)})U(t)$.
Define the approximating unitary by
\begin{equation}\label{eq:gm:truncated-local-unitary}
 \widetilde U'(t)=i\beta\tau_{-\beta t}^{H_{\rm in}}(V^{(L)})\widetilde U(t),
 \qquad \widetilde U(0)=\one.
\end{equation}
Its generator is supported on $\Omega\subseteq AB$, so
$\mathsf E_{AB}(\widetilde U(t))=\widetilde U(t)$.
Duhamel's formula and \eqref{eq:gm:powerlaw-collar-dynamics} give
\begin{equation}\label{eq:gm:truncated-unitary-comparison}
 \begin{aligned}
 \|U(t)-\widetilde U(t)\|_\infty
 &\le\beta\int_0^{|t|}
 \bigl\|\tau_{-\beta\operatorname{sgn}(t)u}^{H_0^{(L)}}(V^{(L)})
 -\tau_{-\beta\operatorname{sgn}(t)u}^{H_{\rm in}}(V^{(L)})\bigr\|_\infty\,du\\
 &\le C\beta^2|A|J_1J_0|\Omega|e^{-cwr}
       \int_0^{|t|}u e^{v\beta u}\,du\\
 &\le C_\beta|A||\Omega|t^2e^{v\beta|t|-cwr}.
 \end{aligned}
\end{equation}
The last step uses $\int_0^{|t|}u e^{v\beta u}\,du\le(t^2/2)e^{v\beta|t|}$.
The scalar phase does not affect the leakage norm.  Contractivity of $\mathsf E_{AB}$ and $|\Omega|\le C|A|(1+r)^D$ now yield
\begin{equation}\label{eq:gm:truncated-unitary-leakage}
 \begin{aligned}
 \eta_{AB}^{(L)}(t)
 &=\|(\mathrm{id}-\mathsf E_{AB})(U(t)-\widetilde U(t))\|_\infty\\
 &\le2\|U(t)-\widetilde U(t)\|_\infty\\
 &\le C_\beta|A|^2(1+r)^D t^2e^{v\beta|t|-cwr}.
 \end{aligned}
\end{equation}
Also $\eta_{AB}^{(L)}(t)\le2$ by unitarity and contractivity, so
\begin{equation}\label{eq:gm:powerlaw-collar-eta}
 \eta_{AB}^{(L)}(t)
 \le\min\{2,
 C_\beta|A|^2(1+r)^D t^2 e^{v\beta|t|-cwr}\}.
\end{equation}
Split its integral against $1/(2|\sinh\pi t|)$ at
$T=cwr/(\pi+\beta v)$.
For $T\ge1$, on $|t|\le T$ we have $v\beta|t|-cwr\le-\pi T$, so the second bound and
$\int_\R t^2/|\sinh(\pi t)|\,dt<\infty$ give a contribution at most $C_\beta|A|^2(1+r)^D e^{-\pi T}$.
On $|t|>T$, the first bound and the exponential kernel tail give at most $C e^{-\pi T}$.
Hence
\begin{equation}\label{eq:gm:truncated-leakage-time-balance}
 \mathfrak q_{AB}^{(L)}
 \le C_\beta|A|^2(1+r)^D e^{-\pi T}
 =C_\beta|A|^2(1+r)^D
 \exp\!\left[-\frac{c\pi}{\pi+\beta v}wr\right].
\end{equation}
For $T<1$, the same bound follows after increasing $C_\beta$, since
$\mathfrak q_{AB}^{(L)}\le\beta g_{A\mid A^c}/2\le\beta J_0|A|/2$ by
\eqref{eq:gm:elementary-leakage}, while $e^{-\pi T}\ge e^{-\pi}$.
This proves \eqref{eq:gm:powerlaw-collar-q}, with constants uniform in $w$ and $L$.
\end{proof}

\subsection{Completion of the stretched-exponential proof}

\begin{proof}[Proof of Theorem~\ref{thm:gm:quasilocal}]
The constants $C_\beta,c_\beta>0$ may change from line to line.
Set
\begin{equation}\label{eq:gm:quasilocal-distance-threshold}
 r_0=\left\lceil24^{(D+1-\theta)/(D-\theta)}\right\rceil.
\end{equation}
This threshold depends only on $D,\theta$.  We first treat $r\ge r_0$ and handle $r<r_0$ at the end.

\emph{Step 1: Truncate and bound the CMI error.}
Choose an integer $1\le L\le r/24$, to be specified in Step~4, and set
\begin{equation}\label{eq:gm:quasilocal-truncated-hamiltonian}
 \Omega=\{x\in\Lambda:d(x,A)\le\lfloor r/3\rfloor\},
 \qquad
 H^{(L)}=H_\Lambda-
 \sum_{\substack{X\subseteq\Lambda\\X\cap\Omega\ne\varnothing\\\diam X>L}}\Phi(X).
\end{equation}
Then $\Omega\subseteq AB$ and $|\Omega|\le C_D|A|(1+r)^D$.
If $X\ni x$ has diameter greater than $L$, it contains a point $y$ with $d(x,y)>L/2$.  The aggregate decay bound \eqref{eq:gm:F-assumption} therefore gives
\begin{equation}\label{eq:gm:stretched-interaction-range-tail}
 \sup_x\sum_{\substack{X\ni x\\\diam X>L}}\|\Phi(X)\|_\infty
 \le J_{\mu,\theta}\sum_{|z|_1>L/2}e^{-\mu|z|_1^\theta}
 \le C e^{-cL^\theta}.
\end{equation}
Thus
\begin{equation}\label{eq:gm:quasilocal-truncation-norm}
 \|H_\Lambda-H^{(L)}\|_\infty\le C|A|(1+r)^D e^{-cL^\theta}.
\end{equation}
Writing $\rho^{(L)}$ for the Gibbs state of $H^{(L)}$, Lemma~\ref{lem:gm:gibbs-cmi-lipschitz} and $\log d_A\le C|A|$ imply
\begin{equation}\label{eq:gm:quasilocal-truncation-cmi}
 \bigl|I(A:C\mid B)_\rho-I(A:C\mid B)_{\rho^{(L)}}\bigr|
 \le C_\beta (1+|A|)^2
 (1+r)^D e^{-cL^\theta}.
\end{equation}

\emph{Step 2: Compare with the cut Gibbs state and separate the contributions.}
Let $\sigma^{(L)}$ be the cut Gibbs state in Lemma~\ref{lem:gm:powerlaw-collar-leakage}, obtained by removing the retained interactions crossing $A\mid BC$.  Lemma~\ref{lem:gm:cut} gives
\begin{equation}\label{eq:gm:truncated-cmi-comparison}
 I(A:C\mid B)_{\rho^{(L)}}
 \le\delta_{AB}(\rho^{(L)},\sigma^{(L)}).
\end{equation}
Let $K^{(L)}$ be the relative modular logarithm of this pair, as in \eqref{eq:gm:relative-modular}, and let $\mathfrak q_{AB}^{(L)}$ be its integrated modular leakage from \eqref{eq:gm:q-def}.  Applying \eqref{eq:gm:cutoff-inequality} and adding the truncation error yields, for every $\ell>0$,
\begin{equation}\label{eq:gm:quasilocal-four-contributions}
 \begin{aligned}
 I(A:C\mid B)_\rho\le{}&
 \underbrace{C_\beta(1+|A|)^2(1+r)^D e^{-cL^\theta}}_{\text{truncation error}}
 +\underbrace{\left\langle(\rho^{(L)})^{1/2},
 \log(1+e^{-\ell-K^{(L)}})(\rho^{(L)})^{1/2}\right\rangle}_{\text{small-resolvent contribution}}\\
 &+\underbrace{e^{-\ell}}_{\text{large-resolvent contribution}}
 +\underbrace{(\mathfrak q_{AB}^{(L)})^2(e^\ell+2\ell)}_{\text{leakage contribution}}.
 \end{aligned}
\end{equation}

\emph{Step 3: Estimate the leakage and the small-resolvent contribution.}
Choose $w_L=(\mu/2)L^{\theta-1}$.  For $0\le u\le L$, the inequality $L^{\theta-1}u\le u^\theta$ gives
$e^{w_Lu-\mu u^\theta}\le e^{-(\mu/2)u^\theta}$.  Hence
\begin{equation}\label{eq:gm:quasilocal-collar-weight}
 \sup_x\sum_y e^{w_Ld(x,y)}
 \sum_{\substack{X\ni x,y\\\diam X\le L}}\|\Phi(X)\|_\infty
 \le J_{\mu,\theta}\sum_{z\in\Z^D}e^{-(\mu/2)|z|_1^\theta}<\infty,
\end{equation}
with a bound independent of $L$.  The weighted support estimate \eqref{eq:gm:Ja} also bounds $J_0$ and $J_1$, so the hypotheses of Lemma~\ref{lem:gm:powerlaw-collar-leakage} hold uniformly in $L$.  Applying \eqref{eq:gm:powerlaw-collar-q} with $w=w_L=(\mu/2)L^{\theta-1}$ gives
\begin{equation}\label{eq:gm:quasilocal-collar-q}
 \mathfrak q_{AB}^{(L)}
 \le C_\beta|A|^2
 (1+r)^D\exp\!\left[-c_\beta\frac{r}{L^{1-\theta}}\right].
\end{equation}

Choose
\begin{equation}\label{eq:gm:quasilocal-resolvent-cutoff}
 \ell=c_0\frac{r}{L^{1-\theta}},
\end{equation}
where $c_0>0$ is sufficiently small relative to the decay constant in \eqref{eq:gm:quasilocal-collar-q}.  Squaring that bound and multiplying by $e^\ell+2\ell$ still leaves exponential decay in $r/L^{1-\theta}$.  Since $L\ge1$ implies $\ell\le c_0r$, the leakage and large-resolvent contributions satisfy
\begin{equation}\label{eq:gm:quasilocal-resolvent-leakage-error}
 e^{-\ell}+(\mathfrak q_{AB}^{(L)})^2(e^\ell+2\ell)
 \le C_\beta(1+|A|)^4(1+r)^{2D+1}
 e^{-c_\beta r/L^{1-\theta}}.
\end{equation}

Separately, deleting interaction terms preserves \eqref{eq:gm:F-assumption} and the weighted support and cut bounds.  Lemma~\ref{lem:gm:small-quasilocal} therefore applies uniformly in $L$ and gives
\begin{equation}\label{eq:gm:truncated-small-resolvent-bound}
 \begin{aligned}
 \left\langle(\rho^{(L)})^{1/2},
 \log(1+e^{-\ell-K^{(L)}})(\rho^{(L)})^{1/2}\right\rangle \le C_\beta\exp\!\left[
 C_\beta(1+|\partial_{\mathrm e}A|)^{\theta/D}
 -c_\beta\left(\frac{r}{L^{1-\theta}}\right)^{\theta/D}\right].
 \end{aligned}
\end{equation}
The boundary exponential enters through this spectral-tail estimate; the truncation and leakage factors remain polynomial in $|A|$.

\emph{Step 4: Balance the two remaining decay scales.}
Substitution into \eqref{eq:gm:quasilocal-four-contributions} gives
\begin{equation}\label{eq:gm:quasilocal-before-absorb}
 \begin{aligned}
 I(A:C\mid B)_\rho\le{}&
 C_\beta(1+|A|)^2(1+r)^D e^{-cL^\theta} +C_\beta\exp\!\left[
 C_\beta(1+|\partial_{\mathrm e}A|)^{\theta/D}
 -c_\beta\left(\frac{r}{L^{1-\theta}}\right)^{\theta/D}\right]\\
 &+C_\beta(1+|A|)^4(1+r)^{2D+1}
 e^{-c_\beta r/L^{1-\theta}}.
 \end{aligned}
\end{equation}
The competing scales are $L^\theta$ from truncation and $(r/L^{1-\theta})^{\theta/D}$ from the small-resolvent tail.
Choose $L=\lfloor r^{1/(D+1-\theta)}\rfloor$.  Since $D\ge2$ and $r\ge r_0$, we have
\begin{equation}\label{eq:gm:quasilocal-range-admissibility}
 \tfrac12 r^{1/(D+1-\theta)}\le L
 \le r^{1/(D+1-\theta)}\le r/24.
\end{equation}
Thus $1\le L\le r/24$, as required for the truncation estimate.  This is where the threshold $r_0$ is used; it also ensures $r\ge24$ as required by Lemma~\ref{lem:gm:powerlaw-collar-leakage}.
The two decay scales agree:
\begin{equation}\label{eq:gm:quasilocal-scale-balance}
 L^\theta\asymp
 \left(\frac{r}{L^{1-\theta}}\right)^{\theta/D}
 \asymp r^{\theta/(D+1-\theta)}.
\end{equation}
The leakage and large-resolvent contributions decay faster, since $r/L^{1-\theta}\asymp r^{D/(D+1-\theta)}$.
It remains to put the prefactors into the form of the theorem.  Since $D\ge2$, lattice isoperimetry and the on-site dimension bound give
\begin{equation}\label{eq:gm:lattice-isoperimetry-dimension-bound}
 |A|\le C_D(1+|\partial_{\mathrm e}A|)^{D/(D-1)},
 \qquad \log d_A\le C|A|.
\end{equation}
Consequently, for every fixed $m\ge0$,
\begin{equation}\label{eq:gm:quasilocal-region-absorption}
 (1+|A|)^m(1+\log d_A)
 \le C_m\exp\!\left[C_m(1+|\partial_{\mathrm e}A|)^{\theta/D}\right].
\end{equation}
Apply this estimate to the polynomial region factors in \eqref{eq:gm:quasilocal-before-absorb} and absorb the powers of $1+r$ into the decaying exponentials.  Since $1+r\le2r$ here, adjusting constants proves \eqref{eq:gm:quasilocal-main} for every $r\ge r_0$.
For $r<r_0$, the elementary bound $I(A:C\mid B)_\rho\le2\log d_A\le C|A|$ and \eqref{eq:gm:quasilocal-region-absorption} give
\begin{equation}\label{eq:gm:quasilocal-bounded-distance}
 \begin{aligned}
 I(A:C\mid B)_\rho
 &\le C\exp\!\left[C(1+|\partial_{\mathrm e}A|)^{\theta/D}\right]\\
 &\le C e^{c_\beta(1+r_0)^{\theta/(D+1-\theta)}}
 \exp\!\left[C(1+|\partial_{\mathrm e}A|)^{\theta/D}
 -c_\beta(1+r)^{\theta/(D+1-\theta)}\right].
 \end{aligned}
\end{equation}

The additional prefactor is independent of $\Lambda,A,B,C$.  Enlarging $C_\beta$ to dominate both $C$ and $C e^{c_\beta(1+r_0)^{\theta/(D+1-\theta)}}$ proves \eqref{eq:gm:quasilocal-main} also for $r<r_0$, with the same $c_\beta$.
\end{proof}

Corollary~\ref{cor:gm:moment} is valid for the truncated pair, but obtaining a useful bound from it would also require volume-uniform control of the negative modular moment $\Tr[(\rho^{(L)})^{1+\alpha}(\sigma^{(L)})^{-\alpha}]$ for an admissible $\alpha>0$.  The stretched-exponential spectral tail \eqref{eq:gm:modular-tail}, with $\theta/D<1$, does not by itself provide such an exponential-moment bound.  Moreover, the truncation leaves long interactions outside the neighborhood, so the finite-range estimate of Lemma~\ref{lem:gm:imaginary} does not directly apply.  We therefore use \eqref{eq:gm:cutoff-inequality}, whose small-resolvent contribution is controlled directly by Lemma~\ref{lem:gm:small-quasilocal}.

\section{Power-law interactions: fractional modular tails and spatial truncation}\label{sec:gm:powerlaw-proof}

This section proves Theorem~\ref{thm:gm:powerlaw}.  We use the same cut Gibbs pair, relative modular logarithm $K$, and integrated modular leakage $\mathfrak q_{AB}$ as in Section~\ref{sec:gm:modular}.  
We reuse the Gibbs-interpolation estimate of Lemma~\ref{lem:gm:gibbs-cmi-lipschitz} and establish an endpoint support-size tail estimate that retains fractional regularity beyond any fixed integer modular moment. 
For $D<\zeta\le2D$, the direct spatial leakage bound permits a resolvent cutoff proportional to $\log(1+r)$; combining this with the spectral-tail estimate gives inverse-logarithmic CMI decay as shown in \eqref{eq:gm:powerlaw-log-main}.
For $\zeta>2D$, we truncate long interactions near $A$ and choose the truncation range to balance the CMI error with the spectral-tail contribution. This gives the algebraic decay of CMI.

\subsection{Diameter tails and endpoint support-size tails}

\begin{lemma}[Interaction tails]\label{lem:gm:powerlaw-support-tail}
Under \eqref{eq:gm:powerlaw-assumption}, suppose $\zeta>D$.  For $L,N\ge1$,
\begin{align}
 \sup_x\sum_{\substack{X\ni x\\\diam X>L}}\|\Phi(X)\|_\infty
 &\le C J_\zeta(1+L)^{D-\zeta},\label{eq:gm:powerlaw-diameter-tail}\\
 \sup_x\sum_{\substack{X\ni x\\|X|>N}}\|\Phi(X)\|_\infty
 &\le C J_\zeta(1+N)^{-\zeta/D}.\label{eq:gm:powerlaw-cardinality-tail}
\end{align}
In particular,
\begin{equation}\label{eq:gm:powerlaw-J01}
 J_0:=\sup_x\sum_{X\ni x}\|\Phi(X)\|_\infty\le J_\zeta,
 \qquad
 J_1:=\sup_x\sum_{X\ni x}|X|\|\Phi(X)\|_\infty<\infty.
\end{equation}
For the natural decomposition of the cut interaction,
\begin{equation}\label{eq:gm:powerlaw-cut-tail}
 g_{A\mid A^c}\le J_0|A|,
 \qquad
 \sum_{\substack{X\text{ crossing }A\mid A^c\\|X|>N}}
        \|\Phi(X)\|_\infty
 \le C J_\zeta|A|(1+N)^{-\zeta/D}.
\end{equation}
All constants are uniform under deletion of interaction terms.
\end{lemma}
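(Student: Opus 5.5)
The plan is to reduce every estimate to the aggregate pair bound \eqref{eq:gm:powerlaw-assumption} by choosing, for each support, a witness point at large distance from a fixed site and then summing lattice shells, as in the proof of Lemma~\ref{lem:gm:support-moments}.

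\emph{Diameter tail.} Fix $x$ and let $X\ni x$ have $\diam X>L$. Then $X$ contains some $y$ with $d(x,y)>L/2$. Overcounting each such support by all of its admissible witnesses $y$ gives
\[
 \sum_{\substack{X\ni x\\ \diam X>L}}\|\Phi(X)\|_\infty
 \le\sum_{y:\,d(x,y)>L/2}\ \sum_{X\ni x,y}\|\Phi(X)\|_\infty
 \le J_\zeta\sum_{|z|_1>L/2}(1+|z|_1)^{-\zeta}.
\]
There are at most $C_D k^{D-1}$ lattice points with $|z|_1=k$. The shell sum is therefore at most $C\sum_{k>L/2}k^{D-1-\zeta}\le C(1+L)^{D-\zeta}$, which converges because $\zeta>D$. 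This proves \eqref{eq:gm:powerlaw-diameter-tail}.

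\emph{Cardinality tail.} This is the step I expect to need the most care, because the exponent $-\zeta/D$ is stronger than what the diameter tail alone gives. I would not reduce it to diameter, since that would only yield $N^{(D-\zeta)/D}$. Instead I would count pairs. For $X\ni x$ with $|X|>N$, the number of $y\in X$ with $d(x,y)\ge\rho_N:=c_DN^{1/D}$ is at least $|X|/2$, because the ball of radius $\rho_N$ contains at most $N/2\le|X|/2$ sites once $c_D$ is chosen small. Hence
\[
 \|\Phi(X)\|_\infty\le\frac{2}{|X|}\sum_{\substack{y\in X\\ d(x,y)\ge\rho_N}}\|\Phi(X)\|_\infty ,
\]
and therefore
\[
 \sum_{\substack{X\ni x\\|X|>N}}\|\Phi(X)\|_\infty\le\frac2N\sum_{d(x,y)\ge\rho_N}J_\zeta(1+d(x,y))^{-\zeta}\le CJ_\zeta N^{-1}\rho_N^{D-\zeta}=CJ_\zeta N^{-\zeta/D}.
\]
This is \eqref{eq:gm:powerlaw-cardinality-tail} after adjusting constants for small $N$, where the left side is bounded by $J_0$.

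\emph{The moments $J_0$ and $J_1$.} $J_0\le J_\zeta$ is the case $x=y$ of the supremum in \eqref{eq:gm:powerlaw-assumption}. For $J_1$ I would use the identity $|X|=\sum_{y\in X}1$ and interchange sums:
\[
 \sum_{X\ni x}|X|\|\Phi(X)\|_\infty=\sum_y\sum_{X\ni x,y}\|\Phi(X)\|_\infty\le J_\zeta\sum_z(1+|z|_1)^{-\zeta}<\infty .
\]

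\emph{Cut bounds.} Every crossing support contains a site of $A$. Summing the per-site bounds over $x\in A$ gives $g_{A\mid A^c}\le J_0|A|$, and applying \eqref{eq:gm:powerlaw-cardinality-tail} at each $x\in A$ gives the second inequality in \eqref{eq:gm:powerlaw-cut-tail}.

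\emph{Uniformity.} Deleting interaction terms only decreases every left-hand side while $J_\zeta$ does not increase, so all constants (which depend only on $D$ and $\zeta$ besides $J_\zeta$) remain valid.
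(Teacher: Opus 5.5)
Your proposal is correct and follows the paper's proof essentially step for step. It uses the same far-witness argument with an $\ell^1$ shell sum for the diameter tail. It uses the same pair-counting device for the cardinality tail: a large support has many points outside a ball of radius $\asymp N^{1/D}$, which yields the factor $2/N$. It uses the identity $|X|=\sum_{y\in X}1$ for $J_1$, and it sums per-site bounds over $A$ for the cut estimates. The only difference is cosmetic: you count at least $|X|/2$ distant points where the paper counts $N/2$, and this changes nothing.
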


\begin{proof}
If $X\ni x$ has diameter greater than $L$, some $y\in X$ has $d(x,y)>L/2$.
Summing \eqref{eq:gm:powerlaw-assumption} over these $y$ proves
\eqref{eq:gm:powerlaw-diameter-tail}, since the lattice tail is $O((1+L)^{D-\zeta})$.

For \eqref{eq:gm:powerlaw-cardinality-tail}, choose $R\asymp N^{1/D}$ so that the lattice ball $B_R(x)$ contains at most $N/2$ sites, treating bounded $N$ by increasing $C$.
Every $X\ni x$ with $|X|>N$ contains at least $N/2$ points outside $B_R(x)$.
Consequently,
\begin{equation}\label{eq:gm:powerlaw-cardinality-tail-derivation}
 \sum_{\substack{X\ni x\\|X|>N}}\|\Phi(X)\|_\infty
 \le\frac2N\sum_{y:d(x,y)>R}\sum_{X\ni x,y}\|\Phi(X)\|_\infty
 \le\frac{CJ_\zeta}{N}R^{D-\zeta}
 \le CJ_\zeta N^{-\zeta/D}.
\end{equation}
The factor $N^{-1}$ uses all the distant sites in a large support, rather than selecting just one.  The diagonal case $x=y$ gives $J_0\le J_\zeta$.  Applying \eqref{eq:gm:powerlaw-assumption} to each pair $x,y$ gives
\begin{equation}\label{eq:gm:powerlaw-first-support-moment}
 \begin{aligned}
 J_1
 &=\sup_x\sum_y\sum_{X\ni x,y}\|\Phi(X)\|_\infty
 \le J_\zeta\sup_x\sum_{y\in\Z^D}(1+|y-x|_1)^{-\zeta}
 =J_\zeta\sum_{z\in\Z^D}(1+|z|_1)^{-\zeta}.
 \end{aligned}
\end{equation}
Since the number of sites with $|z|_1=n$ is at most $C_D(1+n)^{D-1}$,
\begin{equation}\label{eq:gm:powerlaw-lattice-summability}
 \sum_{z\in\Z^D}(1+|z|_1)^{-\zeta}
 \le C_D\sum_{n=0}^\infty(1+n)^{D-1-\zeta}<\infty,
\end{equation}
where convergence follows from $\zeta>D$.  Thus $J_1<\infty$.
Every support $X$ crossing $A\mid A^c$ contains at least one site of $A$, so its norm is counted at least once in the sum over $x\in A$.  Hence
\begin{equation}\label{eq:gm:cut-strength-volume-bound}
 g_{A\mid A^c}
 =\sum_{X\text{ crossing }A\mid A^c}\|\Phi(X)\|_\infty
 \le\sum_{x\in A}\sum_{X\ni x}\|\Phi(X)\|_\infty
 \le J_0|A|.
\end{equation}
Restricting to supports with $|X|>N$ and applying \eqref{eq:gm:powerlaw-cardinality-tail} at each site gives
\begin{equation}\label{eq:gm:powerlaw-cut-tail-counting}
 \sum_{\substack{X\text{ crossing }A\mid A^c\\|X|>N}}\|\Phi(X)\|_\infty
 \le\sum_{x\in A}\sum_{\substack{X\ni x\\|X|>N}}\|\Phi(X)\|_\infty
 \le C J_\zeta|A|(1+N)^{-\zeta/D}.
\end{equation}
These are the two bounds in \eqref{eq:gm:powerlaw-cut-tail}.
Every argument bounds a sum of nonnegative interaction norms, proving the deletion-uniformity assertion.
\end{proof}

\subsection{Propagation of fractional support regularity}

In this subsection, a local decomposition $O=\sum_XO_X$ is kept explicitly.  Repeated terms on the same support may be kept separately or combined using the triangle inequality.  Define its coefficient tail by
\begin{equation}\label{eq:gm:coefficient-support-tail-definition}
 T_O(N)=\sum_{|X|>N}\|O_X\|_\infty.
\end{equation}
\begin{lemma}[One-power loss under a commutator]\label{lem:gm:powerlaw-tail-loss}
Suppose $M_O:=\sum_X\|O_X\|_\infty<\infty$ and
$T_O(N)\le C_O(1+N)^{-s}$ with $s>1$.
Then the natural decomposition of $[O,H_0]$ has finite total coefficient sum and
\begin{equation}\label{eq:gm:powerlaw-commutator-tail}
 T_{[O,H_0]}(N)
 \le C(M_O+C_O)\left[(1+N)^{-(s-1)}+(1+N)^{-\zeta/D}\right].
\end{equation}
The product $OV$ has a decomposition with
\begin{equation}\label{eq:gm:powerlaw-product-tail}
 T_{OV}(N)\le C|A|\left[C_O(1+N)^{-s}+M_O(1+N)^{-\zeta/D}\right].
\end{equation}
Here $C$ depends only on $D,J_\zeta,\zeta,s$.  The total coefficient sums satisfy
\begin{equation}\label{eq:gm:powerlaw-total-coefficients}
 \sum_Z\|[O,H_0]_Z\|_\infty\le C(M_O+C_O),
 \qquad
 \sum_Z\|(OV)_Z\|_\infty\le J_\zeta|A|M_O.
\end{equation}
\end{lemma}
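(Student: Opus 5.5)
We will work directly with the natural decompositions: $[O,H_0]=\sum_X\sum_{Y\cap X\ne\varnothing}[O_X,\Phi(Y)]$, where each summand is supported on $X\cup Y$, and $OV=\sum_X\sum_{Y\text{ in }V}O_X\Phi(Y)$, where each summand is supported on $X\cup Y$. We use $\|[O_X,\Phi(Y)]\|_\infty\le2\|O_X\|_\infty\|\Phi(Y)\|_\infty$ and $\|O_X\Phi(Y)\|_\infty\le\|O_X\|_\infty\|\Phi(Y)\|_\infty$. All coefficient bounds then reduce to weighted sums of nonnegative numbers, and Lemma~\ref{lem:gm:powerlaw-support-tail} supplies the interaction inputs: $J_0$, $J_1$, and the cardinality tail \eqref{eq:gm:powerlaw-cardinality-tail}. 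Since $H_0$ and $V$ are obtained by deleting terms, those bounds hold uniformly.

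\emph{Total coefficient sums.} For the product, $\sum_{Y\text{ in }V}\|\Phi(Y)\|_\infty=g_{A\mid A^c}\le J_0|A|\le J_\zeta|A|$. Pulling out $M_O$ gives the second inequality in \eqref{eq:gm:powerlaw-total-coefficients}. For the commutator, we sum over $x\in X$ to get $\sum_{Y\cap X\ne\varnothing}\|\Phi(Y)\|_\infty\le J_0|X|$. The commutator total is therefore at most $2J_0\sum_X|X|\|O_X\|_\infty$. We handle $\sum_X|X|\|O_X\|_\infty$ by the layer-cake identity
\[
 \sum_X|X|\|O_X\|_\infty=\sum_{N\ge0}T_O(N)\le\|O_{\varnothing}\|_\infty+M_O+\sum_{N\ge1}C_O(1+N)^{-s}.
\]
The right-hand side is at most $M_O+C(s)C_O$ because $s>1$. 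This is exactly where $s>1$ enters, and it is the source of the one-power loss.

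\emph{Tails.} We use $|X\cup Y|\le|X|+|Y|$. If $|X\cup Y|>N$, then either $|X|>N/2$ or $|Y|>N/2$, so each tail splits into two pieces.

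For the commutator, the piece with $|X|>N/2$ is bounded by $2J_0\sum_{|X|>N/2}|X|\|O_X\|_\infty$. The same layer-cake identity gives
\[
 \sum_{|X|>N/2}|X|\|O_X\|_\infty=\lfloor N/2\rfloor\,T_O(\lfloor N/2\rfloor)+\sum_{m\ge\lfloor N/2\rfloor}T_O(m),
\]
which is at most $CC_O(1+N)^{-(s-1)}$. The piece with $|Y|>N/2$ is bounded by $2\sum_X\|O_X\|_\infty\sum_{x\in X}\sum_{Y\ni x,|Y|>N/2}\|\Phi(Y)\|_\infty$. By \eqref{eq:gm:powerlaw-cardinality-tail} this is at most $C(1+N)^{-\zeta/D}\sum_X|X|\|O_X\|_\infty$, and the sum is again at most $C(M_O+C_O)$. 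Together these give \eqref{eq:gm:powerlaw-commutator-tail}.

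For the product, the piece with $|X|>N/2$ contributes at most $g_{A\mid A^c}T_O(N/2)\le CJ_\zeta|A|C_O(1+N)^{-s}$. The piece with $|Y|>N/2$ contributes at most $M_O$ times the cut tail \eqref{eq:gm:powerlaw-cut-tail}, which is $CJ_\zeta|A|M_O(1+N)^{-\zeta/D}$. Together these give \eqref{eq:gm:powerlaw-product-tail}.

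\emph{Main obstacle.} The only delicate point is the commutator piece with $|X|>N/2$. There the count of interaction terms touching $X$ grows like $|X|$, and this costs exactly one power of decay. The layer-cake summation must be done carefully, including the empty-support scalar term, which commutes with everything and is harmless, to get the clean exponent $s-1$ with constants depending only on $s$. Everything else is bookkeeping with nonnegative sums.
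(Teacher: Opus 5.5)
Your proposal is correct and follows the paper's proof essentially step for step. It uses the same natural decompositions, the same split into $|X|>N/2$ or $|Y|>N/2$, and the same inputs $J_0$, the cardinality tail and the cut tail; your discrete layer-cake sum $\sum_{m\ge k}T_O(m)$ plays the role of the paper's identity $\sum_{|X|>N}|X|\|O_X\|_\infty=NT_O(N)+\int_N^\infty T_O(u)\,du$, and the extra $\|O_{\varnothing}\|_\infty$ in your total-sum bound is a harmless overcount, since $T_O(0)\le M_O$ already.
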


\begin{proof}
The output support of a commutator between $O_X$ and $\Phi(Y)$ is contained in $X\cup Y$.  If its size exceeds $N$, then $|X|>N/2$ or $|Y|>N/2$.
Only pairs with $X\cap Y\ne\varnothing$ contribute, and
$\|[O_X,\Phi(Y)]\|_\infty\le2\|O_X\|_\infty\|\Phi(Y)\|_\infty$.
Splitting according to which support is large gives
\begin{equation}\label{eq:gm:powerlaw-tail-split}
 \begin{aligned}
 T_{[O,H_0]}(N)
 \le{}&2\sum_{|X|>N/2}\|O_X\|_\infty
          \sum_{Y:Y\cap X\ne\varnothing}\|\Phi(Y)\|_\infty
 +2\sum_X\|O_X\|_\infty
          \sum_{\substack{Y:Y\cap X\ne\varnothing\\|Y|>N/2}}\|\Phi(Y)\|_\infty\\
 \le{}&2\sum_{|X|>N/2}\|O_X\|_\infty
          \sum_{x\in X}\sum_{Y\ni x}\|\Phi(Y)\|_\infty
 +2\sum_X\|O_X\|_\infty
          \sum_{x\in X}\sum_{\substack{Y\ni x\\|Y|>N/2}}\|\Phi(Y)\|_\infty\\
 \le{}&2J_0\sum_{|X|>N/2}|X|\|O_X\|_\infty
 +2\left(\sup_x\sum_{\substack{Y\ni x\\|Y|>N/2}}\|\Phi(Y)\|_\infty\right)
          \sum_X|X|\|O_X\|_\infty.
 \end{aligned}
\end{equation}
The sum over $x\in X$ counts every interaction meeting $X$ at least once.  Summing over all interaction terms also bounds the sum over those retained in $H_0$.
The identity
\begin{equation}\label{eq:gm:coefficient-first-tail-moment}
 \begin{aligned}
 \sum_{|X|>N}|X|\|O_X\|_\infty
 &=\sum_{|X|>N}\left(N+\int_N^{|X|}du\right)\|O_X\|_\infty\\
 &=N T_O(N)+\int_N^\infty T_O(u)\,du
 \end{aligned}
\end{equation}
shows that the first term in \eqref{eq:gm:powerlaw-tail-split} is $O((1+N)^{-(s-1)})$ and that the full first support moment is finite.  Lemma~\ref{lem:gm:powerlaw-support-tail} bounds the second term by $O((1+N)^{-\zeta/D})$.

For $OV$, write $V=\sum_YV_Y$ for the natural cut decomposition, with $V_Y=\Phi(Y)$ when $Y$ crosses $A\mid A^c$ and $V_Y=0$ otherwise.  Each product $O_XV_Y$ is supported on $X\cup Y$ and obeys
$\|O_XV_Y\|_\infty\le\|O_X\|_\infty\|V_Y\|_\infty$, even when $X$ and $Y$ are disjoint.
Again, $|X\cup Y|>N$ implies $|X|>N/2$ or $|Y|>N/2$, so
\begin{equation}\label{eq:gm:powerlaw-product-tail-derivation}
 \begin{aligned}
 T_{OV}(N)
 &\le\sum_{\substack{X,Y\\|X\cup Y|>N}}\|O_X\|_\infty\|V_Y\|_\infty\\
 &\le\left(\sum_{|X|>N/2}\|O_X\|_\infty\right)\left(\sum_Y\|V_Y\|_\infty\right)
 +\left(\sum_X\|O_X\|_\infty\right)\left(\sum_{|Y|>N/2}\|V_Y\|_\infty\right)\\
 &=g_{A\mid A^c}T_O(N/2)+M_OT_V(N/2)\\
 &\le C|A|\left[C_O(1+N)^{-s}+M_O(1+N)^{-\zeta/D}\right].
 \end{aligned}
\end{equation}
The last inequality uses the assumed bound on $T_O$, together with $g_{A\mid A^c}\le J_\zeta|A|$ and the bound on $T_V$ in \eqref{eq:gm:powerlaw-cut-tail}.  
Finally, for \eqref{eq:gm:powerlaw-total-coefficients}, summing over all output supports $Z$, with terms grouped by $Z=X\cup Y$, gives
\begin{equation}\label{eq:gm:commutator-total-coefficient-bound}
 \begin{aligned}
 \sum_Z\|[O,H_0]_Z\|_\infty
 &\le2\sum_X\|O_X\|_\infty\sum_{Y:Y\cap X\ne\varnothing}\|\Phi(Y)\|_\infty\\
 &\le2J_0\sum_X|X|\|O_X\|_\infty\\
 &=2J_0\int_0^\infty T_O(u)\,du \le C(M_O+C_O).
 \end{aligned}
\end{equation}
Here we used $T_O(u)\le M_O$ for $0\le u<1$ and the assumed tail bound for $u\ge1$.
For the product decomposition, the corresponding estimate is
\begin{equation}\label{eq:gm:product-total-coefficient-bound}
 \begin{aligned}
 \sum_Z\|(OV)_Z\|_\infty
 &\le\sum_{X,Y}\|O_XV_Y\|_\infty\\
 &\le\left(\sum_X\|O_X\|_\infty\right)\left(\sum_Y\|V_Y\|_\infty\right)\\
 &=M_Og_{A\mid A^c}
 \le J_\zeta|A|M_O.
 \end{aligned}
\end{equation}
\end{proof}

For the cut pair define the operator map
\begin{equation}\label{eq:gm:powerlaw-D-map}
 \mathcal D(O)=\beta[O,H_0]+\beta OV+(\log Z-\log Z_0)O.
\end{equation}
By \eqref{eq:gm:K-action},
$K^j\rho^{1/2}=\mathcal D^j(\one)\rho^{1/2}$.
The scalar shift has absolute value at most $\beta g_{A\mid A^c}$ and hence no volume cost.
\begin{lemma}[Iterated modular-action bounds]\label{lem:gm:powerlaw-iterates}
Under \eqref{eq:gm:powerlaw-assumption}, fix $\beta>0$ and let $\mathcal D$ be defined by \eqref{eq:gm:powerlaw-D-map}.  The iterates admit local decompositions satisfying, for $N\ge1$,
\begin{equation}\label{eq:gm:powerlaw-iterate-tail}
 \sum_{|X|>N}\|[\mathcal D^j(\one)]_X\|_\infty
 \le C_{\beta,j}|A|^j(1+N)^{-(\zeta/D-j+1)},
 \qquad 1\le j\le\lceil \zeta/D\rceil,
\end{equation}
and
\begin{equation}\label{eq:gm:powerlaw-iterate-total-coefficients}
 \sum_X\|[\mathcal D^j(\one)]_X\|_\infty\le C_{\beta,j}|A|^j.
\end{equation}
Here $C_{\beta,j}$ depends only on $\beta,D,J_\zeta,\zeta,j$, independently of the regions and finite volume.
\end{lemma}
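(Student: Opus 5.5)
We argue by induction on $j$, applying Lemma~\ref{lem:gm:powerlaw-tail-loss} at each step to the explicit local decomposition of $\mathcal D^{j-1}(\one)$. Write $O_j=\mathcal D^j(\one)$, with $O_0=\one$ decomposed trivially on the empty support. Then $M_{O_0}=1$ and $T_{O_0}(N)=0$ for $N\ge1$.

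For the base case $j=1$, we have $O_1=\beta[\one,H_0]+\beta V+(\log Z-\log Z_0)\one=\beta V+(\log Z-\log Z_0)\one$. Take the natural cut decomposition of $V$ together with the scalar on the empty support. By \eqref{eq:gm:powerlaw-cut-tail} and \eqref{eq:gm:free-energy}, the tail is at most $C\beta J_\zeta|A|(1+N)^{-\zeta/D}$. The total coefficient sum is at most $2\beta g_{A\mid A^c}\le2\beta J_\zeta|A|$. This is \eqref{eq:gm:powerlaw-iterate-tail}--\eqref{eq:gm:powerlaw-iterate-total-coefficients} with exponent $\zeta/D=\zeta/D-1+1$.

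For the inductive step, suppose $2\le j\le\lceil\zeta/D\rceil$ and that $O_{j-1}$ has a decomposition with $M_{O_{j-1}}\le C|A|^{j-1}$ and $T_{O_{j-1}}(N)\le C|A|^{j-1}(1+N)^{-s}$, where $s=\zeta/D-j+2$. The constraint $j\le\lceil\zeta/D\rceil$ gives $j<\zeta/D+1$, hence $s>1$, so Lemma~\ref{lem:gm:powerlaw-tail-loss} applies.

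We decompose $O_j=\beta[O_{j-1},H_0]+\beta O_{j-1}V+(\log Z-\log Z_0)O_{j-1}$ term by term, using the natural decompositions supplied by the lemma. We then bound each of the three pieces:
\begin{itemize}
\item The commutator piece, by \eqref{eq:gm:powerlaw-commutator-tail}, has tail $C|A|^{j-1}[(1+N)^{-(s-1)}+(1+N)^{-\zeta/D}]$. Since $s-1=\zeta/D-j+1\le\zeta/D$, this is at most $C|A|^{j-1}(1+N)^{-(\zeta/D-j+1)}$.
\item The product piece, by \eqref{eq:gm:powerlaw-product-tail}, has tail $C|A|^{j}(1+N)^{-\min\{s,\zeta/D\}}$, which is again dominated by $(1+N)^{-(\zeta/D-j+1)}$.
\item The scalar piece multiplies the decomposition of $O_{j-1}$ by a number of modulus at most $\beta J_\zeta|A|$. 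Its tail is therefore at most $C|A|^j(1+N)^{-s}$.
\end{itemize}
Summing the three pieces and using $|A|^{j-1}\le|A|^j$ (as $A$ is nonempty) yields \eqref{eq:gm:powerlaw-iterate-tail}. The total coefficient bounds \eqref{eq:gm:powerlaw-total-coefficients}, together with the scalar bound, give $M_{O_j}\le C(M_{O_{j-1}}+C_{O_{j-1}})(1+|A|)\le C_{\beta,j}|A|^j$, which is \eqref{eq:gm:powerlaw-iterate-total-coefficients}.

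The constants depend only on $\beta,D,J_\zeta,\zeta,j$. This is because Lemma~\ref{lem:gm:powerlaw-support-tail} is uniform under deletion of terms (the lemma is applied to $H_0$) and the exponents $s$ range over a finite set determined by $\zeta/D$.

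The main point requiring care is that the hypothesis $s>1$ of Lemma~\ref{lem:gm:powerlaw-tail-loss} must hold at every step up to $j=\lceil\zeta/D\rceil$. This is exactly what limits the range of $j$: one power of decay is lost per commutator through the first support moment \eqref{eq:gm:coefficient-first-tail-moment}. At the last step, if $\zeta/D$ is an integer then $s-1=1$; if not, $s-1\in(0,1)$. In either case the previous tail exponent is $s>1$, so the finiteness of $\int_1^\infty T_O$ used in \eqref{eq:gm:commutator-total-coefficient-bound} holds, though with a constant that degenerates as $s\downarrow1$. This degeneration is harmless because $\zeta$ is fixed. The other point to track is that the factor $|A|$ is gained only through $V$ and the scalar shift, never through the commutator with $H_0$. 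That is why the power is $|A|^j$ and no volume factor appears.
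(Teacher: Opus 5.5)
Your proposal is correct and follows the paper's proof: induction on $j$ starting from $\mathcal D(\one)=\beta V+(\log Z-\log Z_0)\one$, then Lemma~\ref{lem:gm:powerlaw-tail-loss} applied to the three pieces of $\mathcal D$. As in the paper, only the commutator lowers the tail exponent by one, while $V$ and the scalar shift each add a factor $|A|$. Your step from $j-1$ to $j$ is just the paper's step from $j$ to $j+1$ re-indexed, and your check that the input exponent stays above $1$ up to $j=\lceil\zeta/D\rceil$ is the same condition the paper verifies.
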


\begin{proof}
We prove both bounds simultaneously by induction.
For $j=1$, we have $\mathcal D(\one)=\beta V+(\log Z-\log Z_0)\one$.  Assigning the scalar term to the empty support, \eqref{eq:gm:powerlaw-cut-tail} and $|\log Z-\log Z_0|\le\beta g_{A\mid A^c}$ give
\begin{equation}\label{eq:gm:powerlaw-iterate-base-case}
 \begin{aligned}
 \sum_{|X|>N}\|[\mathcal D(\one)]_X\|_\infty
 &=\beta T_V(N) \le C_\beta|A|(1+N)^{-\zeta/D},\\
 \sum_X\|[\mathcal D(\one)]_X\|_\infty
 &\le\beta g_{A\mid A^c}+|\log Z-\log Z_0|\le2\beta J_\zeta|A|.
 \end{aligned}
\end{equation}
Now assume both bounds hold at order $j$, where $1\le j<\lceil\zeta/D\rceil$.
The input tail exponent is $\zeta/D-j+1>1$, so Lemma~\ref{lem:gm:powerlaw-tail-loss} applies to $O=\mathcal D^j(\one)$.
Since $\zeta/D-j+1\le\zeta/D$, its commutator and product estimates, together with the scalar-shift bound, yield
\begin{equation}\label{eq:gm:powerlaw-iterate-induction-inputs}
 \begin{aligned}
 \beta T_{[\mathcal D^j(\one),H_0]}(N)
 &\le C_{\beta,j}|A|^j(1+N)^{-(\zeta/D-j)},\\
 \beta T_{\mathcal D^j(\one)V}(N)
 &\le C_{\beta,j}|A|^{j+1}(1+N)^{-(\zeta/D-j+1)},\\
 |\log Z-\log Z_0|\,T_{\mathcal D^j(\one)}(N)
 &\le C_{\beta,j}|A|^{j+1}(1+N)^{-(\zeta/D-j+1)}.
 \end{aligned}
\end{equation}
Thus only the commutator lowers the tail exponent by one; multiplication by $V$ and the scalar shift preserve the exponent and contribute one further factor of $|A|$.
Adding the three contributions in \eqref{eq:gm:powerlaw-D-map} gives
\begin{equation}\label{eq:gm:powerlaw-iterate-tail-induction}
 \begin{aligned}
 \sum_{|X|>N}\|[\mathcal D^{j+1}(\one)]_X\|_\infty
 &\le C_{\beta,j}\left[|A|^j(1+N)^{-(\zeta/D-j)}
       +|A|^{j+1}(1+N)^{-(\zeta/D-j+1)}\right]\\
 &\le C_{\beta,j+1}|A|^{j+1}(1+N)^{-(\zeta/D-j)},
 \end{aligned}
\end{equation}
where we used $|A|\ge1$ and $1+N\ge1$.
Similarly, \eqref{eq:gm:powerlaw-total-coefficients} controls the total coefficient sums of the commutator and product, while the scalar shift multiplies the input coefficient sum by at most $\beta J_\zeta|A|$.  Hence
\begin{equation}\label{eq:gm:powerlaw-iterate-coefficient-induction}
 \begin{aligned}
 \sum_X\|[\mathcal D^{j+1}(\one)]_X\|_\infty
 &\le C_{\beta,j}\left(|A|^j+|A|^{j+1}\right)\le C_{\beta,j+1}|A|^{j+1}.
 \end{aligned}
\end{equation}
This completes the induction.  At $j=\lceil\zeta/D\rceil$, the output tail exponent is $\zeta/D-\lceil\zeta/D\rceil+1\in(0,1]$; no further commutator is estimated by Lemma~\ref{lem:gm:powerlaw-tail-loss}.
\end{proof}

\begin{lemma}[Endpoint modular spectral tail]\label{lem:gm:powerlaw-modular-tail}
For $R\ge1, K=\log\Delta$,
\begin{equation}\label{eq:gm:powerlaw-spectral-tail}
 \langle\rho^{1/2},\one_{\{|K|\ge R\}}\rho^{1/2}\rangle
 \le C_\beta|A|^{b_\zeta+1}R^{-2(1+\zeta/D)}
 \begin{cases}
  1,&\zeta/D\notin\mathbb N,\\
  [\log(e+R)]^2,&\zeta/D\in\mathbb N.
 \end{cases}
\end{equation}
Consequently, for $\ell\ge1$,
\begin{equation}\label{eq:gm:powerlaw-small-tail}
 \left\langle\rho^{1/2},\log(1+e^{-\ell-K})\rho^{1/2}\right\rangle
 \le C_\beta|A|^{b_\zeta+1}\ell^{-b_\zeta}
 \begin{cases}
  1,&\zeta/D\notin\mathbb N,\\
  [\log(e+\ell)]^2,&\zeta/D\in\mathbb N,
 \end{cases}
\end{equation}
where $b_\zeta=1+2\zeta/D$.  The estimates remain valid, with unchanged constants, after arbitrary interaction deletions.
\end{lemma}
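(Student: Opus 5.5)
The plan is to get a fractional moment of $|K|$ of order $1+\zeta/D$. The integer moments from Lemma~\ref{lem:gm:powerlaw-iterates} give the integer part. The fractional remainder comes from splitting the last iterate by support size.

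Put $m=\lceil\zeta/D\rceil$ and $s=\zeta/D-m+1\in(0,1]$, and let $O=\mathcal D^m(\one)$. By \eqref{eq:gm:K-action}, $K^m\rho^{1/2}=O\rho^{1/2}$. Lemma~\ref{lem:gm:powerlaw-iterates} gives $M_O\le C_\beta|A|^m$ and $T_O(N)\le C_\beta|A|^m(1+N)^{-s}$. Since $K$ is self-adjoint, spectral calculus gives
\[
\langle\rho^{1/2},\one_{\{|K|\ge R\}}\rho^{1/2}\rangle\le R^{-2m}\|\one_{\{|K|\ge R\}}O\rho^{1/2}\|_2^2 .
\]

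To bound the last norm, fix $N\ge1$ and split $O=O_{\le N}+O_{>N}$ according to support size.

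For the large-support part, use $\|O_{>N}\rho^{1/2}\|_2\le\|O_{>N}\|_\infty\le T_O(N)$.

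For the small-support part, insert one more factor of $K$: $\|\one_{\{|K|\ge R\}}O_{\le N}\rho^{1/2}\|_2\le R^{-1}\|K(O_{\le N}\rho^{1/2})\|_2=R^{-1}\|\mathcal D(O_{\le N})\rho^{1/2}\|_2$. Bound the commutator as in the proof of Lemma~\ref{lem:gm:powerlaw-tail-loss} by $2\beta J_0\sum_{|X|\le N}|X|\|O_X\|_\infty$. Bound $OV$ and the scalar shift together by $2\beta g_{A\mid A^c}M_O\le C|A|^{m+1}$. By the layer-cake identity \eqref{eq:gm:coefficient-first-tail-moment}, the truncated first support moment is at most $C|A|^m\int_0^N(1+u)^{-s}du$. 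This is $C|A|^mN^{1-s}$ for $s<1$, and $C|A|^m\log(e+N)$ when $s=1$, i.e.\ when $\zeta/D\in\mathbb N$.

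Altogether,
\[
\|\one_{\{|K|\ge R\}}O\rho^{1/2}\|_2\lesssim |A|^mN^{-s}+R^{-1}\bigl(|A|^mN^{1-s}[\log(e+N)]+|A|^{m+1}\bigr),
\]
where the bracketed logarithm is present only in the integer case.

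The main obstacle is getting exactly the prefactor $|A|^{b_\zeta+1}=|A|^{2(m+s)}$. The naive choice $N=R$ produces $|A|^{2m+2}$, which is too large when $s<1$. So I would choose $N=R/|A|$, assuming $R\ge|A|$. All three terms are then at most $C|A|^{m+s}R^{-s}$, up to the logarithm: the third term uses $|A|^{m+1}R^{-1}=|A|^{m+s}R^{-s}(|A|/R)^{1-s}\le|A|^{m+s}R^{-s}$. Squaring and multiplying by $R^{-2m}$ gives $|A|^{2(m+s)}R^{-2(m+s)}$, with $[\log(e+R)]^2$ in the integer case, and $2(m+s)=2(1+\zeta/D)=b_\zeta+1$.

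In the remaining range $1\le R<|A|$, the spectral probability is at most $1\le|A|^{b_\zeta+1}R^{-(b_\zeta+1)}$, so \eqref{eq:gm:powerlaw-spectral-tail} holds trivially.

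For \eqref{eq:gm:powerlaw-small-tail}, I would reuse the scalar bound \eqref{eq:gm:scalar-log-tail} and the layer-cake formula \eqref{eq:gm:spectral-first-tail-moment} at $b=\ell/2$, exactly as in Lemma~\ref{lem:gm:small-quasilocal}. The term $(1+\ell)\cdot\ell^{-(b_\zeta+1)}$ and the integral $\int_{\ell/2}^\infty R^{-(b_\zeta+1)}[\log(e+R)]^2\,dR$ are both $O(\ell^{-b_\zeta}[\log(e+\ell)]^2)$, with the logarithm only in the integer case. The remaining term $e^{-\ell/2}$ is dominated by $C\ell^{-b_\zeta}\le C|A|^{b_\zeta+1}\ell^{-b_\zeta}$.

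Deletion-uniformity follows because every input is a sum of nonnegative interaction norms. These are $J_0$, $g_{A\mid A^c}$, and the tails in Lemma~\ref{lem:gm:powerlaw-support-tail}, and the constants of Lemma~\ref{lem:gm:powerlaw-iterates} built from them; all of them can only decrease when interaction terms are deleted.
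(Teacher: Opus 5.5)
Your proposal is correct. It reaches the same exponent $2(m+s)=b_\zeta+1$ and the same prefactor as the paper, but by a different mechanism.

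\textbf{How the paper does it.} The paper never splits $\mathcal D^m(\one)$ at a sharp support size. Instead it proves a fractional modulus of continuity for the modular flow, $\|(e^{itK}-\one)K^m\rho^{1/2}\|_2\le C_\beta|A|^{m+s}|t|^s$, with $|t|\log(e/|t|)$ in place of $|t|^s$ when $s=1$. It gets this by bounding each support term through $|e^{it\lambda}-1|\le\min\{2,|t\lambda|\}$, i.e.\ by $\min\{1,|t|(|A|+|X|)\}\|[\mathcal D^m(\one)]_X\|_\infty$. It then converts the increment bound into the tail bound by averaging over $t\in[0,2/R]$, using $\frac R2\int_0^{2/R}|e^{itu}-1|^2\,dt\ge1$ for $|u|\ge R$.

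\textbf{How yours compares.} Your Chebyshev-type split, with one extra $K$ on the small-support part, is the sharp-cutoff version of the paper's smooth weight $\min\{1,|t||X|\}$, with the cutoff at $|X|\sim1/|t|\sim R$. The region factor is absorbed differently. The paper uses $\min\{1,|t||A|\}\le(|t||A|)^s$. You use the choice $N=R/|A|$ plus the trivial range $R<|A|$, and you are right that $N=R$ would cost $|A|^{2m+2}$.

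\textbf{What each route buys.} Your route is more elementary, since it needs no time-averaging identity. Also, your trivial range covers every $R<|A|$, including $R<1$, so your tail bound holds for all $R>0$. Applying it at $R=\ell/2$ therefore needs no separate treatment of $1\le\ell<2$, which the paper handles by a direct bound. In exchange, the paper's route produces the H\"older-type regularity of $t\mapsto e^{itK}K^m\rho^{1/2}$ as an intermediate result. This is the ``fractional regularity'' mentioned in the section introduction.
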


\begin{proof}
Put $m=\lceil\zeta/D\rceil$ and $s=\zeta/D-m+1\in(0,1]$.
For each term $[\mathcal D^m(\one)]_X$ in the decomposition supplied by
Lemma~\ref{lem:gm:powerlaw-iterates},
\eqref{eq:gm:K-action} and the per-site interaction bound imply
\begin{equation}\label{eq:gm:modular-action-single-support-bound}
 \|K([\mathcal D^m(\one)]_X\rho^{1/2})\|_2
 \le\bigl(2\beta J_0|X|+2\beta g_{A\mid A^c}\bigr)\|[\mathcal D^m(\one)]_X\|_\infty.
\end{equation}
The spectral theorem, unitarity of $e^{itK}$, the scalar inequality \(|e^{it\lambda}-1|\le |t\lambda|\), and
$\|[\mathcal D^m(\one)]_X\rho^{1/2}\|_2\le\|[\mathcal D^m(\one)]_X\|_\infty$, followed by summation over $X$ and the triangle inequality, give
\begin{equation}\label{eq:gm:powerlaw-increment-term}
 \begin{aligned}
 \|(e^{itK}-\one)K^m\rho^{1/2}\|_2
 &\le C_\beta\sum_X\min\{1,|t|(|A|+|X|)\}\|[\mathcal D^m(\one)]_X\|_\infty\\
 &\le C_\beta\min\{1,|t||A|\}\sum_X\|[\mathcal D^m(\one)]_X\|_\infty
       +C_\beta\sum_X\min\{1,|t||X|\}\|[\mathcal D^m(\one)]_X\|_\infty.
 \end{aligned}
\end{equation}
For $0<|t|\le1$, we bound the two contributions separately. For the first contribution, use the total coefficient bound in Lemma~\ref{lem:gm:powerlaw-iterates} and $\min\{1,u\}\le u^s$ for $0<s\le1$ to obtain
\begin{equation}\label{eq:gm:modular-increment-region-contribution}
 \min\{1,|t||A|\}\sum_X\|[\mathcal D^m(\one)]_X\|_\infty
 \le C_{\beta,m}|A|^{m+s}|t|^s.
\end{equation}
For the second contribution, use the support-tail bound in Lemma~\ref{lem:gm:powerlaw-iterates}, with exponent $s=\zeta/D-m+1$:
\begin{equation}\label{eq:gm:modular-increment-support-contribution}
 \begin{aligned}
 \sum_X\min\{1,|t||X|\}\|[\mathcal D^m(\one)]_X\|_\infty
 &=|t|\int_0^{1/|t|}\sum_{|X|>u}\|[\mathcal D^m(\one)]_X\|_\infty\,du\\
 &\le C_{\beta,m}|A|^m|t|\int_0^{1/|t|}(1+u)^{-s}\,du\\
 &\le C_\beta|A|^m
 \begin{cases}
 |t|^s,&0<s<1,\\
 |t|\log(e/|t|),&s=1.
 \end{cases}
 \end{aligned}
\end{equation}

Combining the two contributions, using $|A|\ge1$ and $\log(e/|t|)\ge1$, gives
\begin{equation}\label{eq:gm:powerlaw-modular-increment}
 \|(e^{itK}-\one)K^m\rho^{1/2}\|_2
 \le C_\beta|A|^{m+s}
 \begin{cases}
 |t|^s,&0<s<1,\\
 |t|\log(e/|t|),&s=1.
 \end{cases}
\end{equation}

To convert the increment estimate into a spectral tail, use
\begin{equation}\label{eq:gm:modular-increment-squared-identity}
 \|(e^{itK}-\one)K^m\rho^{1/2}\|_2^2
 =2\operatorname{Re}\left\langle\rho^{1/2},(\one-e^{-itK})|K|^{2m}\rho^{1/2}\right\rangle.
\end{equation}
For $|u|\ge R$,
\begin{equation}\label{eq:gm:spectral-time-average-lower-bound}
 \frac R2\int_0^{2/R}|e^{itu}-1|^2\,dt
 =2-\frac R{u}\sin(2u/R)\ge1.
\end{equation}
By spectral calculus and the square of \eqref{eq:gm:powerlaw-modular-increment}, averaged over this time interval, for $R\ge2$,
\begin{equation}\label{eq:gm:powerlaw-weighted-spectral-tail}
 \begin{aligned}
 \langle\rho^{1/2},\one_{\{|K|\ge R\}}|K|^{2m}\rho^{1/2}\rangle
 &\le\frac R2\int_0^{2/R}
 \|(e^{itK}-\one)K^m\rho^{1/2}\|_2^2\,dt\\
 &\le C_\beta|A|^{b_\zeta+1}
 \begin{cases}
 R^{-2s},&0<s<1,\\
 R^{-2}[\log(e+R)]^2,&s=1.
 \end{cases}
 \end{aligned}
\end{equation}
Since $\one_{\{|K|\ge R\}}|K|^{2m}\ge R^{2m}\one_{\{|K|\ge R\}}$, division by $R^{2m}$ proves \eqref{eq:gm:powerlaw-spectral-tail}, since
$m+s=\zeta/D+1$.  Values $1\le R<2$ are covered by increasing the constant.

For $R\ge1$, the first tail-moment identity \eqref{eq:gm:spectral-first-tail-moment} with $b=R$ reads
\begin{equation}\label{eq:gm:powerlaw-first-spectral-tail-moment}
 \langle\rho^{1/2},|K|\one_{\{|K|>R\}}\rho^{1/2}\rangle
 =R\langle\rho^{1/2},\one_{\{|K|>R\}}\rho^{1/2}\rangle
 +\int_R^\infty\langle\rho^{1/2},\one_{\{|K|>v\}}\rho^{1/2}\rangle\,dv.
\end{equation}
Since $2(1+\zeta/D)=b_\zeta+1$, the required scalar integrals are
\begin{equation}\label{eq:gm:powerlaw-tail-integrals}
 \begin{aligned}
 \int_R^\infty v^{-(b_\zeta+1)}\,dv
 &=\frac{R^{-b_\zeta}}{b_\zeta},\\
 \int_R^\infty v^{-(b_\zeta+1)}[\log(e+v)]^2\,dv
 &\le C R^{-b_\zeta}[\log(e+R)]^2.
 \end{aligned}
\end{equation}
The second estimate follows by substituting $v=Rx$ and using $\log(e+Rx)\le\log(e+R)+\log x$ for $x\ge1$.
Substituting \eqref{eq:gm:powerlaw-spectral-tail} into the tail-moment identity therefore gives
\begin{equation}\label{eq:gm:powerlaw-first-spectral-moment-bound}
 \langle\rho^{1/2},|K|\one_{\{|K|>R\}}\rho^{1/2}\rangle
 \le C_\beta|A|^{b_\zeta+1}R^{-b_\zeta}
 \begin{cases}
 1,&\zeta/D\notin\mathbb N,\\
 [\log(e+R)]^2,&\zeta/D\in\mathbb N.
 \end{cases}
\end{equation}
For $\ell\ge2$, apply \eqref{eq:gm:scalar-log-tail} by spectral calculus and enlarge the negative spectral tail to $\{|K|>\ell/2\}$.  Using \eqref{eq:gm:powerlaw-first-spectral-moment-bound} with $R=\ell/2$ and \eqref{eq:gm:powerlaw-spectral-tail} for the term proportional to $\log2$ gives
\begin{equation}\label{eq:gm:powerlaw-small-resolvent-combination}
 \begin{aligned}
 &\left\langle\rho^{1/2},\log(1+e^{-\ell-K})\rho^{1/2}\right\rangle \le e^{-\ell/2}+C_\beta|A|^{b_\zeta+1}
 \left[\ell^{-(b_\zeta+1)}+\ell^{-b_\zeta}\right]
 \begin{cases}
 1,&\zeta/D\notin\mathbb N,\\
 [\log(e+\ell)]^2,&\zeta/D\in\mathbb N.
 \end{cases}
 \end{aligned}
\end{equation}
The bounds $e^{-\ell/2}\le C \ell^{-b_\zeta}$, $\ell^{-(b_\zeta+1)}\le \ell^{-b_\zeta}$, and $|A|\ge1$ prove \eqref{eq:gm:powerlaw-small-tail} for $\ell\ge2$.
For $1\le \ell<2$, \eqref{eq:gm:powerlaw-spectral-tail} gives
\begin{equation}\label{eq:gm:powerlaw-bounded-cutoff-estimate}
 \begin{aligned}
 \left\langle\rho^{1/2},\log(1+e^{-\ell-K})\rho^{1/2}\right\rangle
 &\le\log2+\langle\rho^{1/2},|K|\rho^{1/2}\rangle\\
 &\le\log2+1+\int_1^\infty\langle\rho^{1/2},\one_{\{|K|>v\}}\rho^{1/2}\rangle\,dv\\
 &\le C_\beta|A|^{b_\zeta+1}.
 \end{aligned}
\end{equation}
Enlarging $C_\beta$ extends \eqref{eq:gm:powerlaw-small-tail} to this bounded range and completes the proof.
\end{proof}

\subsection{Spatial leakage for summable power-law profiles}\label{sec:gm:spatial}

\begin{lemma}[Power-law modular leakage]\label{lem:gm:F-leakage}
Under \eqref{eq:gm:powerlaw-assumption},
\begin{equation}\label{eq:gm:F-eta}
 \eta_{AB}(t)
 \le\min\left\{2,\ C\beta J_\zeta|t|e^{v\beta|t|}
        \sum_{a\in A,c\in C}(1+d(a,c))^{-\zeta}\right\},
 \qquad v=C_{D,\zeta}J_\zeta.
\end{equation}
Consequently, for $r=d(A,C)$,
\begin{equation}\label{eq:gm:powerlaw-direct-q}
 \mathfrak q_{AB}\le C_\beta|A|(1+r)^{-\kappa_\beta},
 \qquad \kappa_\beta=\frac{\pi(\zeta-D)}{\pi+\beta v}>0.
\end{equation}
Here $C$ and $C_{D,\zeta}>0$ depend only on $D,\zeta$, while $C_\beta$ depends only on $\beta,D,J_\zeta,\zeta$.  All constants are independent of the regions and finite volume.
\end{lemma}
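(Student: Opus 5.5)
We follow Step~2 of Lemma~\ref{lem:gm:local-leakage}: start from the general Duhamel leakage bound \eqref{eq:gm:duhamel-leakage},
\[
 \eta_{AB}(t)\le 2\int_0^{\beta|t|}\|V(\pm u)-\mathsf E_{AB}(V(\pm u))\|_\infty\,du ,
\]
and replace the finite-range Lieb--Robinson estimate \eqref{eq:gm:local-LR} by a power-law one.

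\emph{Step 1: power-law Lieb--Robinson bound.} For $\zeta>D$ we need a bound of the form
\[
 \|[\tau_u^{H_0}(\Phi(X)),P_y]\|_\infty\le C\|\Phi(X)\|_\infty\|P_y\|_\infty\, e^{v|u|}\sum_{x\in X}F(d(x,y)),
 \qquad F(d)=(1+d)^{-\zeta}.
\]
It should follow from the commutator-path expansion with the interaction matrix $a_{xy}=\sum_{X\ni x,y}\|\Phi(X)\|_\infty\le J_\zeta F(d(x,y))$. The key ingredient is the convolution property $\sum_z F(d(x,z))F(d(z,y))\le C_{D,\zeta}F(d(x,y))$, valid for $\zeta>D$ on $\Z^D$. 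Iterating it through the time-ordered series gives $e^{v|u|}$ with $v=C_{D,\zeta}J_\zeta$ (cf.~\cite{NachtergaeleSimsYoung2019}).

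\emph{Step 2: leakage.} The map $\mathrm{id}-\mathsf E_{AB}$ is the partial average over $C$. We express it as an integral over Haar unitaries on $C$, applied site by site, so that the leakage of an operator is controlled by $\sum_{c\in C}\sup_{\|P_c\|\le1}\|[O,P_c]\|_\infty$. Applied to each cut term $\Phi(X)$ of $V$, Step~1 gives a contribution of order $\|\Phi(X)\|_\infty e^{v|u|}\sum_{x\in X,\,c\in C}F(d(x,c))$.

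Summing over cut terms is the delicate point, because a cut term need not lie in $A$. We handle it with the pair-aggregate assumption: write $\sum_{X\ \text{cut}}\|\Phi(X)\|_\infty\sum_{x\in X}F(d(x,c))$ by choosing an $a\in A\cap X$. For $x\in X$ we have $\sum_{X\ni a,x}\|\Phi(X)\|\le J_\zeta F(d(a,x))$. The sum over $x$ then gives $\sum_x F(d(a,x))F(d(x,c))\le CF(d(a,c))$, again by convolution. This yields
\[
 \|V(u)-\mathsf E_{AB}V(u)\|_\infty\le CJ_\zeta e^{v|u|}\sum_{a\in A,\,c\in C}F(d(a,c)).
\]
Integrating in $u$ up to $\beta|t|$ gives the factor $|t|e^{v\beta|t|}$, and together with the trivial bound $\eta_{AB}\le 2$ this is \eqref{eq:gm:F-eta}.

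\emph{Step 3: integration over modular time.} Use $\sum_{c:\,d(a,c)\ge r}(1+d(a,c))^{-\zeta}\le C(1+r)^{D-\zeta}$, so the sum in \eqref{eq:gm:F-eta} is at most $C|A|(1+r)^{-(\zeta-D)}$. Split the kernel integral at
\[
 T=\frac{(\zeta-D)\log(1+r)}{\pi+\beta v},
\]
exactly as in Step~3 of Lemma~\ref{lem:gm:local-leakage}. On $|t|\le T$ the localized bound gives a contribution of order $|A|e^{v\beta T}(1+r)^{-(\zeta-D)}=|A|e^{-\pi T}$, using $\int|t|/|\sinh\pi t|\,dt<\infty$. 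On $|t|>T$ the kernel tail gives $Ce^{-\pi T}$. Since $e^{-\pi T}=(1+r)^{-\kappa_\beta}$, this is \eqref{eq:gm:powerlaw-direct-q}. For $T<1$ we fall back on \eqref{eq:gm:elementary-leakage} together with $g_{A\mid A^c}\le J_0|A|$.

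\emph{Main obstacle.} We expect the hardest part to be Steps~1--2: a clean Lieb--Robinson bound for interactions of unrestricted support size, with the right dependence on $X$. The path expansion must use only the pair-aggregate quantities $a_{xy}$ rather than $|X|$, and the sum over cut terms must be reorganized through an $A$-site so that no factor $|X|$ or $|\partial A|$-independent volume term appears. If the per-term bound only comes with a factor $|X|$, we would instead bound $\sum_{X\ \text{cut}}|X|\|\Phi(X)\|$ via $J_1$. The convolution argument above is the first route we would try.
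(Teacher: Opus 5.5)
Your proposal is correct and follows the paper's proof essentially step for step: you use the power-law Lieb--Robinson bound driven by the convolution property of $F(r)=(1+r)^{-\zeta}$ and Haar averaging over $C$. You then route the sum over cut terms through a site $a\in A\cap X$ and the pair-aggregate bound, so that a second convolution yields $\sum_{a\in A,c\in C}F(d(a,c))$, and finish with the same modular-time split at $T=(\zeta-D)\log(1+r)/(\pi+\beta v)$ and the elementary fallback for $T<1$. The only detail you leave implicit is the case $X\cap C\neq\varnothing$, which the paper handles with the trivial bound $2\|\Phi(X)\|_\infty$ together with the $F(0)=1$ term already present in the sum.
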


\begin{proof}
Set $F(r)=(1+r)^{-\zeta}$.  Since $\zeta>D$, the lattice sum $\sum_z F(|z|_1)$ is finite, and
\begin{equation}\label{eq:gm:powerlaw-convolution}
 \sum_{z\in\Z^D}F(d(x,z))F(d(z,y))
 \le 2^{\zeta+1}\left(\sum_{z\in\Z^D}F(|z|_1)\right)F(d(x,y)).
\end{equation}
Indeed, for each $z$, either $d(x,z)\ge d(x,y)/2$ or $d(z,y)\ge d(x,y)/2$; bound the corresponding factor by $2^\zeta F(d(x,y))$ and sum the other factor.
Thus the Lieb--Robinson bound \cite[Theorem~3.1]{NachtergaeleSimsYoung2019} applies with the interaction bound $J_\zeta$ from \eqref{eq:gm:powerlaw-assumption}.  Followed by Haar averaging on $C$, it gives for an operator $O_X$ supported on $X$ disjoint from $C$,
\begin{equation}\label{eq:gm:powerlaw-local-observable-leakage}
 \|e^{-iuH_0}O_Xe^{iuH_0}
   -\mathsf E_{AB}(e^{-iuH_0}O_Xe^{iuH_0})\|_\infty
 \le C\|O_X\|_\infty e^{v|u|}\sum_{x\in X,c\in C}F(d(x,c)).
\end{equation}
For $X\cap C\ne\varnothing$, the same inequality, with a larger $C$, follows from the trivial bound $2\|O_X\|_\infty$ and the presence of a term $F(0)$ in the sum.
Removing interactions to form $H_0$ preserves the pair bound with the same $J_\zeta$.

Since each cut support meets $A$, \eqref{eq:gm:powerlaw-assumption} and \eqref{eq:gm:powerlaw-convolution} give
\begin{equation}\label{eq:gm:spatial-summation}
 \begin{aligned}
 \sum_{X\text{ crossing }A\mid A^c}\|\Phi(X)\|_\infty
                   \sum_{x\in X,c\in C}F(d(x,c))&\le\sum_{a\in A}\sum_{x\in\Lambda,c\in C}
        F(d(x,c))\sum_{X\ni a,x}\|\Phi(X)\|_\infty\\
 &\le J_\zeta\sum_{a\in A,c\in C}\sum_{x\in\Lambda}F(d(a,x))F(d(x,c))\\
 &\le C_{D,\zeta}J_\zeta\sum_{a\in A,c\in C}F(d(a,c)).
 \end{aligned}
\end{equation}
This treats direct interactions spanning the entire buffer as well as dynamically spreading terms.
Apply the local approximation bound to each term of $V(u)=e^{-iuH_0}Ve^{iuH_0}$ and use \eqref{eq:gm:spatial-summation}.  Then \eqref{eq:gm:duhamel-leakage} gives
\begin{equation}\label{eq:gm:powerlaw-duhamel-leakage-derivation}
 \begin{aligned}
 \eta_{AB}(t)
 &\le2\int_0^{\beta|t|}
 \|V(\operatorname{sgn}(t)u)-\mathsf E_{AB}(V(\operatorname{sgn}(t)u))\|_\infty\,du\\
 &\le C\left(\int_0^{\beta|t|}e^{vu}\,du\right)
 \sum_{X\text{ crossing }A\mid A^c}\|\Phi(X)\|_\infty
 \sum_{x\in X,c\in C}F(d(x,c))\\
 &\le C\beta J_\zeta|t|e^{v\beta|t|}
 \sum_{a\in A,c\in C}F(d(a,c)).
 \end{aligned}
\end{equation}
The last step uses $e^{vu}\le e^{v\beta|t|}$ throughout the integration interval.  Combining this with $\eta_{AB}(t)\le2$ and $F(r)=(1+r)^{-\zeta}$ proves \eqref{eq:gm:F-eta}.

For every finite $A$,
\begin{equation}\label{eq:gm:powerlaw-separated-pair-sum}
 \sum_{a\in A,c\in C}(1+d(a,c))^{-\zeta}
 \le C_{D,\zeta}|A|(1+r)^{D-\zeta}.
\end{equation}
To get a estimate for $\mathfrak q_{AB}$, for $T\ge1$, split the integral defining $\mathfrak q_{AB}$ in \eqref{eq:gm:q-def} at $|t|=T$.
On $|t|\le T$, use the spatial bound above and $e^{v\beta|t|}\le e^{v\beta T}$; on $|t|>T$, use $\eta_{AB}(t)\le2$.  This gives
\begin{equation}\label{eq:gm:powerlaw-modular-time-split}
 \begin{aligned}
 \mathfrak q_{AB}
 &=\frac12\int_{|t|\le T}\frac{\eta_{AB}(t)}{|\sinh(\pi t)|}\,dt
   +\frac12\int_{|t|>T}\frac{\eta_{AB}(t)}{|\sinh(\pi t)|}\,dt\\
 &\le C_\beta|A|(1+r)^{-(\zeta-D)}e^{v\beta T}
       \int_{|t|\le T}\frac{|t|}{|\sinh(\pi t)|}\,dt
       +\int_{|t|>T}\frac{dt}{|\sinh(\pi t)|}\\
 &\le C_\beta|A|(1+r)^{-(\zeta-D)}e^{v\beta T}+Ce^{-\pi T}.
 \end{aligned}
\end{equation}
The last inequality uses $\int_\R |t|/|\sinh(\pi t)|\,dt=1/2$ and \eqref{eq:gm:kernel-tail}.
Choose
\begin{equation}\label{eq:gm:powerlaw-modular-time-cutoff}
 T=\frac{(\zeta-D)\log(1+r)}{\pi+\beta v}.
\end{equation}
Then the two decay factors agree:
\begin{equation}\label{eq:gm:powerlaw-modular-time-balance}
 (1+r)^{-(\zeta-D)}e^{v\beta T}
 =e^{-\pi T}
 =(1+r)^{-\pi(\zeta-D)/(\pi+\beta v)}.
\end{equation}
Since $|A|\ge1$, the preceding bound is therefore at most $C_\beta|A|e^{-\pi T}$.
If the chosen $T$ is less than $1$, \eqref{eq:gm:elementary-leakage} and $g_{A\mid A^c}\le J_\zeta|A|$ instead give
\begin{equation}\label{eq:gm:powerlaw-short-distance-leakage}
 \begin{aligned}
 \mathfrak q_{AB}
 &\le\frac{\beta J_\zeta|A|}{2}\le\frac{e^\pi\beta J_\zeta}{2}|A|e^{-\pi T},
 \end{aligned}
\end{equation}
because $e^{-\pi T}\ge e^{-\pi}$.  Enlarging $C_\beta$ proves \eqref{eq:gm:powerlaw-direct-q} for all $r$.
\end{proof}

\subsection{Completion of the power-law proof}

\begin{proof}[Proof of Theorem~\ref{thm:gm:powerlaw}]
Throughout this proof, constants depend only on $\beta,D,J_\zeta,\zeta$ and the fixed on-site dimension bound.

\emph{Common input: the modular cutoff estimate.}
For the cut pair $\rho=e^{-\beta(H_0+V)}/Z$ and $\sigma=e^{-\beta H_0}/Z_0$, Lemmas~\ref{lem:gm:cut} and~\ref{lem:gm:resolvent} give
\begin{equation}\label{eq:gm:powerlaw-cutoff}
 I(A:C\mid B)_\rho
 \le\left\langle\rho^{1/2},\log(1+e^{-\ell-K})\rho^{1/2}\right\rangle+e^{-\ell}
       +\mathfrak q_{AB}^{\,2}(e^\ell+2\ell),
\end{equation}
where $K=\log(L_\sigma R_{\rho^{-1}})$.  Lemma~\ref{lem:gm:powerlaw-modular-tail} implies, for $\ell\ge1$,
\begin{equation}\label{eq:gm:powerlaw-tail-input}
 \left\langle\rho^{1/2},\log(1+e^{-\ell-K})\rho^{1/2}\right\rangle
 \le C_\beta (1+|A|)^{b_\zeta+1}\ell^{-b_\zeta}
 \begin{cases}
 1,&\zeta/D\notin\mathbb N,\\
 [\log(e+\ell)]^2,&\zeta/D\in\mathbb N.
 \end{cases}
\end{equation}
Its proof tracks the region factor through $m=\lceil\zeta/D\rceil$ modular derivatives and the final fractional increment of order $s=1+\zeta/D-m$: squaring gives $(1+|A|)^{2(m+s)}=(1+|A|)^{b_\zeta+1}$.  The estimate is uniform under deletion of interaction terms, so it also applies to the truncated Gibbs pairs below.

\emph{Case 1: $D<\zeta\le2D$.}
Choose $\ell=\kappa_\beta\log(1+r)$, with $\kappa_\beta$ from \eqref{eq:gm:powerlaw-direct-q}.  First suppose $D<\zeta<2D$.  Then $1<\zeta/D<2$, so the spectral estimate has no endpoint logarithm.  For sufficiently large $r$, \eqref{eq:gm:powerlaw-cutoff}--\eqref{eq:gm:powerlaw-tail-input} give
\begin{equation}\label{eq:gm:powerlaw-inverse-log-completion}
 \begin{aligned}
 I(A:C\mid B)_\rho
 &\le C_\beta(1+|A|)^{b_\zeta+1}[\log(1+r)]^{-b_\zeta}
       +C_\beta(1+|A|)^2(1+r)^{-\kappa_\beta}\\
 &\le C_\beta(1+|A|)^{b_\zeta+1}[\log(e+r)]^{-b_\zeta}.
 \end{aligned}
\end{equation}
The second term is absorbed because an inverse power of $r$ decays faster than an inverse power of $\log r$.  This proves \eqref{eq:gm:powerlaw-log-main} for large $r$.

At $\zeta=2D$, we have $b_\zeta=5$ and $\zeta/D=2\in\mathbb N$, so we use the second branch of \eqref{eq:gm:powerlaw-tail-input}, which includes the factor $[\log(e+\ell)]^2$.  With the same cutoff, this gives
\begin{equation}\label{eq:gm:powerlaw-log-endpoint-completion}
 \begin{aligned}
 I(A:C\mid B)_\rho
 &\le C_\beta(1+|A|)^6\ell^{-5}[\log(e+\ell)]^2
       +C_\beta(1+|A|)^2(1+r)^{-\kappa_\beta}\\
 &\le C_\beta(1+|A|)^6[\log(e+r)]^{-5}
       [\log(e+\log(e+r))]^2.
 \end{aligned}
\end{equation}
Here $\ell=\kappa_\beta\log(1+r)$, and the algebraically decaying term is again absorbed into the inverse-logarithmic bound.  This proves \eqref{eq:gm:powerlaw-log-endpoint} for large $r$.

\emph{Case 2: $\zeta>2D$.}
We first truncate long interactions touching the shielding neighborhood around $A$.  For sufficiently large $r$, choose an integer $1\le L\le r/24$ and set
\begin{equation}\label{eq:gm:powerlaw-collar}
 \Omega=\{x\in\Lambda:d(x,A)\le\lfloor r/3\rfloor\},
 \qquad
 H^{(L)}=H_\Lambda-
 \sum_{\substack{X\subseteq\Lambda\\X\cap\Omega\ne\varnothing\\\diam X>L}}\Phi(X).
\end{equation}
Then $\Omega\subseteq AB$ and $|\Omega|\le C_D(1+|A|)(1+r)^D$.  Lemma~\ref{lem:gm:powerlaw-support-tail} gives
\begin{equation}\label{eq:gm:powerlaw-truncation-norm}
 \|H_\Lambda-H^{(L)}\|_\infty
 \le C J_\zeta(1+|A|)(1+r)^D(1+L)^{D-\zeta}.
\end{equation}
Let $\rho^{(L)}$ be the Gibbs state of $H^{(L)}$.  Lemma~\ref{lem:gm:gibbs-cmi-lipschitz} and $\log d_A\le C|A|$, where $d_A=\dim\mathcal H_A$, imply
\begin{equation}\label{eq:gm:powerlaw-truncation-cmi}
 \bigl|I(A:C\mid B)_\rho-I(A:C\mid B)_{\rho^{(L)}}\bigr|
 \le C_\beta (1+|A|)^2(1+r)^D(1+L)^{D-\zeta}.
\end{equation}

For propagation inside the neighborhood, choose the stronger spatial weight
\begin{equation}\label{eq:gm:powerlaw-collar-weight}
 \begin{aligned}
 w_L&=\frac{\log(1+L)}{L}, \qquad
 \sup_x\sum_y e^{w_Ld(x,y)}
 \sum_{\substack{X\ni x,y\\X\subseteq\Omega\\\diam X\le L}}\|\Phi(X)\|_\infty
 &\le J_\zeta\sum_{z\in\Z^D}(1+|z|_1)^{1-\zeta}<\infty.
 \end{aligned}
\end{equation}
Indeed, convexity gives $e^{w_Lu}\le1+u$ for $0\le u\le L$, and $\zeta>2D$ implies $\zeta>D+1$.  The bound is independent of $L$.
Applying \eqref{eq:gm:powerlaw-collar-q} from Lemma~\ref{lem:gm:powerlaw-collar-leakage} with $w=w_L=\log(1+L)/L$ yields
\begin{equation}\label{eq:gm:powerlaw-weighted-collar-q}
 \mathfrak q_{AB}^{(L)}
 \le C_\beta (1+|A|)^2(1+r)^D
 \exp\!\left[-c_\beta\frac{r\log(1+L)}L\right].
\end{equation}

Choose the range cutoff, with integer rounding, as
\begin{equation}\label{eq:gm:powerlaw-optimal-range}
 L\asymp r^\alpha,
 \qquad \alpha=\frac{D+b_\zeta}{\zeta-D+b_\zeta}\in(0,1),
\end{equation}
and choose $\ell=c_0r\log(1+L)/L$, with $c_0>0$ sufficiently small relative to the decay constant in \eqref{eq:gm:powerlaw-weighted-collar-q}.  Then $L\le r/24$ for sufficiently large $r$, with a threshold independent of $A$, and
\begin{equation}\label{eq:gm:powerlaw-algebraic-cutoff-scales}
 \ell\asymp r^{1-\alpha}\log r,
 \qquad \log(e+\ell)\asymp\log r.
\end{equation}
Let $K^{(L)}$ be the relative modular logarithm of the truncated pair.  Applying the uniformly valid logarithmic upper bound in \eqref{eq:gm:powerlaw-tail-input} to this pair gives
\begin{equation}\label{eq:gm:powerlaw-log-absorption}
 \begin{aligned}
 \left\langle(\rho^{(L)})^{1/2},\log(1+e^{-\ell-K^{(L)}})(\rho^{(L)})^{1/2}\right\rangle &\le C_\beta (1+|A|)^{b_\zeta+1}\ell^{-b_\zeta}[\log(e+\ell)]^2\\
 &\le C_\beta (1+|A|)^{b_\zeta+1}r^{-b_\zeta(1-\alpha)}(\log r)^{2-b_\zeta}\\
 &\le C_\beta (1+|A|)^{b_\zeta+1}r^{-b_\zeta(1-\alpha)},
 \end{aligned}
\end{equation}
since $b_\zeta>5$.  Thus the logarithmic gain in the localization estimate absorbs the spectral endpoint logarithm.
The remaining terms in \eqref{eq:gm:powerlaw-cutoff} are bounded by
\[
 C_\beta (1+|A|)^4(1+r)^{2D+1}e^{-c_\beta r^{1-\alpha}\log r},
\]
which decays faster than every inverse power of $r$, and $(1+|A|)^4\le (1+|A|)^{b_\zeta+1}$.  Combining with \eqref{eq:gm:powerlaw-truncation-cmi} gives
\begin{equation}\label{eq:gm:powerlaw-two-scales}
 I(A:C\mid B)_\rho
 \le C_\beta (1+|A|)^{b_\zeta+1}
 \left[r^{-b_\zeta(1-\alpha)}+r^{D-\alpha(\zeta-D)}\right].
\end{equation}
The choice of $\alpha$ balances the two exponents:
\begin{equation}\label{eq:gm:powerlaw-algebraic-exponent-balance}
 b_\zeta(1-\alpha)=\alpha(\zeta-D)-D
 =\frac{b_\zeta(\zeta-2D)}{\zeta-D+b_\zeta}=\nu_\zeta.
\end{equation}
This proves \eqref{eq:gm:powerlaw-main} for sufficiently large $r$, including integer values of $\zeta/D$.

For bounded $r$, all three estimates follow by enlarging $C_\beta$, using the same elementary CMI bound and prefactor adjustment as in \eqref{eq:gm:quasilocal-bounded-distance}.  This completes both cases.
\end{proof}

\section{Summary and Discussion}\label{sec:gm:discussion}

Our results relate approximate conditional independence in Gibbs states to the strength and spatial decay of their interactions, without assuming clustering or rapid mixing.  For finite-range interactions, the exponential prefactor is controlled by the interaction strength across the cut and hence by the boundary size.  This improves the exponential volume dependence of previous local Markov bounds while retaining exponential decay with separation at every fixed positive temperature.  The estimates are uniform in the total volume and the size of the distant region.

For exponentially and stretched-exponentially decaying interactions, the static comparison with a cut Gibbs state remains effective even when interaction supports have unbounded size.
Truncating long interactions near $A$ and combining spatial localization with spectral estimates gives the distance exponent $\theta/(D+1-\theta)$ and a boundary-dependent prefactor.

In two dimensions, both the finite-range and exponentially decaying interaction bounds have a useful consequence for coarse-graining.  Consider regular full partitions $\Lambda_s=A_s\sqcup B_s\sqcup C_s$ obtained by enlarging a fixed geometry, with
\begin{equation}\label{eq:gm:planar-rescaling-geometry}
 |\partial_{\mathrm e}A_s|\le ps,
 \qquad d(A_s,C_s)\ge qs,
\end{equation}
where $s$ is the length scale and $p,q>0$ are fixed geometric constants.  The finite-range bound gives
\begin{equation}\label{eq:gm:discussion-2d-scaling}
 I(A_s:C_s\mid B_s)_{\rho_{\Lambda_s,\beta}}
 \le C_\beta\exp\!\left[-(c_\beta q-C_\beta p)s\right].
\end{equation}
At fixed $\beta$, choosing a sufficiently wide buffer relative to the size of $A_s$ makes $c_\beta q>C_\beta p$, so the CMI tends to zero exponentially as the entire partition is enlarged with these proportions held fixed.
For exponentially decaying interactions ($\theta=1$), the boundary and distance terms in the exponent instead both scale as $s^{1/2}$, yielding stretched-exponential suppression for a sufficiently wide proportional buffer.
The essential point in both cases is that the boundary length and buffer width scale linearly in two dimensions.
This suppression under geometric rescaling has a renormalization-group interpretation and may find applications in approximate entanglement bootstrap for mixed quantum states in two dimensions \cite{ShiKatoKim2020, yang2025topologicalmixedstatesphases}.

For power-law interactions, the bounds have polynomial region-size dependence: Theorem~\ref{thm:gm:powerlaw} gives inverse-logarithmic decay for $D<\zeta\le2D$, with a squared iterated-logarithm correction at $\zeta=2D$, and algebraic decay for $\zeta>2D$. 
The sufficient threshold $2D$ arises from the truncation error and the requirement that the retained interaction range be smaller than the separation.
Obtaining algebraic decay for unrestricted-support interactions throughout $D<\zeta\le2D$, improving the distance exponents, and establishing stronger boundary dependence remain open. 

There is also a tradeoff between distance decay and region-size dependence.  Under the respective hypotheses of Theorems~\ref{thm:gm:local} and~\ref{thm:gm:quasilocal}, finite-range and stretched-exponentially decaying interactions admit, for every fixed $p>0$, a bound of the form
\begin{equation}\label{eq:gm:discussion-polynomial-tradeoff}
 I(A:C\mid B)_\rho\le C_{\beta,p}(1+|A|)^{q_p}(1+r)^{-p},
\end{equation}
where $q_p<\infty$ and $C_{\beta,p}$ are independent of the regions and finite volume, but may depend on $p$ and the fixed interaction parameters.  In the finite-range case, choose $\alpha=\alpha_0/(1+\beta g_{A\mid A^c})$ in Corollary~\ref{cor:gm:moment} and use \eqref{eq:gm:coarse-moment} and Lemma~\ref{lem:gm:local-leakage}.  This gives a bound proportional to $(1+\beta g_{A\mid A^c})\exp[-c_\beta r/(1+\beta g_{A\mid A^c})]$, which implies \eqref{eq:gm:discussion-polynomial-tradeoff} using $g_{A\mid A^c}\le\mathfrak dJ|A|$.  For stretched-exponential interactions, keep the moment order $n$ fixed in \eqref{eq:gm:tail-moment-choice}: the resulting small-resolvent bound is polynomial in the boundary size and decays as $\ell^{-(2n-1)}$.  Combining it with the same truncation and leakage estimates, and choosing $n$ sufficiently large for the prescribed $p$, gives \eqref{eq:gm:discussion-polynomial-tradeoff}.  Thus polynomial region-size dependence is available at the cost of algebraic distance decay; the polynomial degree may increase with the requested decay exponent.

Two broader directions concern global Markovianity of Gibbs states and extensions to bosonic systems.  
For finite-range interactions, the goal is to obtain exponential or stretched-exponential CMI decay with at most polynomial region-size dependence beyond the regimes covered by existing high-temperature, clustering, or rapid-mixing assumptions. 
Understanding when the exponential boundary prefactor can be replaced by such a polynomial while retaining these distance-decay rates would strengthen the bounds above.  
Bosonic lattices pose a different challenge because their local Hilbert spaces are infinite-dimensional and their interactions are typically unbounded.  
A possible route is to combine thermal bounds on local occupation numbers with spatial localization and modular spectral estimates, 
while controlling entropy errors through energy-constrained continuity bounds \cite{Winter_2016}.  Establishing estimates that remain uniform as the local occupation cutoff is removed would be a key step toward applying the static approach to interacting bosonic Gibbs states.

\begin{acknowledgments}
T.H.Y. thanks Anthony Chen for useful discussions. T.H.Y is supported by Taiwan-UIUC fellowship program.
\end{acknowledgments}

\appendix
\section{Duhamel's formula and Gr\"onwall's inequality}\label{app:gm:gronwall}

\begin{lemma}[Duhamel's formula]\label{lem:gm:duhamel}
Let $G_1(t),G_2(t)$ be continuous Hermitian matrix-valued functions, and let $U_j(t,s)$ solve $\partial_tU_j(t,s)=-iG_j(t)U_j(t,s)$ with $U_j(s,s)=\one$.  For $t\ge0$,
\begin{equation}\label{eq:gm:duhamel-propagator}
 U_1(t,0)-U_2(t,0)
 =-i\int_0^t U_1(t,s)\bigl(G_1(s)-G_2(s)\bigr)U_2(s,0)\,ds.
\end{equation}
Consequently, unitarity gives
\begin{equation}\label{eq:gm:duhamel-propagator-norm}
 \|U_1(t,0)-U_2(t,0)\|_\infty
 \le\int_0^t\|G_1(s)-G_2(s)\|_\infty\,ds.
\end{equation}
For a continuously differentiable matrix-valued function $M(s)$, the exponential derivative takes the form
\begin{equation}\label{eq:gm:duhamel-exponential-derivative}
 \frac{d}{ds}e^{M(s)}
 =\int_0^1 e^{(1-u)M(s)}M'(s)e^{uM(s)}\,du.
\end{equation}
In particular, for $\rho_s=e^{-\beta(H+sW)}/\Tr e^{-\beta(H+sW)}$ with $H,W$ Hermitian,
\begin{equation}\label{eq:gm:duhamel-normalized-gibbs}
 \rho_s'=-\beta\int_0^1\rho_s^{1-u}
 \bigl(W-\Tr(\rho_sW)\one\bigr)\rho_s^u\,du.
\end{equation}
\end{lemma}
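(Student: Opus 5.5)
We derive all three parts of Lemma~\ref{lem:gm:duhamel} from one differentiated product, then specialize to the Gibbs path.

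\emph{Propagator identity.} Fix $t\ge0$ and set $F(s)=U_1(t,s)U_2(s,0)$ for $0\le s\le t$. The propagators compose as $U_1(t,s)U_1(s,0)=U_1(t,0)$, and $U_1(s,0)$ is invertible, so $U_1(t,s)=U_1(t,0)U_1(s,0)^{-1}$. Differentiating this gives $\partial_sU_1(t,s)=iU_1(t,s)G_1(s)$. By the product rule,
\[
 F'(s)=iU_1(t,s)G_1(s)U_2(s,0)-iU_1(t,s)G_2(s)U_2(s,0).
\]
Integrating from $0$ to $t$ gives $F(t)-F(0)=U_2(t,0)-U_1(t,0)$. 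This is exactly \eqref{eq:gm:duhamel-propagator} after multiplying through by $-1$.

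\emph{Norm bound.} Hermiticity of $G_j$ makes each propagator unitary, since $\partial_s(U_j^\dagger U_j)=0$. Taking operator norms inside the integral in \eqref{eq:gm:duhamel-propagator} and using unitarity gives \eqref{eq:gm:duhamel-propagator-norm}.

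\emph{Exponential derivative.} The same device works with $\Psi(u)=e^{(1-u)M(s+h)}e^{uM(s)}$. Its derivative is
\[
 \Psi'(u)=-e^{(1-u)M(s+h)}\bigl(M(s+h)-M(s)\bigr)e^{uM(s)},
\]
so integrating over $u\in[0,1]$ gives
\[
 e^{M(s+h)}-e^{M(s)}=\int_0^1e^{(1-u)M(s+h)}\bigl(M(s+h)-M(s)\bigr)e^{uM(s)}\,du.
\]
Dividing by $h$ and letting $h\to0$, using continuity of the exponential and $M\in C^1$, yields \eqref{eq:gm:duhamel-exponential-derivative}.

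\emph{Gibbs path.} Apply \eqref{eq:gm:duhamel-exponential-derivative} with $M(s)=-\beta(H+sW)$. Write $Z(s)=\Tr e^{M(s)}$, so that $e^{uM}=Z^u\rho_s^u$. The unnormalized derivative becomes
\[
 \frac{d}{ds}e^{M(s)}=-\beta Z\int_0^1\rho_s^{1-u}W\rho_s^u\,du.
\]
Taking the trace and using cyclicity gives $Z'=-\beta Z\Tr(\rho_sW)$. The quotient rule $\rho_s'=(e^{M})'/Z-e^{M}Z'/Z^2$ then produces the $W$ term together with $+\beta\Tr(\rho_sW)\rho_s$. Since $\int_0^1\rho_s^{1-u}\rho_s^u\,du=\rho_s$, this extra term combines with the integral into $-\beta\int_0^1\rho_s^{1-u}(W-\Tr(\rho_sW)\one)\rho_s^u\,du$, which is \eqref{eq:gm:duhamel-normalized-gibbs}.

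There is no real obstacle here. The only points needing care are the sign in $\partial_sU_1(t,s)$ and the normalization step in the last part.
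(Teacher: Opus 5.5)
Your proof is correct and follows the paper's argument essentially step for step: you differentiate $U_1(t,s)U_2(s,0)$ in $s$, interpolate $e^{(1-u)M(s+h)}e^{uM(s)}$ in $u$ before letting $h\to0$, and differentiate the normalization using cyclicity of the trace. You also spell out details the paper leaves implicit, namely the derivation of $\partial_sU_1(t,s)=iU_1(t,s)G_1(s)$ from the composition law and the explicit quotient-rule computation for the Gibbs path.
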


\begin{proof}
Differentiate $U_1(t,s)U_2(s,0)$ with respect to $s$.  Its derivative is $iU_1(t,s)(G_1(s)-G_2(s))U_2(s,0)$.  Integrating from $0$ to $t$ proves \eqref{eq:gm:duhamel-propagator}; taking the norm proves \eqref{eq:gm:duhamel-propagator-norm}.
For the exponential derivative, differentiate $e^{(1-u)M(s+h)}e^{uM(s)}$ with respect to $u$ and integrate to obtain
\begin{equation}\label{eq:gm:duhamel-exponential-difference}
 e^{M(s+h)}-e^{M(s)}
 =\int_0^1e^{(1-u)M(s+h)}\bigl(M(s+h)-M(s)\bigr)e^{uM(s)}\,du.
\end{equation}
Divide by $h$ and let $h\to0$ to prove \eqref{eq:gm:duhamel-exponential-derivative}.  Apply it with $M(s)=-\beta(H+sW)$ and differentiate the normalization.  Cyclicity gives $\frac{d}{ds}\log\Tr e^{-\beta(H+sW)}=-\beta\Tr(\rho_sW)$, yielding \eqref{eq:gm:duhamel-normalized-gibbs}.
\end{proof}

\begin{lemma}[Gr\"onwall's inequality]\label{lem:gm:gronwall}
Let $T>0$, let $f:[0,T]\to[0,\infty)$ be continuous, and let $a:[0,T]\to[0,\infty)$ be integrable.  If, for some $c\ge0$,
\begin{equation}\label{eq:gm:gronwall-integral-assumption}
 f(u)\le c+\int_0^u a(s)f(s)\,ds
 \qquad(0\le u\le T),
\end{equation}
then
\begin{equation}\label{eq:gm:gronwall-exponential-bound}
 f(u)\le c\exp\!\left(\int_0^u a(s)\,ds\right)
 \qquad(0\le u\le T).
\end{equation}
\end{lemma}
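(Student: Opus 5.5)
The statement to prove is the integral Grönwall inequality of Lemma~\ref{lem:gm:gronwall}. I will use the standard auxiliary-function argument. Set
\[
 \Phi(u)=c+\int_0^u a(s)f(s)\,ds,\qquad A(u)=\int_0^u a(s)\,ds .
\]
Since $a$ is integrable and $f$ is continuous on the compact interval, $af$ is integrable. Hence $\Phi$ is absolutely continuous, and $\Phi'(u)=a(u)f(u)$ for almost every $u$. The hypothesis says exactly that $f\le\Phi$ on $[0,T]$.

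\emph{Step 1.} Because $a\ge0$, the inequality $f\le\Phi$ gives $\Phi'(u)\le a(u)\Phi(u)$ almost everywhere.

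\emph{Step 2.} Consider $\Psi(u)=e^{-A(u)}\Phi(u)$. It is a product of absolutely continuous functions, so it is absolutely continuous, and almost everywhere
\[
 \Psi'(u)=e^{-A(u)}\bigl(\Phi'(u)-a(u)\Phi(u)\bigr)\le0 .
\]

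\emph{Step 3.} Integrating from $0$ to $u$ gives $\Psi(u)\le\Psi(0)=c$, that is, $\Phi(u)\le c\,e^{A(u)}$. Combining this with $f\le\Phi$ proves \eqref{eq:gm:gronwall-exponential-bound}.

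The only delicate point is regularity. Since $a$ is merely integrable, the derivatives hold only almost everywhere, and we need the fundamental theorem of calculus for absolutely continuous functions; it supplies exactly this. The product rule for absolutely continuous functions justifies the formula for $\Psi'$.

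If one prefers to avoid that, an alternative route is to iterate the inequality. Substituting the hypothesis into itself $n$ times and using the symmetric-simplex identity $\int_{0\le s_1\le\cdots\le s_n\le u}a(s_1)\cdots a(s_n)\,ds=A(u)^n/n!$ gives
\[
 f(u)\le c\sum_{k<n}\frac{A(u)^k}{k!}+\|f\|_\infty\frac{A(u)^n}{n!}.
\]
The remainder tends to $0$ as $n\to\infty$, which again yields \eqref{eq:gm:gronwall-exponential-bound}.
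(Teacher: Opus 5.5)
Your argument is correct and is essentially the paper's proof: the same auxiliary function $c+\int_0^u a(s)f(s)\,ds$, the almost-everywhere inequality $\Phi'\le a\Phi$, and the integrating factor $e^{-\int_0^u a(s)\,ds}$, with your absolute-continuity remarks supplying regularity that the paper leaves implicit. Your integrating-factor step also covers $c=0$ directly, since $\Psi(u)\le\Psi(0)=0$, so the paper's $\varepsilon\downarrow0$ device is not needed; your iteration route is a valid alternative.
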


\begin{proof}
Put $g(u)=c+\int_0^u a(s)f(s)\,ds$.  Then $f\le g$, $g(0)=c$, and
$g'(u)=a(u)f(u)\le a(u)g(u)$ almost everywhere.
Multiplying by $\exp[-\int_0^u a(s)\,ds]$ and integrating proves the assertion.
For $c=0$, first replace $c$ by $\varepsilon>0$ and then let $\varepsilon\downarrow0$.
\end{proof}

\section{Resolvent and variational representations of the entropy loss}\label{app:gm:gap-integral}

\noindent\textbf{Lemma~\ref{lem:gm:gap-integral} (restated).} Let $\rho,\sigma$ be faithful states on the finite-dimensional tensor product $R\otimes E$, and set $\Delta_R=L_{\sigma_R}R_{\rho_R^{-1}}$.  Set $\Delta=L_\sigma R_{\rho^{-1}}$ and $\delta_R(\rho,\sigma)=D(\rho\|\sigma)-D(\rho_R\|\sigma_R)$.  Then
\[
 \begin{split}
 \delta_R(\rho,\sigma)&=\int_0^\infty j(s)\,ds,\\
 j(s)&=\langle\rho^{1/2},(\Delta+s)^{-1}\rho^{1/2}\rangle
 -\langle\rho_R^{1/2},(\Delta_R+s)^{-1}\rho_R^{1/2}\rangle.
 \end{split}
\]

The integral is absolutely convergent.

\begin{proof}
Left multiplication by $\sigma$ and right multiplication by $\rho^{-1}$ are commuting positive operators on Hilbert--Schmidt space.  Hence
\begin{equation}\label{eq:gm:appendix-modular-log-action}
 (\log\Delta)(Y)=(\log\sigma)Y-Y\log\rho,
\end{equation}
and therefore
\begin{equation}\label{eq:gm:appendix-modular-relative-entropy}
 -\langle\rho^{1/2},\log\Delta\,\rho^{1/2}\rangle
 =\Tr\rho(\log\rho-\log\sigma)=D(\rho\|\sigma).
\end{equation}
The same calculation on the reduced space gives
$-\langle\rho_R^{1/2},\log\Delta_R\,\rho_R^{1/2}\rangle=D(\rho_R\|\sigma_R)$.

For $\lambda>0$, direct integration gives
\begin{equation}\label{eq:gm:scalar-log-resolvent-integral}
 \int_0^T\left(\frac1{\lambda+s}-\frac1{1+s}\right)ds
 =\log\frac{\lambda+T}{1+T}-\log\lambda
 \xrightarrow[T\to\infty]{}-\log\lambda.
\end{equation}
Applying this identity by spectral calculus and using
$\|\rho^{1/2}\|_2^2=\|\rho_R^{1/2}\|_2^2=1$, we obtain
\begin{equation}\label{eq:gm:relative-entropy-resolvent-integrals}
 \begin{aligned}
 D(\rho\|\sigma)
 &=\int_0^\infty\left(\langle\rho^{1/2},(\Delta+s)^{-1}\rho^{1/2}\rangle-\frac1{1+s}\right)ds,\\
 D(\rho_R\|\sigma_R)
 &=\int_0^\infty\left(\langle\rho_R^{1/2},(\Delta_R+s)^{-1}\rho_R^{1/2}\rangle-\frac1{1+s}\right)ds.
 \end{aligned}
\end{equation}
Faithfulness makes both integrands bounded near $s=0$, and the resolvent expansion at infinity makes them $O(s^{-2})$.  Thus both integrals converge absolutely, and subtraction cancels the scalar terms and proves \eqref{eq:gm:gap-integral}.
\end{proof}

\noindent\textbf{Lemma~\ref{lem:gm:variational} (restated).} Under the setup of Lemma~\ref{lem:gm:gap-integral}, define
\[
 W_R(Y)=\bigl(Y\rho_R^{-1/2}\otimes\one_E\bigr)\rho^{1/2}.
\]
This map is an isometry and satisfies
\[
 W_R^\dagger(Z)=\Tr_E(Z\rho^{1/2})\rho_R^{-1/2},\qquad
 W_R\rho_R^{1/2}=\rho^{1/2},\qquad
 W_R^\dagger\Delta W_R=\Delta_R.
\]
For every $s>0$,
\[
 j(s)=\inf_{y\in\Ran W_R}
 \bigl\|(\Delta+s)^{1/2}\bigl(y-(\Delta+s)^{-1}\rho^{1/2}\bigr)\bigr\|_2^2.
\]
The infimum is attained uniquely at $y=W_R(\Delta_R+s)^{-1}\rho_R^{1/2}$.  In particular, $j(s)\ge0$, and every trial vector in $\Ran W_R$ gives an upper bound on $j(s)$.

\begin{proof}
With the convention $\langle X,Z\rangle=\Tr(X^\dagger Z)$, cyclicity of the trace and the defining property of the partial trace give, for arbitrary $Y$ on $R$ and $Z$ on $R\otimes E$,
\begin{equation}\label{eq:gm:isometry-adjoint-derivation}
 \begin{aligned}
 \langle W_R(Y),Z\rangle
 &=\Tr\!\left[\rho^{1/2}\bigl(\rho_R^{-1/2}Y^\dagger\otimes\one_E\bigr)Z\right]\\
 &=\Tr_R\!\left[\rho_R^{-1/2}Y^\dagger\Tr_E(Z\rho^{1/2})\right]\\
 &=\Tr_R\!\left[Y^\dagger\Tr_E(Z\rho^{1/2})\rho_R^{-1/2}\right]\\
 &=\left\langle Y,\Tr_E(Z\rho^{1/2})\rho_R^{-1/2}\right\rangle.
 \end{aligned}
\end{equation}
By the defining identity $\langle W_R(Y),Z\rangle=\langle Y,W_R^\dagger(Z)\rangle$, this proves
\[
 W_R^\dagger(Z)=\Tr_E(Z\rho^{1/2})\rho_R^{-1/2}.
\]
Consequently,
\begin{equation}\label{eq:gm:isometry-compression-derivation}
 \begin{aligned}
 W_R^\dagger W_R(Y)
 &=\Tr_E\!\left[\bigl(Y\rho_R^{-1/2}\otimes\one_E\bigr)\rho\right]\rho_R^{-1/2}=Y,\\
 W_R^\dagger\Delta W_R(Y)
 &=\Tr_E\!\left[\sigma\bigl(Y\rho_R^{-1/2}\otimes\one_E\bigr)\right]\rho_R^{-1/2}
 =\sigma_R Y\rho_R^{-1}=\Delta_R(Y).
 \end{aligned}
\end{equation}
Together with $W_R\rho_R^{1/2}=\rho^{1/2}$, which follows directly from the definition, these calculations prove \eqref{eq:gm:compression}.
Write $y=W_R Y$.  Expanding the squared norm and using \eqref{eq:gm:compression} gives
\begin{equation}\label{eq:gm:resolvent-variational-square-completion}
 \begin{aligned}
 &\bigl\|(\Delta+s)^{1/2}\bigl(W_R Y-(\Delta+s)^{-1}\rho^{1/2}\bigr)\bigr\|_2^2\\
 &\quad=\langle Y,(\Delta_R+s)Y\rangle
       -2\operatorname{Re}\langle Y,\rho_R^{1/2}\rangle
       +\langle\rho^{1/2},(\Delta+s)^{-1}\rho^{1/2}\rangle\\
 &\quad=\bigl\|(\Delta_R+s)^{1/2}
       \bigl(Y-(\Delta_R+s)^{-1}\rho_R^{1/2}\bigr)\bigr\|_2^2+j(s).
 \end{aligned}
\end{equation}
Since $\Delta_R+s$ is strictly positive, the last squared norm vanishes exactly at $Y=(\Delta_R+s)^{-1}\rho_R^{1/2}$.  Taking the infimum proves \eqref{eq:gm:variational}, including uniqueness of the minimizer.  The left-hand side is nonnegative for every trial vector, so its minimum is $j(s)\ge0$, and each trial value is at least $j(s)$.
\end{proof}

\subsection{Derivation of the Fourier resolvent identity}\label{app:gm:fourier-resolvent}
For completeness, we derive \eqref{eq:gm:fourier-resolvent} from the standard integral \cite[Eq.~(4.40.8)]{DLMF}
\begin{equation}\label{eq:gm:hyperbolic-sine-transform}
 \int_0^\infty\frac{\sinh(at)}{\sinh(\pi t)}\,dt
 =\frac12\tan(a/2),\qquad -\pi<a<\pi.
\end{equation}
Both sides are holomorphic in the strip $|\operatorname{Re}a|<\pi$, so the identity extends to that strip.  Substituting $a=iv$, with $v\in\R$, and cancelling $i$ gives
\begin{equation}\label{eq:gm:sine-hyperbolic-transform}
 \int_0^\infty\frac{\sin(vt)}{\sinh(\pi t)}\,dt
 =\frac12\tanh(v/2).
\end{equation}
Set $v=u-\log s$ and $w=-\log s$.  Since
$s^{-it}(e^{itu}-1)=e^{ivt}-e^{iwt}$, pairing the positive and negative halves of the integral yields
\begin{equation}\label{eq:gm:resolvent-fourier-transform-calculation}
 \begin{aligned}
 \int_\R\frac{s^{-it}(e^{itu}-1)}{\sinh(\pi t)}\,dt
 &=2i\int_0^\infty\frac{\sin(vt)-\sin(wt)}{\sinh(\pi t)}\,dt\\
 &=i\left[\tanh\!\left(\frac{u-\log s}{2}\right)
          -\tanh\!\left(\frac{-\log s}{2}\right)\right]\\
 &=i\left[\frac{e^u-s}{e^u+s}-\frac{1-s}{1+s}\right].
 \end{aligned}
\end{equation}
Multiplication by $i/(2s)$ therefore gives
\begin{equation}\label{eq:gm:resolvent-fourier-identity-completion}
 \frac{i}{2s}\int_\R\frac{s^{-it}(e^{itu}-1)}{\sinh(\pi t)}\,dt
 =\frac1{s+e^u}-\frac1{1+s},
\end{equation}
which is \eqref{eq:gm:fourier-resolvent}.  All integrals in the last two displays are ordinary absolutely convergent integrals: the numerator vanishes to first order at $t=0$, and $1/\sinh(\pi t)$ decays exponentially at infinity.

\clearpage
\section{Notation}\label{app:gm:notation}

Table~\ref{tab:gm:notation} collects the principal notation.  The references specify the scope of symbols used in different parts of the proof; auxiliary integration variables are defined locally.

\begingroup
\small
\setlength{\tabcolsep}{4pt}
\renewcommand{\arraystretch}{1.15}
\begin{center}
\makeatletter
\def\@captype{table}
\makeatother
\caption{Principal notation and its defining locations.}
\label{tab:gm:notation}
\begin{tabular}{p{3.1cm}p{9.4cm}p{2.8cm}}
\hline
\textbf{Symbol} & \textbf{Meaning} & \textbf{Definition} \tabularnewline
\hline
\multicolumn{3}{l}{\textbf{Geometry, states, and interactions}} \tabularnewline
\hline
\raggedright $\Gamma,\Lambda$ & \raggedright Underlying set of sites and a finite volume $\Lambda\Subset\Gamma$. & \raggedright Sec.~\ref{sec:gm:setting} \tabularnewline
\raggedright $A,B,C;\ r$ & \raggedright Full partition $\Lambda=A\sqcup B\sqcup C$; separation $r=d(A,C)$. The complement $A^c$ is taken in $\Lambda$. & \raggedright \eqref{eq:gm:full-partition} \tabularnewline
\raggedright $D;\ \partial_{\mathrm e}A$ & \raggedright Spatial dimension; nearest-neighbor edges with exactly one endpoint in $A$. & \raggedright Thm.~\ref{thm:gm:local} \tabularnewline
\raggedright $\mathcal H_X;\ d_A$ & \raggedright Tensor-product Hilbert space on $X$; $d_A=\dim\mathcal H_A$. & \raggedright Sec.~\ref{sec:gm:setting}; Lem.~\ref{lem:gm:gibbs-cmi-lipschitz} \tabularnewline
\raggedright $\Phi(X),h_\gamma$ & \raggedright Interaction terms indexed by their support $X$, or individually by $\gamma$. & \raggedright \eqref{eq:gm:gibbs}; Thm.~\ref{thm:gm:local} \tabularnewline
\raggedright $H_\Lambda,\beta,Z_\Lambda$ & \raggedright Hamiltonian, inverse temperature, and partition function; $\rho_{\Lambda,\beta}=e^{-\beta H_\Lambda}/Z_\Lambda$. & \raggedright \eqref{eq:gm:gibbs} \tabularnewline
\raggedright $H_0,V;\ g_{A\mid A^c}$ & \raggedright $H_0=H_A+H_{BC}$ and $H_\Lambda=H_0+V$; $g_{A\mid A^c}$ sums the norms of all interaction terms crossing $A\mid A^c$. & \raggedright \eqref{eq:gm:cut-strength}; Lem.~\ref{lem:gm:cut} \tabularnewline
\raggedright $\rho,\sigma;\ \rho_X$ & \raggedright Faithful states in the modular argument; for the cut Gibbs pair, $\rho=e^{-\beta H_\Lambda}/Z$ and $\sigma=e^{-\beta H_0}/Z_0$. A subscript $X$ denotes a reduced state. & \raggedright Lem.~\ref{lem:gm:cut} \tabularnewline
\raggedright $S(\omega),I(A:C\mid B)_\rho$ & \raggedright Von Neumann entropy and conditional mutual information, with natural logarithms. & \raggedright \eqref{eq:gm:intro-cmi} \tabularnewline
\raggedright $D(\rho\|\sigma)$ & \raggedright Quantum relative entropy; distinct from the spatial dimension $D$. & \raggedright \eqref{eq:gm:entropy-definitions} \tabularnewline
\raggedright $\|O\|_\infty,\|O\|_1,\|O\|_2$ & \raggedright Operator, trace, and Hilbert--Schmidt norms; $\langle X,Y\rangle=\Tr(X^\dagger Y)$. & \raggedright Sec.~\ref{sec:gm:setting} \tabularnewline
\hline
\multicolumn{3}{l}{\textbf{Relative modular operators and localization}} \tabularnewline
\hline
\raggedright $R,E$ & \raggedright Retained and traced-out subsystems in the abstract argument; $R=AB$ and $E=C$ for the Gibbs application. & \raggedright Sec.~\ref{sec:gm:modular} \tabularnewline
\raggedright $L_X,R_X$ & \raggedright Left and right multiplication: $L_X(Y)=XY$, $R_X(Y)=YX$. & \raggedright \eqref{eq:gm:relative-modular} \tabularnewline
\raggedright $\Delta,\Delta_R;\ K$ & \raggedright $\Delta=L_\sigma R_{\rho^{-1}}$, $\Delta_R=L_{\sigma_R}R_{\rho_R^{-1}}$; $K=\log\Delta$, acting on Hilbert--Schmidt space. & \raggedright \eqref{eq:gm:relative-modular}; Lem.~\ref{lem:gm:gap-integral} \tabularnewline
\raggedright $\delta_R(\rho,\sigma)$ & \raggedright Relative-entropy loss under $\Tr_E$: $D(\rho\|\sigma)-D(\rho_R\|\sigma_R)$. & \raggedright \eqref{eq:gm:loss} \tabularnewline
\raggedright $W_R$ & \raggedright Isometry $Y\mapsto(Y\rho_R^{-1/2}\otimes\one_E)\rho^{1/2}$ from the reduced Hilbert--Schmidt space. & \raggedright \eqref{eq:gm:isometry} \tabularnewline
\raggedright $j(s)$ & \raggedright Full minus reduced resolvent expectation, whose integral is $\delta_R$; $s>0$ is the resolvent parameter. & \raggedright \eqref{eq:gm:gap-integral} \tabularnewline
\raggedright $\mathsf E_R,\eta_R(t)$ & \raggedright Normalized partial-trace conditional expectation and modular leakage of $\sigma^{it}\rho^{-it}$ outside $R$. & \raggedright \eqref{eq:gm:q-def} \tabularnewline
\raggedright $\mathfrak q_R$ & \raggedright Integrated modular leakage $\frac12\int_\R\eta_R(t)/|\sinh(\pi t)|\,dt$. & \raggedright \eqref{eq:gm:q-def} \tabularnewline
\raggedright $a,b;\ \ell$ & \raggedright Lower and upper resolvent cutoffs, $0<a<1<b$; the symmetric choice is $a=e^{-\ell}$, $b=e^\ell$, with $\ell>0$. & \raggedright Lem.~\ref{lem:gm:resolvent} \tabularnewline
\raggedright $\alpha,\alpha_0$ & \raggedright Negative modular-moment parameter and its uniform admissible upper bound in the finite-range proof. & \raggedright \eqref{eq:gm:alpha0} \tabularnewline
\hline
\end{tabular}
\end{center}
\newpage
\begin{center}
\textbf{Table~\ref{tab:gm:notation} (continued)}\par\medskip
\begin{tabular}{p{3.1cm}p{9.4cm}p{2.8cm}}
\hline
\textbf{Symbol} & \textbf{Meaning} & \textbf{Definition} \tabularnewline
\hline
\multicolumn{3}{l}{\textbf{Spatial decay and support weights}} \tabularnewline
\hline
\raggedright $R_0,J,\mathfrak d$ & \raggedright Finite interaction range, term-norm bound, and maximum number of overlapping terms, counting the term itself. & \raggedright Thm.~\ref{thm:gm:local} \tabularnewline
\raggedright $\mu_{\mathrm{LR}},v$ & \raggedright Spatial decay coefficient and temporal growth rate in the finite-range Lieb--Robinson bound. The rate $v$ is chosen separately for each interaction class. & \raggedright \eqref{eq:gm:local-LR} \tabularnewline
\raggedright $\lambda_\beta$ & \raggedright Finite-range integrated modular leakage decay rate $\pi\mu_{\mathrm{LR}}/(\pi+\beta v)$. & \raggedright \eqref{eq:gm:local-q} \tabularnewline
\raggedright $\mu,\theta,J_{\mu,\theta}$ & \raggedright Spatial decay parameters $\mu>0$, $0<\theta\le1$, and aggregate interaction bound $\sum_{X\ni x,y}\|\Phi(X)\|_\infty\le J_{\mu,\theta}e^{-\mu d(x,y)^\theta}$. & \raggedright \eqref{eq:gm:F-assumption} \tabularnewline
\raggedright $F$ & \raggedright Power-law profile $F(r)=(1+r)^{-\zeta}$ used in the spatial leakage proof. & \raggedright Sec.~\ref{sec:gm:spatial} \tabularnewline
\raggedright $a_{\mathrm{supp}}$ & \raggedright Positive support-weight parameter, distinct from the resolvent cutoff $a$. & \raggedright Lem.~\ref{lem:gm:support-moments} \tabularnewline
\raggedright $J_{2a_{\mathrm{supp}}}$ & \raggedright Per-site interaction norm weighted by $e^{2a_{\mathrm{supp}}|X|^{\theta/D}}$; $J_{a_{\mathrm{supp}}}$ uses half this weight. & \raggedright \eqref{eq:gm:Ja} \tabularnewline
\raggedright $\|O\|_{s,\mathrm{dec}}$ & \raggedright Infimum of $\sum_X e^{s|X|^{\theta/D}}\|O_X\|_\infty$ over local decompositions of $O$, for $0\le s\le a_{\mathrm{supp}}$. Here $s$ is a support weight. & \raggedright \eqref{eq:gm:decomp-norm} \tabularnewline
\hline
\raggedright $\Omega,L,H^{(L)},\rho^{(L)}$ & \raggedright Shielding neighborhood around $A$, range cutoff, Hamiltonian with long terms touching the neighborhood removed, and its Gibbs state. & \raggedright Proof of Thm.~\ref{thm:gm:quasilocal}; \eqref{eq:gm:powerlaw-collar} \tabularnewline
\raggedright $w,J_*$ & \raggedright Positive spatial weight and a uniform bound on the weighted interaction sum inside the neighborhood; the integrated modular leakage decays as $e^{-c_\beta wr}$. & \raggedright Lem.~\ref{lem:gm:powerlaw-collar-leakage} \tabularnewline
\raggedright $w_L$ & \raggedright Spatial weight inside the neighborhood: $(\mu/2)L^{\theta-1}$ for stretched-exponential interactions, and $\log(1+L)/L$ for power-law interactions with $\zeta>2D$. & \raggedright \eqref{eq:gm:quasilocal-collar-weight}, \eqref{eq:gm:powerlaw-collar-weight} \tabularnewline
\hline
\multicolumn{3}{l}{\textbf{Power-law bounds}} \tabularnewline
\hline
\raggedright $\zeta,J_\zeta$ & \raggedright Power-law decay exponent and weighted aggregate pair norm. & \raggedright \eqref{eq:gm:powerlaw-assumption}; Lem.~\ref{lem:gm:powerlaw-support-tail} \tabularnewline
\raggedright $b_\zeta,\nu_\zeta$ & \raggedright $b_\zeta=1+2\zeta/D$ is the small-resolvent tail exponent; $\nu_\zeta$ is the algebraic CMI exponent, with no logarithmic correction. & \raggedright \eqref{eq:gm:powerlaw-exponents} \tabularnewline
\raggedright $J_0,J_1$ & \raggedright Per-site interaction sums without and with one factor of the support size $|X|$. & \raggedright \eqref{eq:gm:powerlaw-J01} \tabularnewline
\raggedright $T_O(N),\mathcal D$ & \raggedright Coefficient tail $T_O(N)=\sum_{|X|>N}\|O_X\|_\infty$ for a chosen decomposition; $\mathcal D$ satisfies $K(O\rho^{1/2})=\mathcal D(O)\rho^{1/2}$. & \raggedright \eqref{eq:gm:powerlaw-D-map} \tabularnewline

\raggedright $d_A;\ C_\beta$ & \raggedright $d_A=\dim\mathcal H_A$. Constants in the power-law theorem depend only on $\beta,D,J_\zeta,\zeta$ and the fixed on-site dimension bound; all region-size dependence is explicit. & \raggedright Thm.~\ref{thm:gm:powerlaw} \tabularnewline
\hline
\end{tabular}
\end{center}
\endgroup
\clearpage

\makeatletter
\immediate\write\@auxout{\string\citation{GibbsBibliographyControl}}
\makeatother
\bibliographystyle{apsrev4-2}
\bibliography{bibilography}
\end{document}